\algnewcommand\Input{\item[{\textbf{Input:}}]}
\algnewcommand\Output{\item[{\textbf{Output:}}]}
\newtheorem{theorem}{Theorem}
\newcommand{\Aa}{\mathcal{A}}
\newcommand{\xx}{\mathcal{X}}
\newcommand{\yy}{\mathcal{Y}}
\newcommand{\ff}{\mathcal{F}}
\newcommand{\mb}[1]{\mathbf{#1}}
\begin{document}

\title{Bilinear Assignment Problem: Large Neighborhoods and Experimental Analysis of Algorithms}

\author{\sc{Vladyslav Sokol}\thanks{{\tt vsokol@sfu.ca}.
School of Computing Science, Simon Fraser University, 8888 University Drive, Burnaby, British Columbia, V5A 1S6, Canada}
\and
\sc{Ante \'Custi\'c}\thanks{{\tt acustic@sfu.ca}.
Department of Mathematics, Simon Fraser University Surrey, 250-13450 102nd AV, Surrey, British Columbia, V3T 0A3, Canada}
\and
\sc{Abraham P. Punnen}\thanks{{\tt apunnen@sfu.ca}. Department of Mathematics, Simon Fraser University Surrey,  250-13450 102nd AV, Surrey, British Columbia, V3T 0A3, Canada}
\and
\sc{Binay Bhattacharya}\thanks{{\tt binay@sfu.ca}.
School of Computing Science, Simon Fraser University, 8888 University Drive, Burnaby, British Columbia, V5A 1S6, Canada}}

\maketitle

\begin{abstract}
The \emph{bilinear assignment problem (BAP)} is a generalization of the well-known \emph{quadratic assignment problem (QAP)}.
In this paper, we study the problem from the computational analysis point of view. Several classes of neigborhood structures are introduced for the problem along with some theoretical analysis. These neighborhoods are then explored within a local search and a variable neighborhood search frameworks with multistart to generate robust heuristic algorithms. Results of systematic experimental analysis have been presented which divulge the effectiveness of our algorithms. In addition, we present several very fast construction heuristics. Our experimental results disclosed some interesting properties of the BAP model, different from those of comparable models. This is the first thorough experimental analysis of algorithms on BAP. We have also introduced benchmark test instances that can be used for future experiments on exact and heuristic algorithms for the problem.

\medskip

\noindent\emph{Keywords:} bilinear assignment problem, quadratic assignment problem, average solution value, exponential neighborhoods, heuristics, local search, variable neighborhood search, VLSN search.

\end{abstract}

\section{Introduction}
\label{sec:intro}

Given a four dimensional array $Q=(q_{ijkl})$ of size $m\times m\times n\times n$, an $m\times m$ matrix $C=(c_{ij})$ and an $n\times n$ matrix $D=(d_{kl})$, the {\it bilinear assignment problem} (BAP) can be stated as:
\begin{align}
	\text{Minimize} \qquad &\sum_{i=1}^m\sum_{j=1}^m\sum_{k=1}^n\sum_{l=1}^n q_{ijkl}x_{ij}y_{kl} + \sum_{i=1}^m\sum_{j=1}^m c_{ij}x_{ij} + \sum_{k=1}^n\sum_{l=1}^n d_{kl}y_{kl} \label{of}\\
	\text{subject to}\quad \  \ & \sum_{j=1}^m x_{ij}=1 \qquad \qquad i=1,2,\ldots,m, \label{x1}\\
	&\sum_{i=1}^m x_{ij}=1 \qquad \qquad j=1,2,\ldots,m, \label{x2}\\
	&\sum_{l=1}^n y_{kl}=1 \qquad \qquad k=1,2,\ldots,n, \label{y1}\\
	&\sum_{k=1}^n y_{kl}=1 \qquad \qquad l=1,2,\ldots,n, \label{y2}\\
	&x_{ij},\ y_{kl}\in \{0,1\} \qquad  i,j=1,\ldots,m,\ \ k,l=1,\ldots,n. \label{int}
\end{align}

If we impose additional restrictions that $m=n$ and $x_{ij}=y_{ij}$ for all $i,j$, BAP becomes equivalent to the well-known \emph{quadratic assignment problem} (QAP) \cite{burkard2012assignment,cela2013quadratic}. As noted in \cite{custicbilinear}, the constraints $x_{ij}=y_{ij}$ can be enforced without explicitly stating them by modifying the entries of $Q,$ $C$ and $D$. For example, replacing $c_{ij}$ by $c_{ij}+L$, $d_{ij}$ by $d_{ij}+L$ and $q_{ijij}$ by $q_{ijij}-2L$, for some large $L$ results in an increase in the objective function value by $\sum_{i,j=1}^n L(x_{ij}-2x_{ij}y_{ij}+y_{ij})=\sum_{i,j=1}^nL(x_{ij}-y_{ij})^2$. Since $L$ is large, in an optimal solution, $x_{ij}=y_{ij}$ is forced and hence the modified BAP becomes QAP. Therefore, BAP is also strongly NP-hard.  Moreover, since the reduction described above preserves the objective values of the solutions that satisfy $x_{ij}=y_{ij}$, BAP inherits the approximability hardness of  QAP \cite{sahni1976p}. That is, for any $\alpha > 1$, BAP does not have a polynomial time $\alpha$-approximation algorithm, unless P=NP. Further, BAP is NP-hard even if $m=n$ and $Q$ is a diagonal matrix \cite{custicbilinear}. A special case of BAP, called the independent quadratic assignment problem, was studied by Burkard et al.~\cite{burkard1998quadratic} and identified polynomially solvable special cases.

Since BAP is a generalization of the QAP, all of the applications of QAP can be solved as BAP. In addition, BAP can be used to model other discrete optimization problems with practical applications. Tsui and Chang \cite{tsui1990microcomputer,tsui1992optimal} used BAP to model a door dock assignment problem. Consider a sorting facility of a large shipping company where $m$ loaded inbound trucks are arriving from different locations, and they need to be assigned to $m$ inbound doors of the facility. The shipments from the inbound trucks need to be transferred to  $n$ outbound trucks, which carries the shipments to  different customer locations. The sorting facility also has $n$ outbound doors for the outbound trucks. Let $w_{ij}$ denote the amount of items from $i$-th inbound truck that need to be transferred to $j$-th outbound truck/customer location, and let $d_{ij}$ denote the distance between the $i$-th inbound door and the $j$-th outbound door. Then the problem of assigning inbound trucks to inbound doors and outbound trucks to outbound doors, so that the total work needed to transfer all items from inbound to outbound trucks, is exactly BAP with costs $q_{ijkl}=w_{ik}d_{jl}$. Torki et al.~\cite{torki1996low} used BAP to develop heuristic algorithms for QAP with a low rank cost matrix. BAP also encompasses well-known disjoint matching problem \cite{custicbilinear,fon1997arrays,frieze1983complexity} and axial 3-dimensional assignment problem \cite{custicbilinear,pierskalla1968letter}.

Despite the applicability and unifying capabilities of the model, BAP is not studied systematically from an experimental analysis point of view. In \cite{tsui1990microcomputer,tsui1992optimal}, the authors proposed  local search and branch and bound algorithms to solve BAP, but detailed computational analysis was not provided. The model was specially structured to focus on a single application, which limited the applicability of these algorithms for the general case. Torki et al.~\cite{torki1996low} presented experimental results on algorithms for low rank BAP in connection with developing heuristics for QAP. To the best of our knowledge, no other experimental studies on the model are available.

\medskip

In this paper, we present various neighborhoods associated with a feasible solution of BAP and analyze their theoretical properties in the context of local search algorithms, particularly on the worst case behavior. Some of these neighborhoods are of exponential size but can be searched for an improving solution in polynomial time. Local search algorithms with such \textit{very large scale neighborhoods (VLSN)} proved to be an effective solution approach for many hard combinatorial optimization problems \cite{ahuja2002survey,ahuja2007very}. We also present extensive experimental results by embedding these neighborhoods within a \textit{variable neighborhood search (VNS)} framework in addition to the standard and multi-start VLSN local search. Some very fast construction heuristics are also provided along with experimental analysis. Although local search and variable neighborhood search are well known algorithmic paradigms that are thoroughly investigated in the context of various combinatorial optimization problems, to achieve effectiveness and obtain superior outcomes variable neighborhood search algorithms needs to exploit special problem structures that efficiently link the various neighborhoods under consideration. In this sense, developing variable neighborhood search algorithms is always intriguing, especially when it comes to new optimization problems having several well designed neighborhood structures with interesting properties.
Our experimental analysis shows that the average behavior of the algorithms are much better and the established negative worst case performance hardly occurs. Such a conclusion can only be made by systematic experimentation, as we have done. On a balance of computational time and solution quality, a multi-start based VLSN local search became our proposed approach. Although, by allowing significantly more time, a strategic variable neighborhood search outperformed this algorithm in terms of solution quality.

\smallskip

The rest of the paper is organized as follows.
In Section \ref{sec:notations} we specify notations and several relevant results that are used in the paper.
In Section \ref{sec:constr} we describe several construction heuristics for BAP that generate reasonable solutions, often quickly.
In Section \ref{sec:ls}, we present various neighborhood structures and analyze their theoretical properties.
We then (Section \ref{sec:expsetup}) describe in details specifics of our experimental setup as well as sets of instances that we have generated for the problem.
The benchmark instances that we have developed are available upon request from Abraham Punnen (apunnen@sfu.ca) for other researchers to further study the problem.
The development of these test instances and best-known solutions is yet another contribution of this work.
Sections \ref{sec:expconstr} and \ref{sec:expls} deal with experimental analysis of construction heuristics and local search algorithms. Our computational results disclose some interesting and unexpected outcomes, particularly when comparing standard local search with its multi-start counterpart. 
In Section \ref{sec:expvnsms} we combine better performing construction heuristics and different local search algorithms to develop several variable neighborhood search algorithms and present comparison with our best performing multistart local search algorithm.
Concluding remarks are presented in Section \ref{sec:conclusion}.

\section{Notations and basic results}
\label{sec:notations}

Let $\xx$ be the set of all 0-1 $m\times m$  matrices satisfying \eqref{x1} and \eqref{x2} and $\yy$ be the set of all 0-1 $n \times n$ matrices satisfying \eqref{y1} and \eqref{y2}. Also, let $\ff$ be the set of all feasible solutions of BAP. Note that $|\ff|=m!n!$. An instance of the BAP is completely represented by the triplet  $(Q,C,D)$. Let $M=M'=\{1,2,\ldots,m\}$ and $N=N'=\{1,2,\ldots,n\}$. An $\mb{x}\in \xx$ assigns each $i\in M$ a unique $j\in M'$. Likewise, a $\mb{y}\in\yy$ assigns each $k\in N$ a unique $l \in N'$. Without loss of generality we assume that $m\leq n$. For $\mb{x}\in\xx$ and $\mb{y}\in\yy$, $f(\mb{x},\mb{y})$ denotes the objective function value of ($\mb{x},\mb{y}$).

Given an instance $(Q,C,D)$ of a  BAP, let $\Aa(Q,C,D)$ be the average of the objective function values of all feasible solutions.

\begin{theorem}[\'Custi\'c et al.~\cite{custicbilinear}]
\label{thm:Aa}
 ${\displaystyle 	\Aa(Q,C,D)=\frac{1}{mn}\sum_{i=1}^m\sum_{j=1}^m\sum_{k=1}^n\sum_{l=1}^n q_{ijkl}+ \frac{1}{m}\sum_{i=1}^m\sum_{j=1}^mc_{ij}+ \frac{1}{n}\sum_{k=1}^n\sum_{l=1}^nd_{kl}}$.
\end{theorem}

\smallskip

Consider an equivalence relation $\sim$ on $\ff$, where $(\mb{x},\mb{y})\sim(\mb{x}',\mb{y}')$ if and only if there exist $a\in\{0,1,\ldots,m-1\}$ and $b\in\{0,1,\ldots,n-1\}$ such that $x_{ij}=x'_{i(j+a \mod m)}$ for all $i,j$, and $y_{kl}=y'_{k(l+b \mod n)}$ for all $k,l$.
Here and later in the paper we use the notation of $x_{i(j+a \mod m)}$ in a sense that, if $(j+a) \mod m=0$, we then assume it to refer to the variable $x_{im}$. Similar assumptions will be made for the other index of $x_{ij}$ and variables $y_{kl}$ to improve the clarity of presentation.

Let us consider an example of equivalence class for $\sim$. Given $a\in M$, $b\in N$ let $(\mb{x}^a,\mb{y}^b)\in\ff$ be defined as
\[
x_{ij}^a=
\begin{cases}
	1 & \text{if } j=i+a \mod m, \\
	0 & \text{otherwise}
\end{cases}\quad \text{and}\quad
y_{kl}^b=
\begin{cases}
	1 & \text{if } l=k+b \mod n, \\
	0 & \text{otherwise}.
\end{cases}
\]

\begin{theorem}[\'Custi\'c et al.~\cite{custicbilinear}]
\label{thm:minmax}
	For any instance $(Q,C,D)$ of BAP
	\[\min_{a\in M,b\in N}\{f(\mb{x}^a,\mb{y}^b)\}\leq\Aa(Q,C,D)\leq\max_{a\in M,b\in N}\{f(\mb{x}^a,\mb{y}^b)\}.\]
\end{theorem}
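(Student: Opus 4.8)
The plan is to show that the arithmetic mean of the objective taken over the $mn$ cyclic-shift solutions $\{(\mb{x}^a,\mb{y}^b): a\in M,\ b\in N\}$ is exactly $\Aa(Q,C,D)$. Granting this, the theorem is immediate, because the mean of any finite family of real numbers lies between its smallest and largest members:
\[
\min_{a\in M,\, b\in N}\{f(\mb{x}^a,\mb{y}^b)\}\ \le\ \frac{1}{mn}\sum_{a\in M}\sum_{b\in N} f(\mb{x}^a,\mb{y}^b)\ \le\ \max_{a\in M,\, b\in N}\{f(\mb{x}^a,\mb{y}^b)\}.
\]
So the whole task reduces to evaluating $S:=\sum_{a\in M}\sum_{b\in N} f(\mb{x}^a,\mb{y}^b)$ and verifying that $\tfrac{1}{mn}S=\Aa(Q,C,D)$, after which the formula of Theorem \ref{thm:Aa} identifies the middle term.

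The single combinatorial fact driving the computation is that for each fixed $i$ the map $a\mapsto (i+a \bmod m)$ is a bijection of $M$ onto $M'$, and likewise $b\mapsto(k+b\bmod n)$ is a bijection of $N$ onto $N'$ for each fixed $k$. (Equivalently, $\{(\mb{x}^a,\mb{y}^b)\}$ is precisely one equivalence class of the relation $\sim$, and every index is swept with the same multiplicity.) I would then split $S$ according to the three parts of the objective \eqref{of} and use these bijections on each part.

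Concretely, since $x^a_{ij}=1$ exactly when $j=i+a\bmod m$ and $y^b_{kl}=1$ exactly when $l=k+b\bmod n$, the quadratic part of $f(\mb{x}^a,\mb{y}^b)$ collapses to $\sum_i\sum_k q_{i,(i+a),k,(k+b)}$. Summing over $a$ and $b$, for each fixed $(i,k)$ the two bijections let $(j,l)$ run over every pair exactly once, so $\sum_{a,b}\sum_{i,k} q_{i,(i+a),k,(k+b)}=\sum_{i,j,k,l} q_{ijkl}$; dividing by $mn$ gives the first term of Theorem \ref{thm:Aa}. The $c$-term $\sum_{i,j}c_{ij}x^a_{ij}=\sum_i c_{i,(i+a)}$ does not depend on $b$, so the sum over $a$ sweeps each $c_{ij}$ once while the sum over $b$ contributes a factor $n$; dividing by $mn$ leaves $\tfrac{1}{m}\sum_{i,j}c_{ij}$. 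Symmetrically, the $d$-term is independent of $a$, so the sum over $a$ contributes a factor $m$, leaving $\tfrac{1}{n}\sum_{k,l}d_{kl}$. Adding the three contributions reproduces the formula of Theorem \ref{thm:Aa}, establishing $\tfrac{1}{mn}S=\Aa(Q,C,D)$.

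The argument is essentially bookkeeping, and the only point that demands genuine care is tracking the multiplicities of the two linear terms: each is constant in one of the two shift parameters, and it is exactly this asymmetry that produces the coefficients $\tfrac1m$ and $\tfrac1n$ rather than $\tfrac1{mn}$. Once these factors are pinned down, the identity $\tfrac{1}{mn}S=\Aa(Q,C,D)$ holds and the min/max sandwich finishes the proof.
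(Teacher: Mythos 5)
Your argument is correct: the set $\{(\mb{x}^a,\mb{y}^b)\}$ is one equivalence class of $\sim$, the bijections $a\mapsto i+a \bmod m$ and $b\mapsto k+b\bmod n$ give $\tfrac{1}{mn}\sum_{a,b}f(\mb{x}^a,\mb{y}^b)=\Aa(Q,C,D)$ via the formula of Theorem \ref{thm:Aa}, and the min/max sandwich follows since a mean lies between the extremes. The paper itself imports this theorem from \cite{custicbilinear} without reproving it, but your computation is exactly the intended one --- it is the content behind the paper's remark that \emph{any} equivalence class of $\sim$ yields such inequalities, and your bookkeeping of the $\tfrac1m$ versus $\tfrac1n$ multiplicities on the linear terms is right.
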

It can be shown that any equivalence class defined by $\sim$ can be used to obtain the type of inequalities stated above. Theorem \ref{thm:minmax} provides a way to find a feasible solution to BAP with objective function value no worse than $\Aa(Q,C,D)$ in $O(m^2n^2)$ time.
To achieve this, we search through the set of solutions defined by the equivalence class, with any feasible solution to BAP as a starting point.

\smallskip

A feasible solution $(\mb{x},\mb{y})$ to BAP is said to be no better than the average if $f(\mb{x},\mb{y}) \geq A(Q,C,D)$. In \cite{custicbilinear} we have provided the following lower bound for the number of feasible solutions that are no better than the average.

\begin{theorem}[\'Custi\'c et al.~\cite{custicbilinear}]
\label{thm:dom}
 $|\{(\mb{x},\mb{y})\in\ff\ \colon f(\mb{x},\mb{y})\geq A(Q,C,D)\}|\geq (m-1)!(n-1)!$.
\end{theorem}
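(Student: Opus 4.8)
The plan is to partition the feasible set $\ff$ into equivalence classes under $\sim$ and to argue that \emph{each} class must contribute at least one solution that is no better than the average. Since $\sim$ relates two solutions exactly when one is obtained from the other by cyclically shifting the columns of $\mb{x}$ (by some $a$) and the columns of $\mb{y}$ (by some $b$), every class is an orbit of the group $\{0,\dots,m-1\}\times\{0,\dots,n-1\}$, and so its size divides $mn$. The first thing I would check is that this action is \emph{free}: shifting the columns of a permutation matrix by a nonzero amount always yields a different permutation matrix, so a shift $(a,b)$ fixes $(\mb{x},\mb{y})$ only when $a=b=0$. Hence every class has exactly $mn$ elements, and since $|\ff|=m!\,n!$ the number of classes is exactly $(m-1)!\,(n-1)!$.

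The core of the argument is to show that the average of $f$ over any single class equals the global average $\Aa(Q,C,D)$. Writing $(\mb{x}^{(a)},\mb{y}^{(b)})$ for the member of the class obtained by shifting, with $x^{(a)}_{ij}=x_{i(j-a \bmod m)}$ and analogously for $\mb{y}$, the class average is $\frac{1}{mn}\sum_{a=0}^{m-1}\sum_{b=0}^{n-1} f(\mb{x}^{(a)},\mb{y}^{(b)})$. The key identity is that for each fixed $(i,j)$ the shifted entries sweep through the whole row, so $\sum_{a=0}^{m-1} x^{(a)}_{ij}=\sum_{j'=1}^{m} x_{ij'}=1$, and likewise $\sum_{b=0}^{n-1} y^{(b)}_{kl}=1$. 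Substituting the definition of $f$ and pushing these shift-sums inside, the linear term in $\mb{x}$ collapses to $\frac1m\sum_{i,j} c_{ij}$, the linear term in $\mb{y}$ to $\frac1n\sum_{k,l} d_{kl}$, and the quadratic term factors as a product of the two independent shift-sums and collapses to $\frac1{mn}\sum_{i,j,k,l} q_{ijkl}$. This reproduces exactly the formula of Theorem~\ref{thm:Aa}, so the class average equals $\Aa(Q,C,D)$; this is precisely the remark following Theorem~\ref{thm:minmax} that every equivalence class yields the bounds $\min\le\Aa\le\max$.

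To finish, I would observe that if the average of $f$ over a class equals $\Aa(Q,C,D)$, then the maximum over that class is at least $\Aa(Q,C,D)$, for otherwise all members would have value strictly below the average, forcing the average itself below $\Aa(Q,C,D)$, a contradiction. Thus each class contains at least one solution $(\mb{x},\mb{y})$ with $f(\mb{x},\mb{y})\ge\Aa(Q,C,D)$. Because the $(m-1)!\,(n-1)!$ classes are pairwise disjoint, choosing one such solution from each yields at least $(m-1)!\,(n-1)!$ distinct feasible solutions that are no better than the average, which is the claimed bound.

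I expect the main obstacle to be the class-average computation, and in particular making its two supporting simplifications fully rigorous: the freeness of the shift action (so that the orbit-average and the shift-average coincide and the class count is exactly $(m-1)!\,(n-1)!$), and the row-sweeping identity $\sum_{a} x^{(a)}_{ij}=1$ that lets the quadratic term factor cleanly. Once these are in place, the disjointness and averaging steps are immediate.
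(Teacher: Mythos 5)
Your proof is correct and follows essentially the same route as the paper's (cited) argument: the paper's later use of Theorem~\ref{thm:dom} explicitly invokes a set $R_\sim$ containing one solution of value at least $\Aa(Q,C,D)$ from each of the $(m-1)!(n-1)!$ equivalence classes of $\sim$, which is exactly the orbit-averaging argument you give. The supporting details you flag (freeness of the shift action and the row-sweeping identity) are verified correctly, so there is no gap.
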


An algorithm that is guaranteed to return a solution with the objective function value at most $\Aa(Q,C,D)$ guarantees a solution that is no worse than $(m-1)!(n-1)!$ solutions. Thus, the domination ratio \cite{glover1997travelling,custic2017average} of such an algorithm is $\frac{1}{mn}$.


\section{Construction heuristics}
\label{sec:constr}

In this section, we consider heuristic algorithms that will generate solutions to BAP using various construction approaches. Such algorithms are useful in situations where solutions of reasonable quality are needed  quickly. These algorithms can also be used to generate starting solutions for more complex improvement based algorithms.

\medskip

Our first algorithm, called \textit{\textbf{Random}}, is the trivial approach of generating a feasible  solution ($\mb{x},\mb{y}$). Both $\mb{x}$ and $\mb{y}$ are selected as random assignments in uniform fashion. It should be noted that the expected value of the solution produced by \textit{Random} is precisely $\Aa(Q,C,D)$.

\smallskip

Let us now discuss a different randomized technique, called \textit{\textbf{RandomXYGreedy}}. This algorithm builds a solution by randomly picking a `not yet assigned' $i \in M$ or $k \in N$, and then setting  $x_{ij}$ or $y_{kl}$ to $1$ for a `not yet assigned' $j \in M'$ or $l \in N'$ so that the total cost of the resulting \textit{partial solution} is minimized. A pseudo-code of \textit{RandomXYGreedy} is presented in Algorithm \ref{rxyg}. Here and later in the paper we will present description of the algorithms by assuming that the input BAP instance $(Q,C,D)$ has $C$ and $D$ as zero arrays. This restriction is for simplicity of presentation and does not affect neither the theoretical complexity of BAP nor the asymptotic computational complexity of the presented algorithms. It is easy to extend the algorithms to the general case in a straightforward way. The running time of \textit{RandomXYGreedy} is $O(mn^2)$ as each addition to our solution is selected using quadratic number of computations. However, just reading the data for the $Q$ matrix takes $O(m^2 n^2)$ time. For the rest of the paper we will consider running time of our algorithms without including this input overhead.

\begin{algorithm}

\caption{\textit{RandomXYGreedy}}
\label{rxyg}
\begin{algorithmic}[0]\scriptsize
\Input integers $m, n$; $m \times m \times n \times n$ array $Q$
\Output feasible solution to BAP

\State $x_{ij} \gets 0 \, \forall i, j$; $y_{kl} \gets 0 \, \forall k, l$
\While{not all $i \in M$ and $k \in N$ are assigned}
\State randomly pick some $i \in M$ or $k \in N$ that is unassigned

\If{$i$ is picked}
\State $j' \gets$ random $j \in M$ that is unassigned; $\Delta' \gets \sum_{k, l \in N} q_{ij'kl} y_{kl}$

\ForAll{$j \in M$ that is unassigned}
\State $\Delta \gets \sum_{k, l \in N} q_{ijkl} y_{kl}$ \Comment{value change if $i$ assigned to $j$}
\If{$\Delta < \Delta'$}
\State $j' \gets j$; $\Delta' \gets \Delta$
\EndIf
\EndFor

\State $x_{ij'} \gets 1$ \Comment{assign $i$ to $j'$}

\Else
\State $l' \gets$ random $l \in N$ that is unassigned; $\Delta' \gets \sum_{i, j \in M} q_{ijkl'} x_{ij}$

\ForAll{$l \in N$ that is unassigned}
\State $\Delta \gets \sum_{i, j \in M} q_{ijkl} x_{ij}$ \Comment{value change if $k$ assigned to $l$}
\If{$\Delta < \Delta'$}
\State $l' \gets l$; $\Delta' \gets \Delta$
\EndIf
\EndFor

\State $y_{kl'} \gets 1$ \Comment{assign $k$ to $l'$}

\EndIf
\EndWhile

\State \textbf{return} ($\mb{x}$, $\mb{y}$)
\end{algorithmic}
\end{algorithm}

Our next algorithm is fully deterministic and is called \textit{\textbf{Greedy}} (see Algorithm \ref{greedy}). This is similar to \textit{RandomXYGreedy}, except that, at each iteration, we select the best available $x_{ij}$ or $y_{kl}$ to be added to the current partial solution. We start the algorithm by choosing the partial solution $x_{i'j'} = 1$ and $y_{k'l'} = 1$ where $i', j', k', l'$ correspond to a smallest element in the array $Q$. The total running time of this heuristic is $O(n^3)$, considering that the position of the smallest $q_{i'j'k'l'}$ is provided.

\begin{algorithm}
\caption{\textit{Greedy}}
\label{greedy}
\begin{algorithmic}[0]\scriptsize
\Input integers $m, n$; $m \times m \times n \times n$ array $Q$
\Output feasible solution to BAP

\State $x_{ij} \gets 0 \, \forall i, j$; $y_{kl} \gets 0 \, \forall k, l$
\State $i', j', k', l' \gets arg\,min_{i, j \in M, k, l \in N} q_{ijkl}$; $x_{i'j'} \gets 1$; $y_{k'l'} \gets 1$

\While{not all $i \in M$ and $k \in N$ are assigned}
\State $\Delta'_x \gets \infty$; $\Delta'_y \gets \infty$

\ForAll{$i \in M$ that is unassigned}
\ForAll{$j \in M$ that is unassigned}
\State $\Delta \gets \sum_{k, l \in N} q_{ijkl} y_{kl}$ \Comment{value change if $i$ assigned to $j$}
\If{$\Delta < \Delta'_x$}
\State $i' \gets i$; $j' \gets j$; $\Delta'_x \gets \Delta$
\EndIf
\EndFor
\EndFor

\ForAll{$k \in N$ that is unassigned}
\ForAll{$l \in N$ that is unassigned}
\State $\Delta \gets \sum_{i, j \in M} q_{ijkl} x_{ij}$ \Comment{value change if $k$ assigned to $l$}
\If{$\Delta < \Delta'_y$}
\State $k' \gets k$; $l' \gets l$; $\Delta'_y \gets \Delta$
\EndIf
\EndFor
\EndFor

\If{$\Delta'_x \leq \Delta'_y$}
\State $x_{i'j'} \gets 1$ \Comment{assign $i'$ to $j'$}
\Else
\State $y_{k'l'} \gets 1$ \Comment{assign $k'$ to $l'$}
\EndIf
\EndWhile

\State \textbf{return} ($\mb{x}$, $\mb{y}$)
\end{algorithmic}
\end{algorithm}

\begin{theorem}
The objective function value of a solution produced by the Greedy algorithm could be arbitrarily bad and could be worse than $\Aa(Q,C,D)$.
\end{theorem}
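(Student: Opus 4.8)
The plan is to prove the statement by exhibiting an explicit one-parameter family of instances, indexed by a large number $L>0$, on which the solution returned by \emph{Greedy} has value growing linearly in $L$, while both an optimal solution and the average value $\Aa(Q,C,D)$ stay bounded or grow strictly slower. The unbounded gap to the optimum yields the ``arbitrarily bad'' part, and the comparison against $\Aa$ (computed via Theorem~\ref{thm:Aa}) yields the ``worse than the average'' part, which is the more pointed claim since it shows \emph{Greedy} can be beaten even by the trivial average‑achieving constructions of Section~\ref{sec:notations}. As the statement only asserts existence, I would work at a fixed small size, $m=n=2$ with $C=D=0$, and let $L\to\infty$.

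The construction I would use is a separable cost array that conceals a classical greedy trap. Set $q_{ijkl}=a_{ij}$ for all $k,l$, where $a=\begin{pmatrix}0&1\\1&L\end{pmatrix}$. For such a separable array the objective collapses: for any feasible $(\mb{x},\mb{y})$ whose $\mb{x}$ encodes a permutation $\pi$, one has $f(\mb{x},\mb{y})=n\sum_i a_{i\pi(i)}$, independent of $\mb{y}$. Thus the two feasible $x$-assignments have values $2(a_{11}+a_{22})=2L$ (identity) and $2(a_{12}+a_{21})=4$ (transposition), so the optimum equals $4$. By Theorem~\ref{thm:Aa}, $\Aa=\frac{1}{mn}\sum_{i,j,k,l}q_{ijkl}=\frac14\cdot 4(L+2)=L+2$, using $\sum_{i,j,k,l}q_{ijkl}=4\sum_{i,j}a_{ij}=4(L+2)$.

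Next I would trace \emph{Greedy}. Its initialization selects a global minimum entry of $Q$, necessarily one of the zeros $q_{11kl}$, forcing $x_{11}=1$ together with some $y_{k'l'}=1$. The crucial observation is that the incremental cost of tentatively assigning $i\to j$ equals $a_{ij}$ times the number of already‑fixed $y$-variables, so among the available columns \emph{Greedy} always prefers the smaller $a_{ij}$, i.e.\ it imitates the greedy rule for the linear assignment matrix $a$; meanwhile every candidate $y$-assignment carries the same incremental cost $\sum_{\text{fixed }x}a_{ij}$ and so never diverts the $x$-choices. Consequently \emph{Greedy} keeps $x_{11}=1$, is later forced into the only remaining column $x_{22}=1$, and returns the identity permutation of value $2L$. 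Hence $2L>L+2=\Aa$ for $L>2$, while $2L/4\to\infty$ and $2L-\Aa=L-2\to\infty$, which proves both assertions.

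The step I expect to be the main obstacle is pinning down \emph{Greedy}'s \emph{exact} trajectory rather than its morally expected behaviour: the algorithm interleaves $x$- and $y$-moves and breaks ties in the initial $\arg\min$ arbitrarily. The separable structure is precisely what neutralizes this difficulty, because $f$ is independent of $\mb{y}$: any tie-break in the first pick and any interleaving of $y$-moves leaves the induced $x$-permutation unchanged, so the verification reduces to the short deterministic trace above. If one additionally wanted the bad behaviour for arbitrary $m=n$ rather than only $m=n=2$, the remaining care would be to scale the off‑trap entries of $a$ so that the row containing the large entry is forced to be filled last under \emph{Greedy}; this is routine once the separable reduction is in hand.
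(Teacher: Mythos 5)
Your proposal is correct, but it reaches the conclusion by a genuinely different construction than the paper. The paper builds a $2\times 2\times 3\times 3$ instance whose nonzero entries ($-\epsilon$, $\epsilon$, $2\epsilon$, $L$) are placed so that \emph{Greedy}'s interleaved myopic choices first lock in $x_{11}=y_{11}=1$, then pick up $y_{22}=y_{33}=1$ because they look cheap against the current partial $\mb{x}$, and only at the forced final step $x_{22}=1$ incur the two $L$-entries, giving cost $3\epsilon+2L$ against an average of $\frac{7\epsilon+2L}{6}$; the trap there is genuinely bilinear, in that the damage is invisible until both sides of the solution interact. You instead use a separable array $q_{ijkl}=a_{ij}$ with $a=\left(\begin{smallmatrix}0&1\\1&L\end{smallmatrix}\right)$, which collapses the objective to $n\sum_i a_{i\pi(i)}$ independently of $\mb{y}$ and reduces the whole analysis to the classical greedy trap for a $2\times 2$ linear assignment matrix: the $\arg\min$ initialization necessarily fixes $x_{11}=1$ (all global minima of $Q$ sit at $i=j=1$), after which $x_{22}=1$ is forced and the value is $2L>L+2=\Aa$ while the optimum is $4$. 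What your route buys is robustness and simplicity: because $f$ does not depend on $\mb{y}$, every tie-break in the initial $\arg\min$ and every interleaving of $y$-moves is harmless, so the trace is genuinely deterministic, and the arithmetic for $\Aa$ and the optimum is immediate. What it gives up is that the counterexample does not exercise the bilinear interaction at all; the paper's example shows that \emph{Greedy} fails even on instances where the $\mb{y}$-choices matter. Both arguments are valid proofs of the stated theorem.
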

\begin{proof}
Consider the following BAP instance: $C$ and $D$ are zero matrices and elements of $2 \times 2 \times 3 \times 3$ matrix $Q$ are all zero except $q_{1111}=-\epsilon, q_{1122}=q_{1133}=\epsilon, q_{2211}=q_{1123}=q_{1132}=2\epsilon, q_{2222}=q_{2233}=L$, where $\epsilon$ and $L$ are arbitrarily small and large positive numbers, respectively. At first the algorithm will assign $x_{11} = y_{11} = 1$, as $q_{1111}$ is the smallest element in the array. Next, all indices $i, j \in M$ such that $i, j > 2$ and $k, l \in M$ such that $k, l > 3$ will be assigned within their respective groups. This is due to the fact that any assignment in those sets adds no additional cost to the current partial solution. Following that, $y_{22} = y_{33} = 1$ will be added. And finally, $x_{22}$ will be set to $1$ to complete a solution with the cost $3\epsilon + 2L$. However, an optimal solution in this case will contain $x_{11} = x_{22} = y_{11} = y_{23} = y_{32} = 1$ with an objective value of $5\epsilon$. Note that $\Aa(Q,C,D) = \frac{7\epsilon + 2L}{mn}$ and the result follows.
\end{proof}

We also consider a randomized version of \textit{Greedy}, called \textit{\textbf{GreedyRandomized}}. In this variation a partial assignment is extended by a randomly picked $x_{ij}$ or $y_{kl}$ out of $h$ best candidates (by solution value change), where $h$ is some fixed number. Such approaches are generally called semi-greedy algorithms and form an integral part of many GRASP algorithms \cite{hart1987semi,feo1989probabilistic}. To emphasize the randomized decisions in the algorithm and its linkages to GRASP, we call it \textit{GreedyRandomized}.

\medskip

Finally we discuss a construction heuristic based on rounding a fractional solution. In \cite{custicbilinear}, a discretization procedure was introduced that computes a feasible solution to BAP with objective function value no more than that of the fractional solution. Given a fractional solution to BAP ($\mb{x},\mb{y}$) (i.e. a solution to BAP (\ref{of})-(\ref{y2}) without integrality constrains (\ref{int})), we fix one side of the solution (say $\mb{x}$) and optimize $\mb{y}$ by solving a linear assignment problem to obtain a solution $\mb{\bar{y}}$ . Then, fix $\mb{\bar{y}}$ and solve a linear assignment problem to find a solution $\mb{\bar{x}}$. Output the solution ($\mb{\bar{x}},\mb{\bar{y}}$) as a result. We denote this approach as \textit{\textbf{Rounding}}.

\begin{theorem}
A feasible solution $(\mb{x}^*,\mb{y}^*)$ to BAP with the cost $f(\mb{x}^*,\mb{y}^*)\leq \Aa(Q,C,D)$, can be obtained in $O(m^2n^2+n^3)$ time using the \textit{Rounding} algorithm.
\end{theorem}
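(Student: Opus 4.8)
The plan is to feed the \textit{Rounding} algorithm with the uniform fractional solution $(\mb{x},\mb{y})$ given by $x_{ij}=1/m$ for all $i,j$ and $y_{kl}=1/n$ for all $k,l$, and then argue that neither of the two linear-assignment re-optimizations can increase the objective value. First I would observe that this $(\mb{x},\mb{y})$ is feasible for the continuous relaxation \eqref{of}--\eqref{y2}: every row and column sum of $\mb{x}$ and of $\mb{y}$ equals $1$. Substituting these values into the objective \eqref{of} and comparing with Theorem~\ref{thm:Aa}, the fractional objective equals exactly $\Aa(Q,C,D)$; this is the key choice that ties the resulting bound to the average value.

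Next I would analyze the two discretization steps. Fixing $\mb{x}$, the objective \eqref{of} becomes linear in $\mb{y}$, namely $\sum_{k,l}\big(\sum_{i,j}q_{ijkl}x_{ij}+d_{kl}\big)y_{kl}$ plus the constant $\sum_{i,j}c_{ij}x_{ij}$, so minimizing over $\mb{y}\in\yy$ is a linear assignment problem. Since the assignment polytope is integral (Birkhoff--von Neumann), its LP optimum is attained at a $0$--$1$ vertex $\mb{\bar{y}}$, and because the fractional $\mb{y}$ we started with is a feasible point of that same LP we obtain $f(\mb{x},\mb{\bar{y}})\leq f(\mb{x},\mb{y})$. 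Fixing $\mb{\bar{y}}$ and repeating the argument in the other variable yields an integral $\mb{\bar{x}}$ with $f(\mb{\bar{x}},\mb{\bar{y}})\leq f(\mb{x},\mb{\bar{y}})$. Chaining the two inequalities gives a feasible $(\mb{x}^*,\mb{y}^*)=(\mb{\bar{x}},\mb{\bar{y}})$ with $f(\mb{x}^*,\mb{y}^*)\leq f(\mb{x},\mb{y})=\Aa(Q,C,D)$, as required.

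For the running time I would account for each stage separately. Forming the cost coefficients $\sum_{i,j}q_{ijkl}x_{ij}+d_{kl}$ of the first assignment problem costs $O(m^2)$ per pair $(k,l)$, hence $O(m^2n^2)$ in total; solving this $n\times n$ linear assignment problem takes $O(n^3)$ by the Hungarian method. In the second stage $\mb{\bar{y}}$ is a permutation with only $n$ nonzero entries, so each coefficient $\sum_{k,l}q_{ijkl}\bar{y}_{kl}+c_{ij}$ is computed in $O(n)$, giving $O(m^2n)$, and the resulting $m\times m$ assignment problem is solved in $O(m^3)$. Using the standing assumption $m\leq n$, the terms $O(m^2n)$ and $O(m^3)$ are absorbed, leaving the claimed $O(m^2n^2+n^3)$.

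The only genuinely delicate point is the non-increase guarantee of the discretization, which hinges on the integrality of the assignment polytope: this is exactly what lets the optimal $0$--$1$ assignment dominate the fractional starting point in each step. Everything else reduces to a direct substitution into Theorem~\ref{thm:Aa} and a routine complexity count.
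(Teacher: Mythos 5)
Your proposal is correct and follows essentially the same route as the paper: start \textit{Rounding} from the uniform fractional solution $x_{ij}=1/m$, $y_{kl}=1/n$, whose objective value equals $\Aa(Q,C,D)$ by Theorem~\ref{thm:Aa}, and use the non-increase property of the two alternating assignment steps. The only difference is that you prove the non-increase directly via integrality of the assignment polytope and spell out the $O(m^2n^2+n^3)$ count, whereas the paper cites the property from \cite{custicbilinear} and omits the runtime details; your added detail is sound.
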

\begin{proof}
Consider the fractional solution $(\mb{x},\mb{y})$ where $x_{ij}=1/m$ for all $i,j\in M$, and $y_{ij}=1/n$ for all $i,j\in N$. Then $(\mb{x},\mb{y})$ is a feasible solution to the relaxation of BAP obtained by removing the integrality restrictions \eqref{int}. It is easy to see that $f(\mb{x},\mb{y})=\Aa(Q,C,D)$. One of the properties of \textit{Rounding} discussed in \cite{custicbilinear} is that the resulting solution is no worse than the input fractional solution, in terms of objective value. Apply Rounding to $(\mb{x},\mb{y})$ to obtain the desired solution.
\end{proof}

\textit{Rounding} provides us with an alternative way to Theorem \ref{thm:minmax} for generating a BAP solution with objective value no worse than the average. Recall, that by Theorem \ref{thm:dom} this solution is guaranteed to be no worse than $(m-1)!(n-1)!$ feasible solutions.

It should be noted that this discretization procedure could also be applied to BAP fractional solutions obtained from other sources, such as the solution to the relaxed version of an integer linear programming reformulation of BAP. Some of the linearization reformulations \cite{kaufman1978algorithm,frieze1989algorithms,lawler1963quadratic,adams2007level} of the QAP can be modified to obtain the corresponding linearizations of BAP. Selecting only $\mb{x}$ and $\mb{y}$ part from continuous solutions and ignoring other variables in the linearization formulations can be used to initiate the rounding algorithm discussed above. However, in this case, the resulting solution is not guaranteed to be no worse than the average.

\section{Neighborhood structures and properties}
\label{sec:ls}

Let us now discuss various neighborhoods associated with a feasible solution of BAP and analyze their properties. We also consider worst case properties of a local optimum for these neighborhoods. All these neighborhoods are based on reassigning parts of $\mb{x}\in \xx$, parts of $\mb{y}\in\yy$, or both. The neighborhoods that we consider can be classified into three categories: \textit{$h$-exchange neighborhoods}, \textit{$[h,p]$-exchange neighborhoods}, and \textit{shift based neighborhoods}.

\subsection{The $h$-exchange neighborhood}
\label{sec:hex}

In this class of neighborhoods, we apply an $h$-exchange operation to $ \mb{x}$ while keeping $\mb{y}$ unchanged or viceversa. Let us discuss this in detail with $h = 2$. The $2$-exchange neighborhood is well studied in the QAP literature. Our version of $2$-exchange for BAP is related to the QAP variation, but also have some significant differences due to the specific structure of our problem.

\medskip

Let $(\mb{x}, \mb{y})$ be a feasible solution to BAP. Consider two elements $i_1, i_2 \in M$, $j_1, j_2 \in M'$, such that $x_{i_1j_1} = x_{i_2j_2} = 1$. Then the \textit{$2$-exchange} operation on the $\mb{x}$-variables produces $(\mb{x}', \mb{y})$, where $\mb{x}'$ is obtained from $\mb{x}$ by swapping assignments of $i_1, i_2$ and $j_1, j_2$ (i.e. setting $x_{i_1j_2} = x_{i_2j_1} = 1$ and $x_{i_1j_1} = x_{i_2j_2} = 0$). Let $\Delta^x_{i_1i_2}$ be the change in the objective value from $(\mb{x}, \mb{y})$ to $(\mb{x}', \mb{y})$. I.e.,
\begin{equation}
\begin{split}
\Delta^x_{i_1i_2} = & f(\mb{x}', \mb{y}) - f(\mb{x}, \mb{y})\\
= & \sum_{i=1}^m\sum_{j=1}^m\sum_{k=1}^n\sum_{l=1}^n q_{ijkl}x'_{ij}y_{kl} + \sum_{i=1}^m\sum_{j=1}^m c_{ij}x'_{ij} + \sum_{k=1}^n\sum_{l=1}^n d_{kl}y_{kl}\\
& - \sum_{i=1}^m\sum_{j=1}^m\sum_{k=1}^n\sum_{l=1}^n q_{ijkl}x_{ij}y_{kl} - \sum_{i=1}^m\sum_{j=1}^m c_{ij}x_{ij} - \sum_{k=1}^n\sum_{l=1}^n d_{kl}y_{kl}\\
= & \sum_{k=1}^n\sum_{l=1}^n (q_{i_1j_2kl} + q_{i_2j_1kl} - q_{i_1j_1kl} - q_{i_2j_2kl}) y_{kl} + c_{i_1j_2} + c_{i_2j_1} - c_{i_1j_1} - c_{i_2j_2}.
\end{split}
\end{equation}

Let $2exchangeX(\mb{x}, \mb{y})$ be the set of all feasible solutions $(\mb{x}', \mb{y})$, obtained from $(\mb{x}, \mb{y})$ by applying the $2$-exchange operation for all $i_1, i_2 \in M$ (with corresponding $j_1, j_2 \in M'$). Efficient computation of $\Delta^x_{i_1i_2}$ is crucial in developing fast algorithms that use this neighborhood.   For a fixed $\mb{y}$, consider the $m \times m$ matrix $E$ such that $e_{ij} = \sum_{k=1}^n\sum_{l=1}^n q_{ijkl} y_{kl} + c_{ij}$. Then we can write $\Delta^x_{i_1i_2} = e_{i_1j_2} + e_{i_2j_1} - e_{i_1j_1} - e_{i_2j_2}$. If the matrix $E$ is available, this calculation can be done in constant time, and hence the neighborhood $2exchangeX(\mb{x}, \mb{y})$ can be explored in $O(m^2)$ time for an improving solution.  Note that the values of $E$ depend only on $\mb{y}$ and not on $\mb{x}$. Thus, we do not need to update $E$ within a local search algorithm as long as $\mb{y}$ remains unchanged.

Likewise, we can define a 2-exchange operation on $\mb{y}$ by keeping $\mb{x}$ constant. Consider two elements $k_1, k_2 \in N$ and let $l_1, l_2$ be the corresponding assignments in $N'$, such that $x_{k_1l_1} = x_{k_2l_2} = 1$. Then the $2$-exchange operation will  produce $(\mb{x}, \mb{y}')$, where $\mb{y}'$ is obtained from $\mb{y}$ by swapping assignments of $k_1, k_2$ and $l_1, l_2$ (i.e. setting $x_{k_1l_2} = x_{k_2l_1} = 1$ and $x_{k_1l_1} = x_{k_2l_2} = 0$). Let $\Delta^y_{k_1k_2}$ be the change in the objective value from $(\mb{x}, \mb{y})$ to $(\mb{x}, \mb{y}')$. I.e.,

\begin{equation}
\begin{split}
\Delta^y_{k_1k_2} = & f(\mb{x}, \mb{y}') - f(\mb{x}, \mb{y})\\
= & \sum_{i=1}^m\sum_{j=1}^m (q_{ijk_1l_2} + q_{ijk_2l_1} - q_{ijk_1l_1} - q_{ijk_2l_2}) x_{ij} + d_{k_1l_2} + d_{k_2l_1} - d_{k_1l_1} - d_{k_2l_2}.
\end{split}
\end{equation}

Let $2exchangeY(\mb{x}, \mb{y})$ be the set of all feasible solutions $(\mb{x}, \mb{y}')$, obtained from $(\mb{x}, \mb{y})$ by applying the 2-exchange operation on $\mb{y}$ while keeping $\mb{x}$ unchanged. As in the previous case, efficient computation of $\Delta^y_{k_1k_2}$ is crucial in developing fast algorithms that use this neighborhood.   For a fixed $\mb{x}$ consider an $n \times n$ matrix $G$ such that $g_{kl} = \sum_{i=1}^m\sum_{j=1}^m q_{ijkl} x_{ij} + d_{kl}$. Then we can write $\Delta^y_{k_1k_2} = g_{k_1l_2} + g_{k_2l_1} - g_{k_1l_1} - g_{k_2l_2}$. If the matrix $G$ is available, this calculation can be done in constant time and hence the neighborhood $2exchangeY(\mb{x}, \mb{y})$ can be explored in $O(n^2)$ time for an improving solution.  Note that the values of $G$ depends only on $\mb{x}$ and not on $\mb{y}$. Thus, we do not need to update $G$ within a local search algorithm as long as $\mb{y}$ remains unchanged.

The \textit{2-exchange neighborhood} of ($\mb{x}, \mb{y}$), denoted by $2exchange(\mb{x}, \mb{y})$, is given by $$2exchange(\mb{x}, \mb{y}) = 2exchangeX(\mb{x}, \mb{y})\cup 2exchangeY(\mb{x}, \mb{y}).$$

In a local search algorithm based on the $2exchange(\mb{x}, \mb{y})$ neighborhood, after each move, either $\mb{x}$ or $\mb{y}$ will be changed, but not both. To maintain our data structure, if $\mb{y}$ is changed, we update $E$ in $O(m^2)$ time. More specifically, suppose a $2$-exchange operation takes $(\mb{x}, \mb{y})$ to $(\mb{x}, \mb{y}')$, then $E$ is updated as: $e_{ij} \gets e_{ij} + q_{ijk_1l_2} + q_{ijk_2l_1} - q_{ijk_1l_1} - q_{ijk_2l_2}$, where $k_1, k_2 \in N, l_1, l_2 \in N'$ are the corresponding positions where the swap have occurred. Analogous changes will be performed on $G$ in $O(n^2)$ time if $(\mb{x}, \mb{y})$ is changed to $(\mb{x}', \mb{y})$.

\smallskip

The general \textit{$h$-exchange neighborhood} for BAP is obtained by replacing $2$ in the above definition by $2, 3, \ldots, h$. Notice that the $h$-exchange neighborhood can be searched for an improving solution in $O(n^h)$ time, and already for $h=3$, the running time of the algorithm that completely explores this neighborhood is $O(n^3)$. With the same asymptotic running time we could instead optimally reassign whole $\mb{x}$ (or $\mb{y}$) by solving the linear assignment problem with $E$ (or $G$ respectively) as the cost matrix. This fact suggests that any $h$ larger that $3$ potentially leads to a weaker algorithm in terms of running time.
Such full reassignment can be viewed as a local search based on the special case of the $h$-exchange neighborhood with $h = n$. This special local search will be referred to as \textit{\textbf{Alternating Algorithm}} and will be alternating between re-optimizing $\mb{x}$ and $\mb{y}$. For clarity, the pseudo code for this approach is presented in Algorithm \ref{AA}. \textit{Alternating Algorithm} is a strategy well-known in non-linear programming literature as \textit{coordinate-wise descent}. Similar underlying ideas are used in the context of other bilinear programming problems by various authors \cite{konno1980maximizing,karapetyan2012heuristic,punnen2015average}.

\begin{algorithm}
\caption{\textit{Alternating Algorithm}}
\label{AA}
\begin{algorithmic}[0]\scriptsize
\Input integers $m, n$; $m \times m \times n \times n$ array $Q$; feasible solution ($\mb{x}$, $\mb{y}$) to  BAP
\Output feasible solution to BAP

\While{True}

\State $e_{ij} \gets \sum_{k, l \in N} q_{ijkl} y_{kl} \, \forall i, j \in M$
\State $\mb{x}^* \gets arg\,min_{\mb{x}' \in \xx} \sum_{i, j \in M} e_{ij} x'_{ij}$ \Comment{solving assignment problem for $\mb{x}$}

\State $g_{kl} \gets \sum_{i, j \in M} q_{ijkl} x^*_{ij} \, \forall k, l \in N$
\State $\mb{y}^* \gets arg\,min_{\mb{y}' \in \yy} \sum_{k, l \in N} g_{kl} y'_{kl}$ \Comment{solving assignment problem for $\mb{y}$}

\If{$f(\mb{x}^*, \mb{y}^*) = f(\mb{x}, \mb{y})$ }
\State \textbf{break}
\EndIf

\State $\mb{x} \gets \mb{x}^*; \, \mb{y} \gets \mb{y}^*$

\EndWhile
\State \textbf{return} ($\mb{x}$, $\mb{y}$)
\end{algorithmic}
\end{algorithm}

\begin{theorem}
\label{thm:hex}
The objective function value of a locally optimal solution for BAP based on the h-exchange neighborhood could be arbitrarily bad and could be worse than $\Aa(Q,C,D)$, for any $h$.
\end{theorem}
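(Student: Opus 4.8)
The plan is to exploit the one structural feature shared by every $h$-exchange neighborhood: a single move rewrites only $\mb{x}$ or only $\mb{y}$, never both at once. Consequently $(\mb{x},\mb{y})$ is a local optimum precisely when neither coordinate can be improved in isolation, and the largest neighborhood of this kind---full reassignment of one side, i.e.\ the \emph{Alternating Algorithm}, which is the $h$-exchange with $h=n$---yields the most demanding local-optimality condition. Since a local optimum for this largest neighborhood is automatically a local optimum for every smaller $h$, it suffices to produce a single instance on which coordinate-wise descent gets trapped at a solution whose value is far above the optimum and above $\Aa(Q,C,D)$. I would build such a ``trap'', a solution from which the optimum is reachable only by flipping $\mb{x}$ and $\mb{y}$ simultaneously.

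First I would take the smallest coupled instance, $m=n=2$ with $C=D=0$, so that $\xx=\{\mb{x}^0,\mb{x}^1\}$ and $\yy=\{\mb{y}^0,\mb{y}^1\}$ each contain only the identity and the transposition, giving four feasible solutions. The key bookkeeping observation is that the four values $f(\mb{x}^a,\mb{y}^b)$ depend on four pairwise disjoint quadruples of entries of $Q$ (each $q_{ijkl}$ lies in the support of exactly one $(\mb{x}^a,\mb{y}^b)$), so the four values can be prescribed independently. I would set $f(\mb{x}^0,\mb{y}^0)=L$, $f(\mb{x}^0,\mb{y}^1)=f(\mb{x}^1,\mb{y}^0)=L+\delta$, and $f(\mb{x}^1,\mb{y}^1)=\epsilon$, realizing each value by loading it onto one entry of its quadruple and zeroing the rest.

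Next I would verify the two claims. For local optimality, fixing $\mb{y}^0$ the best $\mb{x}$ is $\mb{x}^0$ (value $L<L+\delta$) and fixing $\mb{x}^0$ the best $\mb{y}$ is $\mb{y}^0$; hence $(\mb{x}^0,\mb{y}^0)$ is a strict local optimum for the Alternating Algorithm and therefore for the $h$-exchange neighborhood for every $h\ge2$---and since on a two-element side the $h$-exchange coincides with the single transposition for all $h\ge2$, this one instance settles every $h$. For the comparison with the average, Theorem~\ref{thm:Aa} gives $\Aa(Q,C,D)=\tfrac14\bigl(L+(L+\delta)+(L+\delta)+\epsilon\bigr)=\tfrac{3L+2\delta+\epsilon}{4}$, so $f(\mb{x}^0,\mb{y}^0)=L>\Aa(Q,C,D)$ as soon as $L>2\delta+\epsilon$, while the optimum is $\epsilon$ and hence the ratio $L/\epsilon$ is unbounded.

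The only delicate point---the main obstacle---is reconciling strictness with the average bound: making $(\mb{x}^0,\mb{y}^0)$ a strict local optimum forces its two one-sided neighbors to exceed $L$, yet those same two values inflate $\Aa(Q,C,D)$, so I must keep their excess $\delta$ (and the optimum $\epsilon$) within the budget $2\delta+\epsilon<L$. Strictness matters because equal-valued neighbors would let coordinate descent drift to a neighbor and then escape toward the optimum; taking $\delta=\epsilon$ and $L$ arbitrarily large closes the budget and rules this out. If a construction with $m,n$ growing with $h$ is preferred, I would embed this coupled $2\times2$ block into larger identity-like blocks that contribute no cross terms, so that coordinate descent remains pinned on the block.
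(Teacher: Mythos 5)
Your proposal is correct and takes essentially the same route as the paper: a tiny $m=n=2$ instance in which the current solution is strictly better than both of its one-sided neighbors, while the far better optimum is reachable only by changing $\mb{x}$ and $\mb{y}$ simultaneously---which no $h$-exchange move permits---so the example settles every $h$ at once. The only difference is cosmetic: the paper realizes the trap with negative entries $c_{11}=c_{22}=d_{11}=d_{22}=-\epsilon$ and a single large negative $q_{1212}=-L$, whereas you prescribe the four solution values directly via the disjoint supports of the four feasible solutions in $Q$ (keeping the instance non-negative); both then compare against $\Aa(Q,C,D)$ by the same closed-form computation.
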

\begin{proof}
For a small $\epsilon > 0$ and a large $L$, we consider BAP instance  $(Q,C,D)$ such that all of its cost elements are equal to $0$, except $c_{11}=c_{22}=d_{11}=d_{22}=-\epsilon$, and $q_{1212}=-L$. Let a feasible solution $(\mb{x},\mb{y})$ be such that $x_{11}=x_{22}=y_{11}=y_{22}=1$. Then $(\mb{x},\mb{y})$ is a local optimum for the $h$-exchange neighborhood. Note that this local optimum can only be improved by simultaneously making changes to both $\mb{x}$ and $\mb{y}$, which is not possible for this neighborhood. The objective function value of $(\mb{x},\mb{y})$ is $-4\epsilon$, while the optimal solution objective value is $-L$.
\end{proof}

Despite the negative result of Theorem \ref{thm:hex}, we will see in Section \ref{sec:explsms} that on average, $2$-exchange and $n$-exchange (with \textit{Alternating Algorithm}) are two of the most efficient neighborhoods to explore from a practical point of view.
Moreover, when restricted to non-negative input array, we can establish some performance guarantees for $2$-exchange (and consequently for any $h$-exchange) local search. In particular, we derive upper bounds on the local optimum solution value and the number of iterations to reach a solution not worse than this value bound. The proof technique follows \cite{angel1998quality}, where authors obtained similar bounds for Koopmans-Beckman QAP.
In fact, these results can be obtained for the general QAP as well, by modifying the following proof accordingly.

\begin{theorem}
\label{thm:2exA}
For any BAP instance $(Q,C,D)$ with non-negative $Q$ and zero matrices $C,D$, the cost of the local optimum for the $2$-exchange neighborhood is $f^* \leq \frac{2mn}{m + n} \Aa(Q,C,D)$.
\end{theorem}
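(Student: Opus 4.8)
The plan is to exploit the local-optimality conditions of the $2$-exchange neighborhood, which merely assert that every single swap is non-improving, and to aggregate them over all possible swaps into one global inequality relating $f^\ast = f(\mb{x},\mb{y})$ to the total weight $T := \sum_{i,j,k,l} q_{ijkl} = mn\,\Aa(Q,C,D)$ (using $C=D=0$ together with Theorem \ref{thm:Aa}). Throughout I write $\sigma$ and $\tau$ for the permutations encoded by $\mb{x}$ and $\mb{y}$, and I use the matrices $E=(e_{ij})$ and $G=(g_{kl})$ introduced above, which here simplify to $e_{ij}=\sum_{k,l}q_{ijkl}y_{kl}$ and $g_{kl}=\sum_{i,j}q_{ijkl}x_{ij}$, so that $f^\ast = \sum_i e_{i\sigma(i)} = \sum_k g_{k\tau(k)}$.

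First I would write down local optimality on the $\mb{x}$-side: for every pair $i_1\ne i_2$ with $j_1=\sigma(i_1)$ and $j_2=\sigma(i_2)$ we have $\Delta^x_{i_1i_2}=e_{i_1j_2}+e_{i_2j_1}-e_{i_1j_1}-e_{i_2j_2}\ge 0$, i.e.\ $e_{i_1j_1}+e_{i_2j_2}\le e_{i_1j_2}+e_{i_2j_1}$. The heart of the argument — and the step I expect to require the most care — is to sum this inequality over all $\binom{m}{2}$ unordered pairs and evaluate both sides by careful bookkeeping. On the left, each diagonal term $e_{i\sigma(i)}$ is counted once for each of its $m-1$ partners, giving $(m-1)f^\ast$. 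On the right, the two summands attached to a pair correspond to the two ordered pairs $(i_1,i_2)$ and $(i_2,i_1)$, so the total is $\sum_{a\ne b} e_{a\sigma(b)}$; since $\sigma$ is a bijection, $\sum_{a,b} e_{a\sigma(b)} = \sum_{i,j} e_{ij}$, and subtracting the diagonal leaves $\sum_{i,j} e_{ij} - f^\ast$. Combining yields $(m-1)f^\ast \le \sum_{i,j} e_{ij} - f^\ast$, i.e.\ $m f^\ast \le \sum_{i,j} e_{ij}$. The identical computation on the $\mb{y}$-side, summing $\Delta^y_{k_1k_2}\ge 0$ over all pairs of the $n$ positions, produces $n f^\ast \le \sum_{k,l} g_{kl}$.

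Next I would convert the right-hand sides into $T$ using non-negativity of $Q$ — this is the only place the hypothesis $Q\ge 0$ is used. Since $\sum_{i,j} e_{ij} = \sum_{k,l} y_{kl}\sum_{i,j} q_{ijkl}$ retains, for each $k$, only the single column $l=\tau(k)$, discarding the remaining non-negative terms gives $\sum_{i,j} e_{ij} \le \sum_{i,j,k,l} q_{ijkl} = T$; symmetrically $\sum_{k,l} g_{kl} \le T$. Hence $m f^\ast \le T$ and $n f^\ast \le T$. Finally, adding these two inequalities gives $(m+n) f^\ast \le 2T = 2mn\,\Aa(Q,C,D)$, and dividing by $m+n$ yields $f^\ast \le \frac{2mn}{m+n}\Aa(Q,C,D)$, as claimed.

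The main obstacle is purely the combinatorial double-counting in the summation step, where the bijectivity of $\sigma$ (resp.\ $\tau$) must be used to collapse the off-diagonal sum into the full matrix sum; once the local-optimality inequalities are aggregated correctly, the non-negativity relaxation and the concluding addition are routine. I note in passing that the same aggregation applies verbatim to the general QAP, since it uses only the combinatorial structure of the swap moves and not any special form of $Q$.
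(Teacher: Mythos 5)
Your proof is correct, and although it rests on the same underlying observation as the paper's proof --- summing the non-improvement conditions $\Delta^x_{i_1i_2}\ge 0$ and $\Delta^y_{k_1k_2}\ge 0$ over all possible swaps --- the way you finish is genuinely different and noticeably cleaner. The paper follows the Angel--Zissimopoulos template: it forms the \emph{average} move value $\Delta(\pi,\phi)$ over all $m^2+n^2$ swaps, introduces the instance constants $\lambda$ and $\mu$, proves $\mu\le\lambda\frac{2mn}{m+n}\Aa(Q,C,D)$, and concludes by contradiction (a local optimum with $f^*>\frac{2mn}{m+n}\Aa(Q,C,D)$ would force $\Delta<0$ and hence the existence of an improving swap). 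You instead extract the two clean aggregate inequalities $mf^*\le\sum_{i,j}e_{ij}\le T$ and $nf^*\le\sum_{k,l}g_{kl}\le T$ with $T=mn\,\Aa(Q,C,D)$, and simply add them; your double-counting of the off-diagonal sum via bijectivity of $\sigma$ (resp.\ $\tau$) is exactly right. This avoids all the $\lambda,\mu$ bookkeeping, and in fact your intermediate inequalities are individually stronger than the stated bound: $nf^*\le T$ alone already gives $f^*\le m\,\Aa(Q,C,D)\le\frac{2mn}{m+n}\Aa(Q,C,D)$ under the paper's standing assumption $m\le n$, i.e., a $\min(m,n)$-factor bound. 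What the paper's heavier machinery buys is the quantitative statement behind Theorem \ref{thm:2ext}: the inequality $\Delta(\pi,\phi)\le-\lambda f(\pi,\phi)+\mu$ controls the \emph{rate} of improvement at every solution above the threshold, which is what yields the logarithmic iteration bound; your static aggregation, applied only at a local optimum, does not directly give that. Your closing remark that the argument extends verbatim to the general QAP matches the paper's own comment.
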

\begin{proof}
In this proof, for simplicity, we represent BAP as a permutation problem. As such, the permutation formulation of BAP is
\begin{equation}
\min_{\pi \in \Pi, \phi \in \Phi} \sum_{i=1}^m \sum_{k=1}^n q_{i \, \pi(i) \, k \, \phi(k)},
\end{equation}
where $\Pi$ and $\Phi$ are sets of all permutations on $\{1, 2, \ldots, m\}$ and $\{1, 2, \ldots, n\}$, respectively. Cost of a particular permutation pair $\pi, \phi$ is $f(\pi, \phi) = \sum_{i=1}^m \sum_{k=1}^n q_{i \, \pi(i) \, k \, \phi(k)}$.

Let $\pi_{ij}$ be the permutation obtained by applying a single $2$-exchange operation to $\pi$ on indices $i$ and $j$. Define $\delta^{\pi}_{ij}$ as an objective value difference after applying such $2$-exchange:
\begin{equation}
\delta^{\pi}_{ij}(\pi, \phi) = f(\pi_{ij}, \phi) - f(\pi, \phi) = \sum_{k = 1}^m \left(q_{i \, \pi(j) \, k \, \phi(k)} + q_{j \, \pi(i) \, k \, \phi(k)} - q_{i \, \pi(i) \, k \, \phi(k)} - q_{j \, \pi(j) \, k \, \phi(k)}\right). \nonumber
\end{equation}
Similarly we can have $\phi_{kl}$ and $\delta^{\phi}_{kl}$:
\begin{equation}
\delta^{\phi}_{kl}(\pi, \phi) = f(\pi, \phi_{kl}) - f(\pi, \phi) = \sum_{i = 1}^n \left(q_{i \, \pi(i) \, k \, \phi(l)} + q_{i \, \pi(i) \, l \, \phi(k)} - q_{i \, \pi(i) \, k \, \phi(k)} - q_{i \, \pi(i) \, l \, \phi(l)}\right). \nonumber
\end{equation}
Summing up over all possible $\delta^{\pi}_{ij}$ and $\delta^{\phi}_{kl}$ we get
\begin{align}
\label{deltapi}
\sum_{i,j=1}^m \delta^{\pi}_{ij}(\pi, \phi) \, &= \sum_{i,j=1}^m \sum_{k=1}^n q_{i \, \pi(j) \, k \, \phi(k)} + \sum_{i,j=1}^m \sum_{k=1}^n q_{j \, \pi(i) \, k \, \phi(k)} - \sum_{i,j=1}^m \sum_{k=1}^n q_{i \, \pi(i) \, k \, \phi(k)} - \sum_{i,j=1}^m \sum_{k=1}^n q_{j \, \pi(j) \, k \, \phi(k)} \nonumber\\
&= 2 \sum_{i,j=1}^m \sum_{k=1}^n q_{i \, \pi(j) \, k \, \phi(k)} - 2m f(\pi, \phi),
\end{align}
\begin{align}
\label{deltaphi}
\sum_{k,l=1}^n \delta^{\phi}_{kl}(\pi, \phi) = 2 \sum_{i=1}^m \sum_{k,l=1}^n q_{i \, \pi(i) \, k \, \phi(l)} - 2n f(\pi, \phi).
\end{align}
Using (\ref{deltapi}) and (\ref{deltaphi}) we can now compute an average cost change after $2$-exchange operation on solution $(\pi, \phi)$.
\begin{align}
\label{DeltaB}
\Delta(\pi, \phi) \, &= \frac{\sum_{i,j=1}^m \delta^{\pi}_{ij}(\pi, \phi) + \sum_{k,l=1}^n \delta^{\phi}_{kl}(\pi, \phi)}{m^2 + n^2} \nonumber\\
&= \frac{2 \sum_{i,j=1}^m \sum_{k=1}^n q_{i \, \pi(j) \, k \, \phi(k)} + 2 \sum_{i=1}^m \sum_{k,l=1}^n q_{i \, \pi(i) \, k \, \phi(l)} - 2(m + n)f(\pi, \phi)}{m^2 + n^2} \nonumber\\
& = \frac{2 \sum_{i,j=1}^m \sum_{k=1}^n q_{i \, \pi(j) \, k \, \phi(k)} + 2 \sum_{i=1}^m \sum_{k,l=1}^n q_{i \, \pi(i) \, k \, \phi(l)}}{m^2 + n^2} \! - \! \lambda f(\pi, \phi) \! + \! \lambda \frac{2mn}{m \! + \! n} \Aa \! - \! \lambda \frac{2mn}{m \! + \! n} \Aa \nonumber\\
& \leq - \lambda(f(\pi, \phi) - \frac{2mn}{m + n} \Aa) + \mu - \lambda \frac{2mn}{m + n} \Aa,
\end{align}
where $\lambda = 2 \frac{m + n}{m^2 + n^2}$ and $\mu = \max_{\pi \in \Pi, \phi \in \Phi} \left[\dfrac{2 \sum_{i,j=1}^m \sum_{k=1}^n q_{i \, \pi(j) \, k \, \phi(k)} + 2 \sum_{i=1}^m \sum_{k,l=1}^n q_{i \, \pi(i) \, k \, \phi(l)}}{m^2 + n^2}\right]$. Note that both $\lambda$ and $\mu$ do not depend on any particular solution and are fixed for a given BAP instance.

We are ready to prove the theorem by contradiction. Let $(\pi^*, \phi^*)$ be the local optimum for $2$-exchange local search, with the objective function cost $f^* = f(\pi^*, \phi^*)$. Assume now that $f(\pi^*, \phi^*) > \frac{2mn}{m + n} \Aa$. Then $-\lambda(f(\pi^*, \phi^*) - \frac{2mn}{m + n} \Aa) < 0$ and

\begin{align}
\mu - \lambda \frac{2mn}{m + n} \Aa = \max_{\pi \in \Pi, \phi \in \Phi} \, &\left[\frac{2 \sum_{i,j=1}^m \sum_{k=1}^n q_{i \, \pi(j) \, k \, \phi(k)} + 2 \sum_{i=1}^m \sum_{k,l=1}^n q_{i \, \pi(i) \, k \, \phi(l)}}{m^2 + n^2}\right] \nonumber\\
- \, &2 \frac{m + n}{m^2 + n^2} \frac{2mn}{m+n} \frac{1}{mn} \sum_{i,j=1}^m \sum_{k,l=1}^n q_{ijkl} \nonumber\\
= \max_{\pi \in \Pi, \phi \in \Phi} \, &\left[\frac{2 \sum_{i,j=1}^m \sum_{k=1}^n q_{i \, \pi(j) \, k \, \phi(k)}}{m^2 + n^2} + \frac{2 \sum_{i=1}^m \sum_{k,l=1}^n q_{i \, \pi(i) \, k \, \phi(l)}}{m^2 + n^2}\right] \nonumber\\
- \, &\frac{2 \sum_{i,j=1}^m \sum_{k,l=1}^n q_{ijkl}}{m^2 + n^2} - \frac{2 \sum_{i,j=1}^m \sum_{k,l=1}^n q_{ijkl}}{m^2 + n^2} \leq 0,
\end{align}
which implies $\Delta(\pi^*, \phi^*) < 0$. As $\Delta$ is the average cost difference after applying $2$-exchange, there exists some swap that decreases solution cost by at least $-\Delta(\pi^*, \phi^*)$, and that contradicts with $(\pi^*, \phi^*)$ being a local optimum.
\end{proof}

It is easy to see that the bound $\mu \leq \lambda \frac{2mn}{m + n} \Aa$ from Theorem \ref{thm:2exA} is tight. Consider some arbitrary bilinear assignment $(\pi, \phi)$, and set all $q_{ijkl}$ to zero except $q_{i \, \pi(i) \, k \, \phi(k)} = 1, \, \forall i \, \forall k$. Then $\mu = 4 \dfrac{\sum_{i=1}^m \sum_{k=1}^n q_{i \, \pi(i) \, k \, \phi(k)}}{m^2 + n^2} = \lambda \frac{2mn}{m + n} \Aa = \frac{4mn}{m^2 + n^2}$.

\begin{theorem}
\label{thm:2ext}
For any BAP instance $(Q,C,D)$ with elements of $Q$ restricted to non-negative integers and zero matrices $C,D$, the local search algorithm that explores $2$-exchange neighborhood will reach a solution with the cost at most $\frac{2mn}{m + n} \Aa(Q,C,D)$ in $O\left(\frac{m^2 + n^2}{m + n} \log{\sum q_{ijkl}}\right)$ iterations.
\end{theorem}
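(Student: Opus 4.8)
The plan is to recycle the per-iteration improvement guarantee that is already implicit in the proof of Theorem~\ref{thm:2exA}, and then to convert the resulting geometric decay of the objective into an iteration count by exploiting integrality. Throughout, write $B=\frac{2mn}{m+n}\Aa(Q,C,D)$ for the target value; since $C,D$ are zero, $\Aa(Q,C,D)=\frac{1}{mn}\sum q_{ijkl}$ and every feasible objective value $f(\pi,\phi)=\sum_{i,k} q_{i\,\pi(i)\,k\,\phi(k)}$ is a non-negative integer.

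First I would extract from the proof of Theorem~\ref{thm:2exA} the inequality
\[
\Delta(\pi,\phi)\ \leq\ -\lambda\bigl(f(\pi,\phi)-B\bigr),
\]
which is exactly the displayed bound there once the term $\mu-\lambda B$ is discarded using $\mu-\lambda B\leq 0$, with $\lambda=2\frac{m+n}{m^2+n^2}$. Because $\Delta(\pi,\phi)$ is the \emph{average} change over the whole $2$-exchange neighborhood, the best neighbor improves the objective by at least $-\Delta(\pi,\phi)\geq\lambda(f(\pi,\phi)-B)$. Hence, running best-improvement (steepest-descent) local search and letting $E_t=f_t-B$ denote the excess after $t$ iterations, we obtain $E_{t+1}\leq(1-\lambda)E_t$ for as long as $E_t>0$, and therefore $E_t\leq(1-\lambda)^t E_0$ with $E_0=f_0-B\leq f_0\leq\sum q_{ijkl}$.

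The step I expect to be the main obstacle is that this geometric estimate by itself only shows $E_t\to 0$, never that $E_t$ reaches $0$; converting ``approaching $B$'' into ``attaining a value $\leq B$'' in logarithmically many steps is precisely where integrality is needed. I would split the run into two phases. While $\lambda E_t\geq 1$ the guaranteed decrease $\lambda E_t$ is at least $1$ and dominates, so the excess contracts geometrically; using $\log\frac{1}{1-\lambda}\geq\lambda$ together with $\lambda\in(0,1)$ (valid since $(m-1)^2+(n-1)^2\geq 2$ for $m,n\geq 2$), the number of iterations needed to drive $E_t$ below $1/\lambda$ is at most $O\!\left(\frac1\lambda\log(\lambda E_0)\right)=O\!\left(\frac{m^2+n^2}{m+n}\log\sum q_{ijkl}\right)$. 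Once $E_t<1/\lambda=\frac{m^2+n^2}{2(m+n)}$, each improving move still lowers the integer objective by at least $1$, so at most $\lceil 1/\lambda\rceil=O\!\left(\frac{m^2+n^2}{m+n}\right)$ further iterations bring $f_t$ down to at most $B$.

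Summing the two phases yields the claimed bound $O\!\left(\frac{m^2+n^2}{m+n}\log\sum q_{ijkl}\right)$, the linear second phase being absorbed into the logarithmic first. Two small checks remain: that the argument degenerates gracefully when $E_0<1/\lambda$ (only the second phase applies) or when $E_0\leq 0$ (no iterations needed), and that the per-iteration contraction genuinely requires best-improvement — a first-improvement variant only guarantees the unit decrease coming from integrality and so would not on its own deliver the logarithmic factor.
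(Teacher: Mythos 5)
Your proposal is correct and follows essentially the same route as the paper: both extract the per-iteration inequality $\Delta(\pi,\phi)\leq-\lambda f(\pi,\phi)+\mu$ from the proof of Theorem~\ref{thm:2exA}, deduce a multiplicative contraction of the excess above the target value, and invoke integrality of the objective to terminate the count at $O\!\left(\tfrac{1}{\lambda}\log(\lambda f_{\max})\right)$. Your explicit two-phase split (geometric decay while $\lambda E_t\geq 1$, then at most $\lceil 1/\lambda\rceil$ unit-decrement steps) is a slightly cleaner rendering of the paper's termination argument, and your remark that the contraction presupposes a best-improvement rule makes explicit an assumption the paper leaves implicit.
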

\begin{proof}
Inequality (\ref{DeltaB}) can be also written as $\Delta(\pi, \phi) \leq -\lambda f(\pi, \phi) + \mu$, and so any solution with $f(\pi, \phi) > \frac{\mu}{\lambda}$ would yield $\Delta(\pi, \phi) < 0$, and would have some $2$-exchange improvement possible. Note that $\frac{2mn}{m + n} \Aa \geq \frac{\mu}{\lambda}$.

Consider a cost $f'(\pi, \phi) = f(\pi, \phi) - \frac{\mu}{\lambda}$. At every step of the $2$-exchange local search $f'(\pi, \phi)$ is decreased by at least $\Delta(\pi, \phi)$ and becomes at most $$f'(\pi, \phi) + \Delta(\pi, \phi) \leq f'(\pi, \phi) + (-\lambda f(\pi, \phi) + \mu) = f'(\pi, \phi) - \lambda f'(\pi, \phi) = (1 - \lambda)f'(\pi, \phi).$$ Since elements of $Q$ are integer, the cost at each step must decrease by at least $1$. Then a number of iterations $t$ for $C'(\pi, \phi)$ to become less than or equal to zero has to satisfy 
\begin{align}
(1 - \lambda)^{t-1} (f_{\text{max}} - \frac{\mu}{\lambda}) - (1 - \lambda)^{t} (f_{\text{max}} - \frac{\mu}{\lambda}) &\geq 1, \nonumber\\
(1 - \lambda)^{t-1} (f_{\text{max}} - \frac{\mu}{\lambda}) (1 - (1 - \lambda)) &\geq 1, \nonumber\\
(1 - \lambda)^{t-1} &\geq \frac{1}{(f_{\text{max}} - \frac{\mu}{\lambda}) \lambda}, \nonumber\\
(t - 1) \log{(1 - \lambda)} &\geq -\log{\lambda (f_{\text{max}} - \frac{\mu}{\lambda})}, \nonumber\\
t &\leq 1 + \frac{-\log{\lambda (f_{\text{max}} - \frac{\mu}{\lambda}})}{\log{(1 - \lambda)}},
\end{align}
where $f_{\text{max}}$ is the highest possible solution value. It follows that
\begin{equation}
t \in O\left(\frac{1}{\lambda} \log{\lambda (f_{\text{max}} - \frac{\mu}{\lambda}})\right) = O\left(\frac{m^2 + n^2}{m + n} \log{\frac{m + n}{m^2 + n^2} (f_{\text{max}} - \frac{\mu}{\lambda}})\right).
\end{equation}
This together with the fact that $f_{\text{max}} - \frac{\mu}{\lambda} \leq f_{\text{max}} \leq \sum_{i,j=1}^m \sum_{k,l=1}^n q_{ijkl}$ completes the proof.
\end{proof}

It should be noted that the solution considered in the statement of Theorem \ref{thm:2ext} may not be a local optimum. The theorem simply states that, the solution of the desired quality will be reached by $2$-exchange local search in polynomial time. It is known that for QAP, $2$-exchange local search may sometimes reach local optimum in exponential number of steps \cite{pardalos1994quadratic}.

\subsection{[$h,p$]-exchange neighborhoods}
\label{sec:hpex}

Recall that in the $h$-exchange neighborhood we change either the $\mb{x}$ variables or the $\mb{y}$ variables, but not both. Simultaneous changes in $\mb{x}$ and $\mb{y}$ could lead to more powerful neighborhoods, but with additional computational effort in exploring them. With this motivation, we introduce the \textit{[$h,p$]-exchange neighborhood }for BAP.

\medskip

In the $[h,p]$-exchange neighborhood, for each $h$-exchange operation on $\mb{x}$ variables, we consider all possible $p$-exchange operations on $\mb{y}$ variables. Thus, the [$h,p$]-exchange neighborhood is the set of all solutions $(\mb{x}', \mb{y}')$ obtained from the given solution $(\mb{x}, \mb{y})$, such that $\mb{x}'$ differs from $\mb{x}$ in at most $h$ assignments, and $\mb{y}'$ differs from $\mb{y}$ in at most $p$ assignments. The size of this neighborhood is $\Theta(m^hn^p)$.

\begin{theorem}
The objective function value of a locally optimal solution for the $[h,p]$-exchange neighborhood could be arbitrarily bad. If $h < \frac{m}{2}$ or $p < \frac{n}{2}$ this value could be arbitrarily worse than $\Aa(Q,C,D)$.
\end{theorem}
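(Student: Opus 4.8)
The plan is to exhibit a single family of instances, parametrized by a large $L>0$, for which the solution $(\mb{x},\mb{y})$ corresponding to the two identity permutations is a local optimum, and to tune a pair of linear penalties so as to make this solution simultaneously (a) arbitrarily far from the optimum and (b), under the stated hypothesis, arbitrarily worse than the average. I would work in the permutation formulation and, for a solution $(\pi,\phi)$, write $d_x=|\{i:\pi(i)\neq i\}|$ and $d_y=|\{k:\phi(k)\neq k\}|$ for the numbers of displaced coordinates. The instance sets $q_{ijkl}=-L$ whenever $i\neq j$ and $k\neq l$ (and $0$ otherwise), $c_{ij}=B_x$ whenever $i\neq j$, and $d_{kl}=B_y$ whenever $k\neq l$, for nonnegative constants $B_x,B_y$ to be chosen. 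With these costs the objective collapses to a function of the displacement counts only,
\[
f(\pi,\phi)=-L\,d_x d_y+B_x d_x+B_y d_y,
\]
so the identity solution has value $0$, while a fixed-point-free $(\pi,\phi)$ with $d_x=m$, $d_y=n$ has value $-Lmn+B_xm+B_yn$. The average is read off from Theorem~\ref{thm:Aa}: since $\sum_{ijkl}q_{ijkl}=-L\,m(m-1)n(n-1)$, $\sum_{ij}c_{ij}=B_x m(m-1)$ and $\sum_{kl}d_{kl}=B_y n(n-1)$, one gets $\Aa(Q,C,D)=-L(m-1)(n-1)+B_x(m-1)+B_y(n-1)$.

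The next step is the local-optimality test. Any solution in the $[h,p]$-exchange neighborhood of the identity has $d_x\le h$ and $d_y\le p$, so it suffices to check $f\ge 0$ on this range. For $d_x=0$ or $d_y=0$ this is immediate from $B_x,B_y\ge 0$, and for $d_x,d_y\ge 1$ dividing by $d_xd_y$ reduces the requirement to $\tfrac{B_x}{d_y}+\tfrac{B_y}{d_x}\ge L$; as the left-hand side is decreasing in both arguments, the worst case is the corner $d_x=h$, $d_y=p$, giving the clean criterion
\[
\frac{B_x}{p}+\frac{B_y}{h}\ge L .
\]
Whenever this holds the identity solution is a local optimum, and strictly so after replacing a vanishing penalty by an infinitesimal $\eta>0$ to break the ties coming from one-sided moves.

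To conclude I would exploit the freedom in $(B_x,B_y)$. For the first (unconditional) assertion, assume without loss of generality $h<m$ (otherwise $p<n$ and the roles are symmetric), and take $B_x=0$, $B_y=Lh$, which just meets the criterion; then a fixed-point-free solution has value $Ln(h-m)\to-\infty$, so the gap between the local optimum and the optimum is $\Theta(L)$, i.e.\ arbitrarily bad. For the second assertion I treat the two hypotheses symmetrically. If $h<m/2$, hence $h\le m-2$, the \emph{same} choice $B_x=0$ (or $\eta$), $B_y=Lh$ yields, from the average formula above, $\Aa=L(n-1)(h-m+1)$, which is negative of magnitude $\Theta(L)$ precisely because $h\le m-2$; since the local optimum still has value $0$ it is worse than the average by $\Theta(L)$. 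The case $p<n/2$ (so $p\le n-2$) is handled by the mirror choice $B_y=0$, $B_x=Lp$, giving $\Aa=L(m-1)(p-n+1)<0$.

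The main obstacle, and the point the construction is engineered around, is the opposing pull on the penalties: local optimality forces them up, to block the simultaneous swaps that would otherwise harvest the $-L\,d_xd_y$ reward, whereas staying below the average forces them down. The resolution is that the quadratic reward enters the average through the \emph{product} $(m-1)(n-1)$ of the expected displacements, while each linear penalty enters only linearly, through $(m-1)$ or $(n-1)$; a one-sided penalty of size $Lh$ (resp.\ $Lp$) is therefore large enough to neutralize every neighborhood move yet leaves the averaged reward dominant exactly when $h\le m-2$ (resp.\ $p\le n-2$), which the hypothesis $h<m/2$ or $p<n/2$ guarantees. The one step I would verify with care is that no cleverer simultaneous move beats this one-sided bound, i.e.\ that the minimum of $f$ over the neighborhood is genuinely attained at the corner $(d_x,d_y)=(h,p)$; this is exactly the monotonicity of $\tfrac{B_x}{d_y}+\tfrac{B_y}{d_x}$ recorded above.
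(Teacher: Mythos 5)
Your proof is correct, but it is built on a genuinely different instance than the paper's. The paper keeps $C=D=0$ and hides both the reward and the blocking penalty inside $Q$: it puts $-\epsilon$ on the diagonal pattern, $-L$ on a single designated target (the cyclic shift by one), and a carefully sized positive cost $\frac{hL}{m-h}$ on the mixed entries $q_{ii\,k(k+1)}$, then runs a counting argument over the numbers $u\le h$, $v\le p$ of displaced indices to show the identity cannot be improved; one instance serves both claims. You instead reward \emph{every} simultaneous off-diagonal pair with $-L$ and move the blocking penalties into the linear terms $C$ and $D$, which collapses the objective to $-L\,d_xd_y+B_xd_x+B_yd_y$ as a function of the displacement counts alone. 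That buys you a clean, exactly verifiable local-optimality criterion (the corner inequality $\frac{B_x}{p}+\frac{B_y}{h}\ge L$, correctly justified by monotonicity, and your superset check over all $d_x\le h$, $d_y\le p$ safely covers the fact that $d_x=1$ is not achievable), and it makes transparent where the hypothesis enters: your average $\Aa=L(n-1)(h-m+1)$ is $\Theta(-L)$ already when $h\le m-2$, so you in fact establish the second claim under a slightly weaker hypothesis than the paper's $h<\frac{m}{2}$ (whose construction yields $\Aa=L\frac{2h-m}{m-h}-\epsilon$ and genuinely needs $2h<m$). The trade-offs are that the paper's example shows the phenomenon with a purely quadratic instance and a single parameter choice, whereas you need nonzero linear terms and two different settings of $(B_x,B_y)$ for the two assertions; both are legitimate, since the theorem is stated for general $(Q,C,D)$.
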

\begin{proof}
Let $\epsilon > 0$ be an arbitrarily small and $L$ be an arbitrarily large numbers. Consider the BAP instance $(Q,C,D)$ such that all of the associated cost elements are equal to $0$, except $q_{iikk}=-\epsilon, q_{i(i+1 \mod m)k(k+1 \mod n)}=-L, q_{iik(k+1 \mod n)}=\frac{hL}{m - h} \quad \forall i \in M \, \forall k \in N$. Let $(\mb{x},\mb{y})$ be a feasible solution such that $x_{ii}=1 \quad \forall i \in M$ and $y_{kk}=1 \quad \forall k \in N$. Note that $f(\mb{x},\mb{y}) = -mn\epsilon$.

We first show that $(\mb{x},\mb{y})$ is a local optimum for the $[h,p]$-exchange neighborhood. If we assume the opposite and $(\mb{x},\mb{y})$ is not a local optimum, then there exist a solution $(\mb{x}',\mb{y}')$ with $\mb{x}'$ being different from $\mb{x}$ in at most $h$ assignments, $\mb{y}'$ being different from $\mb{y}$ in at most $p$ assignments, and $f(\mb{x}',\mb{y}') - f(\mb{x},\mb{y}) < 0$. Since the summation for $f(\mb{x},\mb{y})$ comprised of exactly $mn$ elements of $Q$ with value $-\epsilon$, the only way to get an improving solution is to get some number of elements with value $-L$, and therefore to flip some number of $x_{ii}$ to $x_{i(i + 1 \mod m)}$ and $y_{kk}$ to $y_{k(k + 1 \mod n)}$. Let $1 < u \leq h$ and $1 < v \leq p$ be the number of such elements $u = |\{i \in M | x'_{i(i + 1 \mod m)} = 1\}|$ and $v = |\{k \in N | y'_{k(k + 1 \mod n)} = 1\}|$ in $(\mb{x}',\mb{y}')$. Then we know that the cost function $f(\mb{x}',\mb{y}')$ contains exactly $uv$ number of $-L$. However, each of the $v$ elements of type $y'_{k(k + 1 \mod n)} = 1$ also contributes at least $(m - h) \frac{hL}{m - h} = hL$ to the objective value (due to remaining $m - h$ elements of type $x_{ii} = 1$ being unchanged). From this we get that $f(\mb{x}',\mb{y}') > mn(-\epsilon) + uv(-L) + hv(L) = f(\mb{x},\mb{y}) + vL(h - u)$, and since $u \leq h$ we get $f(\mb{x}',\mb{y}') - f(\mb{x},\mb{y}) > 0$ which contradicts the fact that $(\mb{x}',\mb{y}')$ is an improving solution to $(\mb{x},\mb{y})$. Hence, $(\mb{x},\mb{y})$ must be a local optimum.

We also get that an optimal solution for this instance is $x_{i(i+1 \mod m)}=1 \quad \forall i \in M$ and $y_{k(k+1 \mod n)}=1 \quad \forall k \in N$ with a total cost of $-mnL$. The average value of all feasible solutions is $\Aa(Q,C,D) = \dfrac{mn(-L) + mn(-\epsilon) + mn\frac{hL}{m - h}}{mn} = L\frac{2h-m}{m - h} - \epsilon$. $h < \frac{m}{2}$ and appropriate choice of $\epsilon, L$ guarantee us that considered local optimum is arbitrarily worse than $\Aa(Q,C,D)$. The construction of the example for the case $p < \frac{n}{2}$ is similar, so we omit the details.
\end{proof}

\smallskip

One particular case of the $[h,p]$-exchange neighborhood deserves a special mention. If $p=n$, then for each candidate $h$-exchange solution $\mb{x}'$ we will consider all possible assignments for $\mb{y}$. To find the optimal $\mb{y}$ given $\mb{x}'$, we can solve a linear assignment problem with cost matrix $g_{kl} = \sum_{i=1}^m\sum_{j=1}^m q_{ijkl} x'_{ij} + d_{kl}$, as in the \textit{Alternating Algorithm}. Analogous situation appears when we consider $[h,p]$-exchange neighborhood with $h=m$.

A set of solutions defined by the union of $[h,n]$-exchange and $[m,p]$-exchange neighborhoods, for the case $h=p$, will be called simply \textit{optimized $h$-exchange neighborhood}.
Note that the optimized $h$-exchange neighborhood is exponential in size, but it can be searched in $O(m^hn^3 + n^hm^3)$ time due to the fact that for fixed $\mb{x}$ ($\mb{y}$), optimal $f(\mb{x}, \mb{y}')$ ($f(\mb{x}', \mb{y})$) can be found in $O(n^3)$ time. Neighborhoods similar to optimized $2$-exchange were used for unconstrained bipartite binary quadratic program by Glover et al. \cite{glover2015integrating}, and for the bipartite quadratic assignment problem by Punnen and Wang \cite{punnen2016bipartite}.

As in the case of $h$-exchange, some performance bounds for optimized $h$-exchange neighborhood can be established, if the input array $Q$ is not allowed to have negative elements.

\begin{theorem}
\label{thm:2exOptA}
There exists a solution with the cost $f \leq (m + n) \Aa(Q,C,D)$ in the optimized $2$-exchange neighborhood of every solution to BAP, for any instance $(Q,C,D)$ with non-negative $Q$ and zero matrices $C,D$.
\end{theorem}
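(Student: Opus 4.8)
The plan is to avoid averaging over the whole neighborhood and instead exhibit one explicit neighbor that already satisfies the bound. The crucial observation is that the optimized $2$-exchange neighborhood of any feasible $(\mb{x},\mb{y})$ contains the solution $(\mb{x},\bar{\mb{y}})$ obtained by leaving $\mb{x}$ fixed and reassigning $\mb{y}$ optimally, i.e.\ $\bar{\mb{y}}$ is an optimal solution of the linear assignment problem with cost matrix $g_{kl}=\sum_{i,j}q_{ijkl}x_{ij}$ (here $D$ is the zero matrix, as in the \textit{Alternating Algorithm}). This is legitimate because the $[2,n]$-exchange part of the neighborhood permits $\mb{x}$ to differ from the current assignment in \emph{at most} two positions, in particular in zero positions, while $\mb{y}$ ranges over all of $\yy$.

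First I would bound $f(\mb{x},\bar{\mb{y}})$ using the elementary fact that a minimum never exceeds an average: since $\bar{\mb{y}}$ minimizes the linear objective $\sum_{k,l}g_{kl}y_{kl}$ over $\yy$, its value is at most the average of that objective over all $\mb{y}\in\yy$, which equals $\frac{1}{n}\sum_{k,l}g_{kl}$ (each variable $y_{kl}$ takes value $1$ in exactly a $1/n$ fraction of the assignments in $\yy$). The decisive step is then purely a non-negativity argument. Because $Q\ge 0$, every $g_{kl}\ge 0$, and since $\mb{x}$ is a permutation matrix,
\[
\sum_{k,l}g_{kl}=\sum_{i,j}x_{ij}\sum_{k,l}q_{ijkl}\le\sum_{i,j}\sum_{k,l}q_{ijkl}=mn\,\Aa(Q,C,D),
\]
where the last equality is Theorem~\ref{thm:Aa} with $C,D$ zero. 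Combining the two estimates gives $f(\mb{x},\bar{\mb{y}})\le\frac{1}{n}\cdot mn\,\Aa(Q,C,D)=m\,\Aa(Q,C,D)\le(m+n)\,\Aa(Q,C,D)$, which is the claim.

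Running the symmetric argument on the $[m,2]$-exchange part --- fixing $\mb{y}$ and reassigning $\mb{x}$ optimally through the matrix $e_{ij}=\sum_{k,l}q_{ijkl}y_{kl}$ --- yields the companion bound $n\,\Aa(Q,C,D)$, so the neighborhood in fact always contains a solution of cost at most $\min\{m,n\}\,\Aa(Q,C,D)$, which equals $m\,\Aa(Q,C,D)$ under the standing assumption $m\le n$ and is therefore comfortably within the stated bound. I do not anticipate a real obstacle: the only points requiring care are the justification that full reoptimization of one coordinate block is genuinely a member of the optimized $2$-exchange neighborhood (it is its ``zero-exchange'' extreme) and the non-negativity inequality, which is exactly what lets a single permutation's cost be dominated by the grand total of $Q$. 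An averaging argument paralleling Theorem~\ref{thm:2exA} is also conceivable, but averaging $f$ over the optimized neighbors reintroduces a dependence on the current objective value and does not cleanly produce an \emph{absolute} bound, so the direct construction above is the cleaner route.
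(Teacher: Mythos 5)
Your proof is correct, and it takes a genuinely different route from the paper's. The paper mimics its own Theorem~\ref{thm:2exA}: it averages the cost changes $\delta^{\pi}_{ij}$ and $\delta^{\phi}_{kl}$ over all $m^2+n^2$ optimized $2$-exchange moves, bounds that average by $\mu - f(\pi,\phi)$ for an instance-dependent constant $\mu$, concludes that either the current solution already has cost at most $\mu$ or some move brings it to at most $\mu$, and finally verifies $\mu \leq (m+n)\Aa(Q,C,D)$. You instead exhibit a single explicit neighbor --- the ``zero-exchange'' member of the $[2,n]$-exchange part in which $\mb{x}$ is untouched and $\mb{y}$ is fully reoptimized --- and bound its cost by the min-versus-average inequality over $\yy$ plus non-negativity of $Q$. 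Both key facts you rely on are sound: the neighborhood is defined with ``differs in \emph{at most} $h$ assignments,'' so the identity exchange is included (the paper itself tacitly uses this, since its sums over $\delta^{\pi}_{ij}$ include the $i=j$ terms), and the average of $\sum_{k,l} g_{kl} y_{kl}$ over $\yy$ is indeed $\tfrac{1}{n}\sum_{k,l} g_{kl} \leq \tfrac{1}{n}\, mn\,\Aa$ by Theorem~\ref{thm:Aa}. Your argument is shorter and yields the strictly stronger bound $\min\{m,n\}\,\Aa(Q,C,D)$ (note $\Aa \geq 0$ here, so this dominates $(m+n)\Aa$). What the paper's averaging machinery buys in exchange is uniformity with the analysis of the plain $2$-exchange neighborhood --- where no full one-sided reoptimization is available and an averaging argument is genuinely needed --- and a handle on the \emph{expected} improvement per move, which is the ingredient reused for iteration-count bounds of the type in Theorem~\ref{thm:2ext}; your construction, while cleaner for this statement, does not provide that.
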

\begin{proof}
The proof will follow the structure of Theorem \ref{thm:2exA}, and will focus on the average solution change to a given permutation pair solution ($\pi$, $\phi$) to BAP.

Let $\pi_{ij}$ be the permutation obtained by applying a single $2$-exchange operation to $\pi$ on indices $i$ and $j$, and $\phi^*$ be the optimal permutation that minimizes the solution cost for such fixed $\pi_{ij}$. Define $\delta^{\pi}_{ij}$ as the objective value difference after applying such operation:
\begin{equation}
\delta^{\pi}_{ij}(\pi, \phi) = f(\pi_{ij}, \phi^*) - f(\pi, \phi) = \sum_{u=1}^m \sum_{k=1}^n q_{u \, \pi_{ij}(u) \, k \, \phi^*(k)} - f(\pi, \phi) \leq \frac{1}{n} \sum_{u = 1}^m \sum_{k,l = 1}^n q_{u \, \pi_{ij}(u) \, k \, l} - f(\pi, \phi). \nonumber
\end{equation}
The last inequality due to the fact that, for fixed $\pi_{ij}$, the value of the solution with the optimal $\phi^*$ is not worse than the average value of all such solutions.
We also know that for any $k,l \in N$,
\begin{equation}
\sum_{u = 1}^m q_{u \, \pi_{ij}(u) \, k \, l} = \sum_{u = 1}^m q_{u \, \pi(u) \, k \, l} + q_{i \, \pi(j) \, k \, l} + q_{j \, \pi(i) \, k \, l} - q_{i \, \pi(i) \, k \, l} - q_{j \, \pi(j) \, k \, l}. \nonumber
\end{equation}
and, therefore,
\begin{equation}
\delta^{\pi}_{ij}(\pi, \phi) \leq \frac{1}{n} \sum_{k,l = 1}^n \sum_{u = 1}^m q_{u \, \pi(u) \, k \, l} + \frac{1}{n} \sum_{k,l = 1}^n \left(q_{i \, \pi(j) \, k \, l} + q_{j \, \pi(i) \, k \, l} - q_{i \, \pi(i) \, k \, l} - q_{j \, \pi(j) \, k \, l}\right) - f(\pi, \phi). \nonumber
\end{equation}
Analogous result can be derived for similarly defined $\delta^{\phi}_{kl}$:
\begin{equation}
\delta^{\phi}_{kl}(\pi, \phi) \leq \frac{1}{m} \sum_{i,j = 1}^m \sum_{v = 1}^n q_{i \, j \, v \, \phi(v)} + \frac{1}{m} \sum_{i,j = 1}^m \left(q_{i \, j \, k \, \phi(l)} + q_{i \, j \, l \, \phi(k)} - q_{i \, j \, k \, \phi(k)} - q_{i \, j \, l \, \phi(l)}\right) - f(\pi, \phi). \nonumber
\end{equation}

We can now get an upper bound on the average cost change after optimized $2$-exchange operation on solution $(\pi, \phi)$.
\begin{align}
\Delta(\pi, \phi) \, &= \frac{\sum_{i,j=1}^m \delta^{\pi}_{ij}(\pi, \phi) + \sum_{k,l=1}^n \delta^{\phi}_{kl}(\pi, \phi)}{m^2 + n^2} \nonumber\\
& \leq \frac{\frac{m^2}{n} \sum_{u = 1}^m \sum_{k,l = 1}^n q_{u \, \pi(u) \, k \, l} + \frac{2}{n} \sum_{i,j = 1}^m \sum_{k,l = 1}^n q_{i \, j \, k \, l} - \frac{2m}{n} \sum_{i = 1}^m \sum_{k,l = 1}^n q_{i \, \pi(i) \, k \, l} - m^2 f(\pi, \phi)}{m^2 + n^2} \nonumber\\
& \quad + \frac{\frac{n^2}{m} \sum_{i,j = 1}^m \sum_{v = 1}^n q_{i \, j \, v \, \phi(v)} + \frac{2}{m} \sum_{i,j = 1}^m \sum_{k,l = 1}^n q_{i \, j \, k \, l} - \frac{2n}{m} \sum_{i,j = 1}^m \sum_{k = 1}^n q_{i \, j \, k \, \phi(k)} - n^2 f(\pi, \phi)}{m^2 + n^2} \nonumber\\
& = \frac{(m^3 - 2m^2) \sum_{i = 1}^m \sum_{k,l = 1}^n q_{i \, \pi(i) \, k \, l} + (n^3 - 2n^2) \sum_{i,j = 1}^m \sum_{v = 1}^n q_{i \, j \, v \, \phi(v)}}{mn(m^2 + n^2)} \nonumber\\
& \quad + \frac{2(m + n) \sum_{i,j = 1}^m \sum_{k,l = 1}^n q_{i \, j \, k \, l}}{mn(m^2 + n^2)} - f(\pi, \phi) \nonumber\\
& \leq \mu - f(\pi, \phi), \nonumber
\end{align}
where
\begin{equation}
\mu = \max_{\pi \in \Pi, \phi \in \Phi} \left[\dfrac{m^3 \sum_{i = 1}^m \sum_{k,l = 1}^n q_{i \, \pi(i) \, k \, l} + n^3 \sum_{i,j = 1}^m \sum_{v = 1}^n q_{i \, j \, v \, \phi(v)} + 2(m + n) \sum_{i,j = 1}^m \sum_{k,l = 1}^n q_{ijkl}}{mn(m^2 + n^2)}\right]. \nonumber
\end{equation}
Note that $\mu$ does not depend on any particular solution and is fixed for a given BAP instance.

For any given solution $(\pi, \phi)$ to BAP, either $f(\pi, \phi) \leq \mu$ or $f(\pi, \phi) > \mu$, which means that $\Delta(\pi, \phi) \leq 0$, and so there exists an optimized $2$-exchange operation that improves our solution cost by at least $f(\pi, \phi) - \mu$, thus, making it not worse than $\mu$. We also notice that,
\begin{align}
\mu - (m + n) \Aa &= \mu - \frac{m + n}{mn} \sum_{i,j=1}^m \sum_{k,l=1}^n q_{ijkl} = \mu - \frac{(m + n)(m^2 + n^2)}{mn(m^2 + n^2)} \sum_{i,j=1}^m \sum_{k,l=1}^n q_{ijkl} \nonumber\\
& = \max_{\pi \in \Pi} \left[\dfrac{m^3 \sum_{i = 1}^m \sum_{k,l = 1}^n q_{i \, \pi(i) \, k \, l}}{mn(m^2 + n^2)}\right] - \frac{m^3 \sum_{i,j=1}^m \sum_{k,l=1}^n q_{ijkl}}{mn(m^2 + n^2)} \nonumber\\
& + \max_{\phi \in \Phi} \left[\dfrac{n^3 \sum_{i,j = 1}^m \sum_{v = 1}^n q_{i \, j \, v \, \phi(v)}}{mn(m^2 + n^2)}\right] - \frac{n^3 \sum_{i,j=1}^m \sum_{k,l=1}^n q_{ijkl}}{mn(m^2 + n^2)} \nonumber\\
& + \dfrac{2(m + n) \sum_{i,j = 1}^m \sum_{k,l = 1}^n q_{ijkl}}{mn(m^2 + n^2)} - \frac{(m^2n + n^2m) \sum_{i,j=1}^m \sum_{k,l=1}^n q_{ijkl}}{mn(m^2 + n^2)} \leq 0, \nonumber\\
\end{align}
and so $(m + n) \Aa \geq \mu$, which completes the proof.
\end{proof}

We now show that by exploiting the properties of optimized $h$-exchange neighborhood, one can obtain a solution with an improved domination number, compared to the result in Theorem \ref{thm:dom}.

\begin{theorem}
For an integer $h$, a feasible solution to BAP, which is no worse than $\Omega((m - 1)!(n - 1)! + m^hn! + n^hm!)$ feasible solutions, can be found in $O(m^hn^3 + n^hm^3)$ time.
\end{theorem}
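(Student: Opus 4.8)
The plan is to endow a single solution with two independent domination guarantees at once: being no worse than the average, and being the pointwise minimum over an exponentially large optimized $h$-exchange neighborhood. First I would compute a starting solution $(\mb{x}_0,\mb{y}_0)$ with $f(\mb{x}_0,\mb{y}_0)\le \Aa(Q,C,D)$ using the \textit{Rounding} algorithm (equivalently Theorem \ref{thm:minmax}), at a cost of $O(m^2n^2+n^3)$ time. I would then search the optimized $h$-exchange neighborhood of $(\mb{x}_0,\mb{y}_0)$ exactly once, in $O(m^hn^3+n^hm^3)$ time, and output the best solution $(\mb{x}^*,\mb{y}^*)$ encountered (including $(\mb{x}_0,\mb{y}_0)$ itself). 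Because the starting point is available to the search, $f(\mb{x}^*,\mb{y}^*)\le f(\mb{x}_0,\mb{y}_0)\le \Aa(Q,C,D)$, so by Theorem \ref{thm:dom} the output already dominates at least $(m-1)!(n-1)!$ feasible solutions, which supplies the first term.

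Next I would harvest the neighborhood contribution. The key observation is that, although the $[h,n]$-part of the search only computes an \emph{optimal} $\mb{y}$ for each $h$-exchanged $\mb{x}'$, the output value is the minimum over the whole neighborhood and therefore satisfies $f(\mb{x}^*,\mb{y}^*)\le \min_{\mb{y}'}f(\mb{x}',\mb{y}')\le f(\mb{x}',\mb{y}')$ for \emph{every} $\mb{y}'\in\yy$. Hence the output dominates all pairs $(\mb{x}',\mb{y}')$ with $\mb{x}'$ within an $h$-exchange of $\mb{x}_0$ and $\mb{y}'$ arbitrary, and symmetrically all pairs from the $[m,h]$-part. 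To count these, I would note that the number of permutations differing from $\mb{x}_0$ in exactly $h$ positions is $\binom{m}{h}D_h=\Theta(m^h)$, where $D_h$ is the number of derangements of $h$ elements (a positive constant for fixed $h\ge 2$); pairing each such $\mb{x}'$ with all $n!$ matrices of $\yy$ yields $\Theta(m^h)\cdot n!$ distinct solutions, all dominated, giving $\Omega(m^hn!)$. The symmetric count on the $\mb{y}$-side gives $\Omega(n^hm!)$.

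Finally I would assemble the estimate. The output dominates at least $(m-1)!(n-1)!$ solutions, at least $\Omega(m^hn!)$ solutions, and at least $\Omega(n^hm!)$ solutions, hence at least the maximum of these three quantities; since a sum of three nonnegative terms is within a factor $3$ of its maximum, this is $\Omega\big((m-1)!(n-1)!+m^hn!+n^hm!\big)$. I expect the main conceptual obstacle to be exactly at this junction: a direct attempt to \emph{add} the three families would have to control the overlap between the above-average solutions and the neighborhood solutions, which is awkward, whereas the $\Omega$ formulation makes this unnecessary, as dominating the single largest family already realizes the bound. The residual technical work is the derangement count establishing $\Theta(m^h)$ distinct $h$-exchanged permutations (and its $\mb{y}$-analogue), together with checking that the preprocessing cost $O(m^2n^2+n^3)$ is absorbed into $O(m^hn^3+n^hm^3)$ for $h\ge 2$ under the standing assumption $m\le n$.
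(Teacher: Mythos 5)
Your construction is the same as the paper's: start from a solution no worse than $\Aa(Q,C,D)$ (obtainable in $O(m^2n^2)$ time via Theorem \ref{thm:minmax} or \emph{Rounding}), take the best solution in its optimized $h$-exchange neighborhood in $O(m^hn^3+n^hm^3)$ time, and observe that the output simultaneously dominates the family $R_\sim$ of $(m-1)!(n-1)!$ above-average representatives coming from Theorem \ref{thm:dom}, the $[h,n]$-exchange family of size $\Theta(m^h)\cdot n!$, and the $[m,h]$-exchange family of size $\Theta(n^h)\cdot m!$. Where you genuinely diverge is the final count. The paper lower-bounds the cardinality of the union of the three families by inclusion--exclusion: it restricts to sub-families $R'_x,R'_y$ built from derangements on exactly $h$ indices, computes $|R'_x|=\binom{m}{h}(!h)\,n!$ and $|R'_y|=\binom{n}{h}(!h)\,m!$, and then explicitly bounds all three pairwise intersections (using the fact that $\sim$ partitions $\{\mb{x}'\}\times\yy$ into classes of size exactly $n$) to show the subtracted terms are of lower order. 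You avoid the overlap analysis entirely: the union of three families, each of which the output dominates, has cardinality at least the maximum of the three sizes, hence at least one third of their sum, which already yields the stated $\Omega$ bound. This is a shorter and equally valid route to the asymptotic conclusion; the paper's inclusion--exclusion buys an explicit additive lower bound on the union, but that extra precision is not needed for an $\Omega$ statement. Both arguments share the implicit restriction $h\ge 2$ (for $h=1$ the derangement count $(!h)$ vanishes and the $m^hn!+n^hm!$ terms are not realized by either argument), and your closing check that the $O(m^2n^2+n^3)$ preprocessing is absorbed into $O(m^hn^3+n^hm^3)$ under $m\le n$ is correct.
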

\begin{proof}
We show that the solution described in the statement of the theorem, can be obtained in the desired running time by choosing the best solution in the optimized $h$-exchange neighborhood of a solution with objective function value no worse than $\Aa(Q,C,D)$.

Let $(\mb{x}^*,\mb{y}^*)\in \ff$ be a BAP solution such that $f(\mb{x}^*,\mb{y}^*)\leq \Aa(Q,C,D)$.
Solution like that can be found in $O(m^2n^2)$ time using Theorem \ref{thm:minmax}. From the proof of Theorem \ref{thm:dom} we know that there exists a set $R_\sim$ of $(m - 1)!(n - 1)!$ solutions, with one solution from every class defined by the equivalence relation $\sim$, such that $f(\mb{x},\mb{y})\geq\Aa(Q,C,D)\geq f(\mb{x}^*,\mb{y}^*)$ for every $(\mb{x},\mb{y})\in R_\sim$.
Let $R_x$ denote the $[h,n]$-exchange neighborhood of $(\mb{x}^*,\mb{y}^*)$, and let $R_y$ denote the $[m,h]$-exchange neighborhood of $(\mb{x}^*,\mb{y}^*)$.
Note that $R_x\cup R_y$ is the optimized $h$-exchange neighborhood of $(\mb{x}^*,\mb{y}^*)$. $R_x\cup R_y$ can be searched in $O(m^hn^3 + n^hm^3)$ time, and the result of the search has the objective function value less or equal than every $(\mb{x},\mb{y})\in R_\sim \cup R_x \cup R_y$.

Consider $R'_x \subset R_x$  ($R'_y \subset R_y$) to be the set of solutions constructed in the same way as $R_x$  ($R_y$), but now only considering those reassignments of $h$-sets $S \in M$ ($S \in N$) that are different from $\mb{x}^*$ ($\mb{y}^*$) on entire $S$. By simple enumerations it can be shown that $|R'_x|=\binom{m}{h}(!h)n!$, $|R'_y|=\binom{n}{h}(!h)m!$ and $|R'_x \cap R'_y|=\binom{m}{h}(!h)\binom{n}{h}(!h)$, where $!h$ denotes the number of derangements (i.e.\@ permutations without fixed points) of $h$ elements. Furthermore, $|R_\sim \cap R'_x|\leq \binom{m}{h}(!h)(n - 1)!$ and $|R_\sim \cap R'_y|\leq \binom{n}{h}(!h)(m - 1)!$. The later two inequalities are due to the fact that for some fixed $\mb{x}'$ ($\mb{y}'$), the relation $\sim$ partitions the set of solutions $\{\mb{x}'\}\times \yy$ ($\xx\times \{\mb{y}'\}$) into equivalence classes of size $n$ ($m$) exactly, and each such class contains at most one element of $R_\sim$.  Now we get that
\begin{align*}
|R_\sim \cup R_x\cup R_y| &\geq |R_\sim \cup R_x^d\cup R_y^d|\\
&\geq |R_\sim|+|R_x^d|+|R_y^d|-|R_\sim \cap R_x^d|-|R_\sim \cap R_y^d|-|R_x^d \cap R_y^d|\\
&\geq (m - 1)!(n - 1)! + \binom{m}{h}(!h)n! + \binom{n}{h}(!h)m!\\
&\ \ \ \ -\binom{m}{h}(!h)(n - 1)! - \binom{n}{h}(!h)(m - 1)! - \binom{m}{h}(!h)\binom{n}{h}(!h)\\
&\in \Omega((m - 1)!(n - 1)! + m^hn! + n^hm!),
\end{align*}
which concludes the proof.
\end{proof}

\subsection{Shift based neighborhoods}

Following the equivalence class example in Section \ref{sec:notations}, the \textit{shift} neighborhood of a given solution $(\mb{x}, \mb{y})$ will be comprised of all $m$ solutions $(\mb{x}', \mb{y})$, such that $x_{ij}'=x_{i(j + a \mod m)}, \forall a \in M$ and all $n$ solutions $(\mb{x}, \mb{y}')$, such that $y_{kl}'=y_{k(l + b \mod m)}, \forall b \in N$.
Alternatively, shift neighborhood can be described in terms of the permutation formulation of BAP. Given a permutation pair ($\pi$, $\phi$), we are looking at all $m$ solutions ($\pi'$, $\phi$), such that $\pi'(i) = \pi(i) + a \mod m, \forall a \in M$, and all $n$ solutions ($\pi$, $\phi'$), such that $\phi'(k) = \phi(k) + b \mod m, \forall b \in N$. Intuitively this means that, either $\pi$ will be cyclically shifted by $a$ or $\phi$ will be cyclically shifted by $b$, hence the name of this neighborhood. An iteration of the local search algorithm based on Shift neighborhood will take $O(mn^2)$ time, as we are required to fully recompute each of the $m$ (resp. $n$) solutions objective values.

Using the same asymptotic running time per iteration, it is possible to explore the neighborhood of a larger size, with the help of additional data structures $e_{ij}, g_{kl}$ (see Section \ref{sec:hex}) that maintain partial sums of assigning $i \in M$ to $j \in M'$ and $k \in N$ to $l \in N'$ given $\mb{y}$ and $\mb{x}$ respectively. Consider $\Theta(n^2)$ size neighborhood \textit{shift+shuffle} defined as follows. For a given permutation solution ($\pi$, $\phi$) this neighborhood will contain all ($\pi'$, $\phi$) such that
\begin{equation}
\pi'(i) = \pi\left((i \mod \lfloor \frac{m}{u} \rfloor) u + \lfloor \frac{i}{\lfloor \frac{m}{u} \rfloor} \rfloor + a \mod m\right), \quad \forall a \in M, \, \forall u \in \{1, 2, \ldots, \lfloor \frac{m}{2} \rfloor \},
\end{equation}
and all ($\pi$, $\phi'$) such that
\begin{equation}
\phi'(k) = \phi\left((k \mod \lfloor \frac{n}{v} \rfloor) v + \lfloor \frac{k}{\lfloor \frac{n}{v} \rfloor} \rfloor + b \mod n\right), \quad \forall b \in N, \, \forall v \in \{1, 2, \ldots, \lfloor \frac{n}{2} \rfloor \}.
\end{equation}
Two of the above equations are sufficient for the case of $m \mod u = 0$ or $n \mod v = 0$. Otherwise, for all $i > m - (m \mod u)$ and all $k > n - (n \mod v)$ an arbitrary reassignment could be applied (for example $\pi'(i)=\pi(i)$ and $\phi'(k)=\phi(k)$). One can visualize shuffle operation as splitting elements of a permutation into buckets of the same size ($u$ or $v$ in the formulas above), and then forming a new permutation by placing first elements from each bucket in the beginning, followed by second elements of each bucket, and so on. Figure \ref{shuffle} depicts such shuffling for a permutation $\pi$.
\begin{figure}[h!]
\centering
\includegraphics[width=0.6\textwidth]{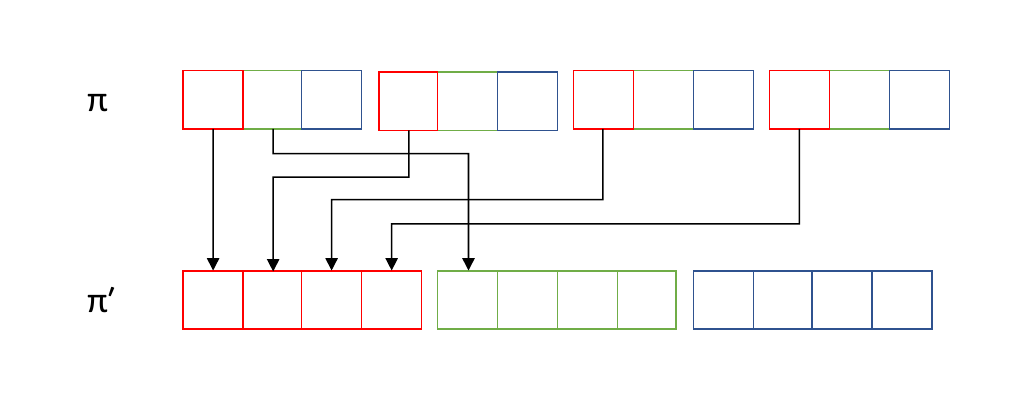}
\caption{Example of shuffle operation on permutation $\pi$, with $u=3$}
\label{shuffle}
\end{figure}
By combining shift and shuffle we increase the size of the explored neighborhood, at no extra asymptotic running time cost for the local search implementations.

Local search algorithms that explore shift or shift+shuffle neighborhoods could potentially be stuck in the arbitrarily bad local optimum, following the same argument as in Theorem \ref{thm:hex}.

\medskip

If we allow applying shift simultaneously to both $\mb{x}$ and $\mb{y}$ we will consider all $mn$ neighbors of the current solution, precisely as in equivalence class example from Section \ref{sec:notations}. We will call this \textit{dual shift} neighborhood of a solution $(\mb{x}, \mb{y})$. Notice that a local search algorithm that explores this neighborhood reaches a local optimum only after a single iteration, with running time $O(m^2n^2)$.

A much larger \textit{optimized shift} neighborhood will be defined as follows. For every shift operation on $\mb{x}$ we consider all possible assignments of $\mb{y}$, and vice versa, for each shift on $\mb{y}$ we will consider all possible assignments of $\mb{x}$. Just like in the case of optimized $h$-exchange, this neighborhood is exponential in size, but can be efficiently explored in $O(mn^3)$ running time by solving corresponding linear assignment problems.

\begin{theorem}
For local search based on dual shift and optimized shift neighborhoods, the final solution value is guaranteed to be no worse than $\Aa(Q,C,D)$.
\end{theorem}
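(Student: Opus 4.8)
The plan is to reduce both cases to Theorem~\ref{thm:minmax} together with the remark following it, namely that the min--max inequality $\min f \le \Aa(Q,C,D) \le \max f$ holds not only for the specific class $\{(\mb{x}^a,\mb{y}^b)\}$ but for \emph{any} equivalence class of the relation $\sim$ from Section~\ref{sec:notations}. The key observation I would make first is that the \emph{dual shift} neighborhood of a solution $(\mb{x},\mb{y})$ is exactly the equivalence class of $(\mb{x},\mb{y})$ under $\sim$. Indeed, its $mn$ members are precisely the pairs obtained by cyclically shifting $\mb{x}$ by some $a\in\{0,\ldots,m-1\}$ and $\mb{y}$ by some $b\in\{0,\ldots,n-1\}$, and this is verbatim the definition of $(\mb{x},\mb{y})\sim(\mb{x}',\mb{y}')$. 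Consequently the generalized form of Theorem~\ref{thm:minmax} applies to this neighborhood, and the minimum objective value attained over it is at most $\Aa(Q,C,D)$.

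For the dual shift case I would then argue as follows. Since $\sim$ is an equivalence relation, every member of a given class shares the same dual shift neighborhood, namely the class itself. Hence a single iteration of dual shift local search moves from the starting point to a minimizer over its class; that minimizer is automatically a local optimum, because its own dual shift neighborhood is again the same class, within which it is already smallest, so no improving move exists and the search halts. By the observation of the previous paragraph its value is $\le\Aa(Q,C,D)$, which is exactly the claim.

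For the \emph{optimized shift} case I would exploit that this neighborhood dominates the dual shift neighborhood. For each shift of $\mb{x}$ by $a$, call it $\mb{x}^{(a)}$, the optimized shift neighborhood contains the pair $(\mb{x}^{(a)},\mb{y}^{*})$ where $\mb{y}^{*}$ is an optimal response obtained by solving the linear assignment problem with cost matrix $g_{kl}=\sum_{i,j} q_{ijkl}x^{(a)}_{ij}+d_{kl}$. Therefore $f(\mb{x}^{(a)},\mb{y}^{*})\le f(\mb{x}^{(a)},\mb{y}')$ for every shift $\mb{y}'$ of $\mb{y}$, and minimizing over $a$ shows that the best optimized shift neighbor is no worse than the best dual shift neighbor, hence at most $\Aa(Q,C,D)$. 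Because local search only accepts improving moves, the value of the final solution cannot exceed this bound, completing the argument.

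The routine verifications are the index bookkeeping confirming that simultaneous cyclic shifts realize exactly one class of $\sim$, and that the optimal $\mb{y}$-response dominates any fixed shifted partner. The one conceptual step, which I expect to carry the whole proof, is the identification of the dual shift neighborhood with an equivalence class of $\sim$: once this is in place, the averaging inequality of Theorem~\ref{thm:minmax} transfers directly and no new estimate is needed.
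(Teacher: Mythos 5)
Your proposal is correct and follows essentially the same route as the paper: identify the dual shift neighborhood with an equivalence class of $\sim$ so that the generalized form of Theorem~\ref{thm:minmax} bounds its minimum by $\Aa(Q,C,D)$, and then observe that the optimized shift neighborhood contains (indeed dominates) the dual shift neighborhood. Your write-up is somewhat more detailed than the paper's two-sentence proof, but no new idea is involved.
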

\begin{proof}
The proof for dual shift neighborhood follows from the fact that we are completely exploring the equivalence class defined by $\sim$ of a given solution, as in Theorem \ref{thm:minmax}.

For optimized shift, notice that for each shift on one side of $(\mb{x}, \mb{y})$ we consider all possible solutions on the other side. This includes all possible shifts on that respective side. Therefore the set of solutions of optimized shift neigborhood includes the set of solutions of dual shift neighborhood, and contains the solution with the value at most $\Aa(Q,C,D)$.
\end{proof}

\bigskip

In \cite{custicbilinear} we have explored the complexity of a special case of BAP where $Q$, observed as a $m^2 \times n^2$ matrix, is restricted to be of a fixed rank. The rank of such $Q$ is said to be at most $r$ if and only if there exist some $m\times m$ matrices $A^{^{p}}=(a_{ij}^p)$ and $n\times n$ matrices  $B^{^p}=(b_{ij}^p)$, $p=1,\ldots,r,$ such that
\begin{equation}\label{fact}
q_{ijkl}=\sum_{p=1}^r a_{ij}^pb_{kl}^p
\end{equation}
for all $i,j\in M$, $k,l\in N$.

\begin{theorem}
Alternating Algorithm and local search algorithms that explore optimized $h$-exchange and optimized shift neighborhoods will find an optimal solution to BAP $(Q,C,D)$, if $Q$ is a non-negative matrix of rank $1$, and both $C$ and $D$ are zero matrices.
\end{theorem}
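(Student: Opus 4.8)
The plan is to exploit the fact that a rank-$1$ non-negative cost array makes the objective \emph{separable as a product}. Writing $q_{ijkl}=a_{ij}b_{kl}$ as in the rank-$1$ factorization \eqref{fact} (dropping the single superscript), I would first argue that, unless $Q$ is identically zero (a trivial case in which every feasible solution is optimal), the two factor matrices may be taken non-negative: since $a_{ij}b_{kl}\geq 0$ for all indices, all nonzero $a_{ij}$ share a common sign and all nonzero $b_{kl}$ share a common sign, so after a global sign flip we may assume $A=(a_{ij})\geq 0$ and $B=(b_{kl})\geq 0$. Then, since $C=D=0$,
\[
f(\mb{x},\mb{y})=\sum_{i,j,k,l}a_{ij}b_{kl}x_{ij}y_{kl}=\alpha(\mb{x})\,\beta(\mb{y}),\qquad \alpha(\mb{x}):=\sum_{i,j}a_{ij}x_{ij},\ \ \beta(\mb{y}):=\sum_{k,l}b_{kl}y_{kl},
\]
and both factors are non-negative on every feasible solution.

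The key structural consequence is that the minimization decouples. Because $\alpha,\beta\geq 0$, one has $\min_{\mb{x},\mb{y}}\alpha(\mb{x})\beta(\mb{y})=\alpha^*\beta^*$, where $\alpha^*=\min_{\mb{x}\in\xx}\alpha(\mb{x})$ and $\beta^*=\min_{\mb{y}\in\yy}\beta(\mb{y})$ are the optimal values of two independent \emph{linear} assignment problems on $A$ and $B$; thus the global optimum is attained by solving these two LAPs separately. Next I would isolate the single feature shared by all three algorithms: from any current $(\mb{x},\mb{y})$, each of them can reach the solution obtained by keeping $\mb{x}$ fixed and reassigning $\mb{y}$ optimally, and likewise the one obtained by keeping $\mb{y}$ fixed and reassigning $\mb{x}$ optimally. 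For the \textit{Alternating Algorithm} this is its definition. For the optimized $h$-exchange neighborhood it holds because ``at most $h$ changes'' includes the empty exchange, so the $[h,n]$-exchange with $\mb{x}'=\mb{x}$ re-optimizes $\mb{y}$ and the $[m,h]$-exchange with $\mb{y}'=\mb{y}$ re-optimizes $\mb{x}$. For the optimized shift neighborhood it holds because the shift index $a=m$ (respectively $b=n$) is the identity shift, which is paired with a full optimal reassignment of the other block.

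With this in hand the argument is a short fixed-point computation. At a local optimum $(\mb{x},\mb{y})$ (or at termination of the Alternating Algorithm) neither re-optimization move can strictly improve, so $f(\mb{x},\mb{y})\leq\min_{\mb{y}'}f(\mb{x},\mb{y}')=\alpha(\mb{x})\beta^*$ and $f(\mb{x},\mb{y})\leq\min_{\mb{x}'}f(\mb{x}',\mb{y})=\alpha^*\beta(\mb{y})$, where I use $\alpha(\mb{x}),\beta(\mb{y})\geq 0$ to factor the non-negative scalar out of each inner minimization. If both $\alpha(\mb{x})>0$ and $\beta(\mb{y})>0$, these inequalities force $\beta(\mb{y})\leq\beta^*$ and $\alpha(\mb{x})\leq\alpha^*$, hence $\alpha(\mb{x})=\alpha^*$, $\beta(\mb{y})=\beta^*$, and $f(\mb{x},\mb{y})=\alpha^*\beta^*$ is globally optimal.

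The main obstacle — indeed the only delicate point — is the degenerate case where one factor vanishes, since then the corresponding inner LAP has a zero cost matrix and no longer pins down the optimizer. This is handled directly: if $\alpha(\mb{x})=0$ then $f(\mb{x},\mb{y})=0$, while $A\geq 0$ forces $\alpha^*=0$ and hence $\alpha^*\beta^*=0$, so the solution is already optimal; the case $\beta(\mb{y})=0$ is symmetric. Thus in every case the returned solution has value $\alpha^*\beta^*$. I would close by remarking, for the \textit{Alternating Algorithm} specifically, that a single pass already yields $\mb{x}^*$ with $\alpha(\mb{x}^*)=\alpha^*$ whenever the current multiplier $\beta(\mb{y})$ is positive, followed by $\mb{y}^*$ with $\beta(\mb{y}^*)=\beta^*$; optimality is therefore reached after one iteration, after which the termination test fires.
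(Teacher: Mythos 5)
Your proof is correct and follows essentially the same route as the paper's: factor the objective as $\alpha(\mb{x})\beta(\mb{y})$ with both factors non-negative, observe that all three neighborhoods contain the ``fix one side, optimally reassign the other'' moves, and conclude that a local optimum must minimize each factor separately. Your treatment is in fact slightly more careful than the paper's, since you explicitly justify that the rank-one factors can be taken non-negative and you handle the degenerate case $\alpha(\mb{x})=0$ or $\beta(\mb{y})=0$, which the paper glosses over.
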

\begin{proof}
Note that in the case described in the statement of the theorem, we are looking for such $(\mb{x}^*, \mb{y}^*)$ that minimizes $(\sum_{i,j=1}^m a_{ij}x^*_{ij}) \cdot (\sum_{k,l=1}^n b_{kl} y^*_{kl})$, where $q_{ijkl}=a_{ij} b_{kl}, \, \forall i,j\in M, \, k,l\in N$. If we are restricted to non-negative numbers, solutions to corresponding linear assignment problems would be an optimal solution to this BAP. It is easy to see that, for any fixed $\mb{x}$, a solution of the smallest value will be produced by $\mb{y}^*$. And viceversa, for any fixed $\mb{y}$, a solution of the smallest value will be produced by $\mb{x}^*$.

Optimized $h$-exchange neighborhood, optimized shift neighborhood and the neighborhood that \textit{Alternating Algorithm} is based on, all contain the solution that has one side of $(\mb{x}, \mb{y})$ unchanged and has the optimal assignment on the other side. Therefore, the local search algorithms that explore these neighborhoods will proceed to find optimal $(\mb{x}^*, \mb{y}^*)$ in at most $2$ iterations.
\end{proof}

\section{Experimental design and test problems}
\label{sec:expsetup}

In this section we present general information on the design of our experiments and generation of test problems.

\medskip

All experiments are conducted on a PC with Intel Core i7-4790 processor, 32 GB of memory under control of Linux Mint 17.3 (Linux Kernel 3.19.0-32-generic) 64-bit operating system. Algorithms are coded using Python 2.7 programming language and run via PyPy 5.3 implementation of Python. The linear assignment problem, that appears as a subproblem for several algorithms, is solved using Hungarian algorithm \cite{kuhn1955hungarian} implementation in Python.

\subsection{Test problems}

As there are no existing benchmark instances available for BAP, we have created several sets of test problems, which could be used by other researchers in the future experimental analysis. Three categories of problem instances are considered: \textit{\textbf{uniform}}, \textit{\textbf{normal}} and \textit{\textbf{euclidean}}.

\begin{itemize}
\item For \textit{uniform} instances we set $c_{ij}, d_{kl} = 0$ and the values $q_{ijkl}$ are generated randomly with uniform distribution from the interval $[0, m n]$ and rounded to the nearest integer.
\item For \textit{normal} instances we set $c_{ij}, d_{kl} = 0$ and the values $q_{ijkl}$ are generated randomly following normal distribution with mean $\mu = \frac{m n}{2}$, standard deviation $\sigma = \frac{m n}{6}$ and rounded to the nearest integer.
\item For \textit{euclidean} instances we generate randomly with uniform distribution four sets of points $A,B,U,V$ in Euclidean plane of size $[0, 1.5 \sqrt[2]{m n}] \times [0, 1.5 \sqrt[2]{m n}]$, such that $|A| = |B| = m$, $|U| = |V| = n$. Then $C$ and $D$ are chosen as zero vectors, and $q_{ijkl} = ||a_i - u_k|| \cdot ||b_j - v_l||$ (rounded to the nearest integer), where $a_i \in A, b_j \in B, u_k \in U, v_l \in V$.
\end{itemize}



\medskip

Test problems are named using the convention ``type size number'', where type $\in$ \{\textit{uniform}, \textit{normal}, \textit{euclidean}\}, size is of the form $m \times n$, and number $\in \{0, 1, \ldots\}$.
For every instance type and size we have generated 10 problems, and all the results of experiments will be averaged over those 10 problems. For example, in a table or a figure, a data point for ``uniform $50 \times 50$'' would be the average among the 10 generated instances. This applies to objective function values, running times and number of iterations, and would not be explicitly mentioned throughout the rest of the paper. Problem instances, results for our final set of experiments as well as best found solutions for every instance are available upon request from Abraham Punnen (apunnen@sfu.ca).

\section{Experimental analysis of construction heuristics}
\label{sec:expconstr}

In Section \ref{sec:constr} we presented several construction approaches to generate a solution to BAP. In this section we discuss results of computational experiments using these heuristics.

\medskip

The experimental results are summarized in Table \ref{constrt}.
For the heuristic \textit{GreedyRandomized}, we have considered the candidate list size $2$, $4$ and $6$. In the table, columns GreedyRandomized2 and GreedyRandomized4 refer to implementations with candidate list size of $2$ and $4$, respectively. Results for candidate list size $6$ are excluded from the table due to poor performance.

Here and later when presenting computational results, ``value'' and ``time'' refer to objective function value and running time of an algorithm. The best solution value among all tested heuristics is shown in bold font.
We also report (averaged over 10 instances of given type and size) the average solution value $\Aa(Q,C,D)$ (denoted simply as $\Aa$), computed using the closed-form expression from Section \ref{sec:notations}.

\clearpage
\newgeometry{margin=1cm}
\thispagestyle{empty}

\begin{sidewaystable*}[t!]
\centering
\small
\caption{Solution value and running time in seconds for construction heuristics}
\label{constrt}
\scalebox{0.9}{\begin{tabular}{@{}lrrcrcrcrcrc@{}}
\toprule
 & & \multicolumn{2}{c}{RandomXYGreedy} & \multicolumn{2}{c}{Greedy} & \multicolumn{2}{c}{GreedyRandomized2} & \multicolumn{2}{c}{GreedyRandomized4} & \multicolumn{2}{c}{Rounding}\\ \cmidrule(lr){3-4} \cmidrule(lr){5-6} \cmidrule(lr){7-8} \cmidrule(lr){9-10} \cmidrule(lr){11-12}
instances & $\Aa$ & value & time & value & time & value & time & value & time & value & time\\
\midrule
uniform 20x20 & 79975 & 62981 & 0.0011 & 61930 & 0.0016 & 61824 & 0.0015 & 62997 & 0.0023 & \textbf{58587} & 0.0282 \\
uniform 40x40 & 1280013 & 1039365 & 0.0024 & 1038410 & 0.0085 & 1046862 & 0.0117 & 1047444 & 0.0107 & \textbf{1005375} & 0.4083 \\
uniform 60x60 & 6480224 & 5335157 & 0.0057 & 5399004 & 0.0362 & 5430190 & 0.0403 & 5429077 & 0.0381 & \textbf{5311287} & 2.076 \\
uniform 80x80 & 20480398 & 17179410 & 0.0119 & 17393975 & 0.0901 & 17427649 & 0.1092 & 17455112 & 0.1231 & \textbf{17127745} & 8.6041 \\
uniform 100x100 & 50001181 & \textbf{42492213} & 0.0205 & 43134618 & 0.1797 & 43115743 & 0.1755 & 43209207 & 0.2431 & 42521606 & 29.3038 \\
uniform 120x120 & 103680291 & \textbf{88710617} & 0.0334 & 90317432 & 0.2459 & 90450040 & 0.3127 & 90388890 & 0.3208 & 89342939 & 90.1245 \\
uniform 140x140 & 192079012 & \textbf{165656443} & 0.0518 & 168664018 & 0.404 & 168695610 & 0.5922 & 168683177 & 0.5869 & 166927409 & 196.3766 \\
uniform 160x160 & 327679690 & \textbf{284623314} & 0.0768 & 289819325 & 0.939 & 289847112 & 0.9922 & 290034508 & 0.9862 & 287148038 & 339.6329 \\
uniform 180x180 & 524879096 & \textbf{458395075} & 0.1088 & 466419210 & 1.0135 & 466652862 & 1.107 & 466938203 & 1.5316 & 462852252 & 539.6931 \\
normal 20x20 & 79977 & 69989 & 0.0011 & 69032 & 0.0013 & 69322 & 0.0015 & 69899 & 0.0022 & \textbf{67367} & 0.0275 \\
normal 40x40 & 1280007 & 1137550 & 0.0022 & 1137478 & 0.008 & 1139150 & 0.0098 & 1139608 & 0.0116 & \textbf{1123670} & 0.3902 \\
normal 60x60 & 6480142 & 5825775 & 0.0055 & 5847641 & 0.0229 & 5841178 & 0.0277 & 5860741 & 0.0427 & \textbf{5795676} & 2.0257 \\
normal 80x80 & 20480028 & 18555962 & 0.0108 & 18696934 & 0.0613 & 18658585 & 0.0772 & 18697475 & 0.102 & \textbf{18544051} & 6.9208 \\
normal 100x100 & 50000062 & 45647505 & 0.02 & 45909621 & 0.1293 & 45925799 & 0.1584 & 45943220 & 0.1958 & \textbf{45643447} & 30.2969 \\
normal 120x120 & 103680643 & \textbf{94952757} & 0.0325 & 95765991 & 0.2465 & 95711199 & 0.2967 & 95757531 & 0.3385 & 95332171 & 80.9744 \\
normal 140x140 & 192079732 & \textbf{176656351} & 0.0507 & 178279212 & 0.4034 & 178238835 & 0.4936 & 178233293 & 0.556 & 177501940 & 179.0639 \\
normal 160x160 & 327681533 & \textbf{302496650} & 0.0738 & 305379404 & 0.746 & 305333912 & 0.696 & 305345983 & 0.823 & 304080792 & 310.9162 \\
normal 180x180 & 524880349 & \textbf{486132477} & 0.1056 & 490345723 & 0.8888 & 490464093 & 1.0742 & 490656416 & 1.3211 & 489077716 & 540.4644 \\
euclidean 20x20 & 95297 & 93756 & 0.0011 & 98864 & 0.0013 & 99027 & 0.0014 & 98104 & 0.0015 & \textbf{85564} & 0.0276 \\
euclidean 40x40 & 1554313 & 1540492 & 0.0024 & 1559829 & 0.0111 & 1546894 & 0.0116 & 1551881 & 0.0123 & \textbf{1430068} & 0.4218 \\
euclidean 60x60 & 8003105 & 7821082 & 0.0063 & 8021089 & 0.0445 & 8014594 & 0.0461 & 7945751 & 0.0489 & \textbf{7331236} & 1.9805 \\
euclidean 80x80 & 24906273 & 24190227 & 0.0129 & 24873255 & 0.0611 & 24799662 & 0.0954 & 24853670 & 0.0805 & \textbf{23145446} & 6.141 \\
euclidean 100x100 & 61053265 & 59345477 & 0.0235 & 60305521 & 0.103 & 59882626 & 0.1285 & 60052837 & 0.1223 & \textbf{56848260} & 31.8484 \\
euclidean 120x120 & 126198999 & 121816738 & 0.0389 & 123601338 & 0.2986 & 123829252 & 0.305 & 124053452 & 0.3252 & \textbf{117754675} & 93.6024 \\
euclidean 140x140 & 230673448 & 221785417 & 0.0617 & 227949036 & 0.4082 & 227508295 & 0.4637 & 227854403 & 0.4979 & \textbf{214876628} & 183.0906 \\
euclidean 160x160 & 404912898 & 390412111 & 0.0897 & 395260253 & 0.8908 & 398388924 & 0.8284 & 396277525 & 1.0551 & \textbf{378608021} & 309.2262 \\
euclidean 180x180 & 635700756 & 607470603 & 0.1289 & 623035384 & 1.1913 & 625456121 & 1.356 & 623393649 & 1.4349 & \textbf{593800828} & 548.8153 \\
\bottomrule
\end{tabular}}
\end{sidewaystable*}
\clearpage
\restoregeometry

As the table shows, for smaller \textit{uniform} and \textit{normal} instances as well as for all \textit{euclidean} instances \textit{Rounding} produced better quality results, however, using substantially longer time. For all other problems \textit{RandomXYGreedy} obtained better results.
To our surprise, the quality of the solution produced by \textit{Greedy} was inferior to that of \textit{RandomXYGreedy}. It can, perhaps, be explained as a consequence of being ``too greedy'' in the beginning, leading to worse overall solution, particularly, taking into consideration the quadratic nature of the objective function.
In the initial steps the choice is made based on the very much incomplete information about solution and the interaction cost of $\mb{x}$ and $\mb{y}$ assignments.
In addition, the running time for \textit{RandomXYGreedy} was significantly lower than that of \textit{Rounding} and other algorithms. Thus, we conclude that \textit{RandomXYGreedy} is our method of choice if a solution to BAP is needed quickly.

As for the \textit{GreedyRandomized} strategy, the higher the size of the candidate list, the worse is the quality of the resulting solution.
On the other hand, larger sizes of the candidate lists provide us with more diversified ways to generate solutions for BAP. That may have advantages if the construction is followed by an improvement approach as generally done in GRASP algorithm.

In Figures \ref{construv} and \ref{construt} we present solution value and running time results of this section for \textit{uniform} instances.

\begin{figure}[h!]
\centering
\includegraphics[width=0.99\textwidth]{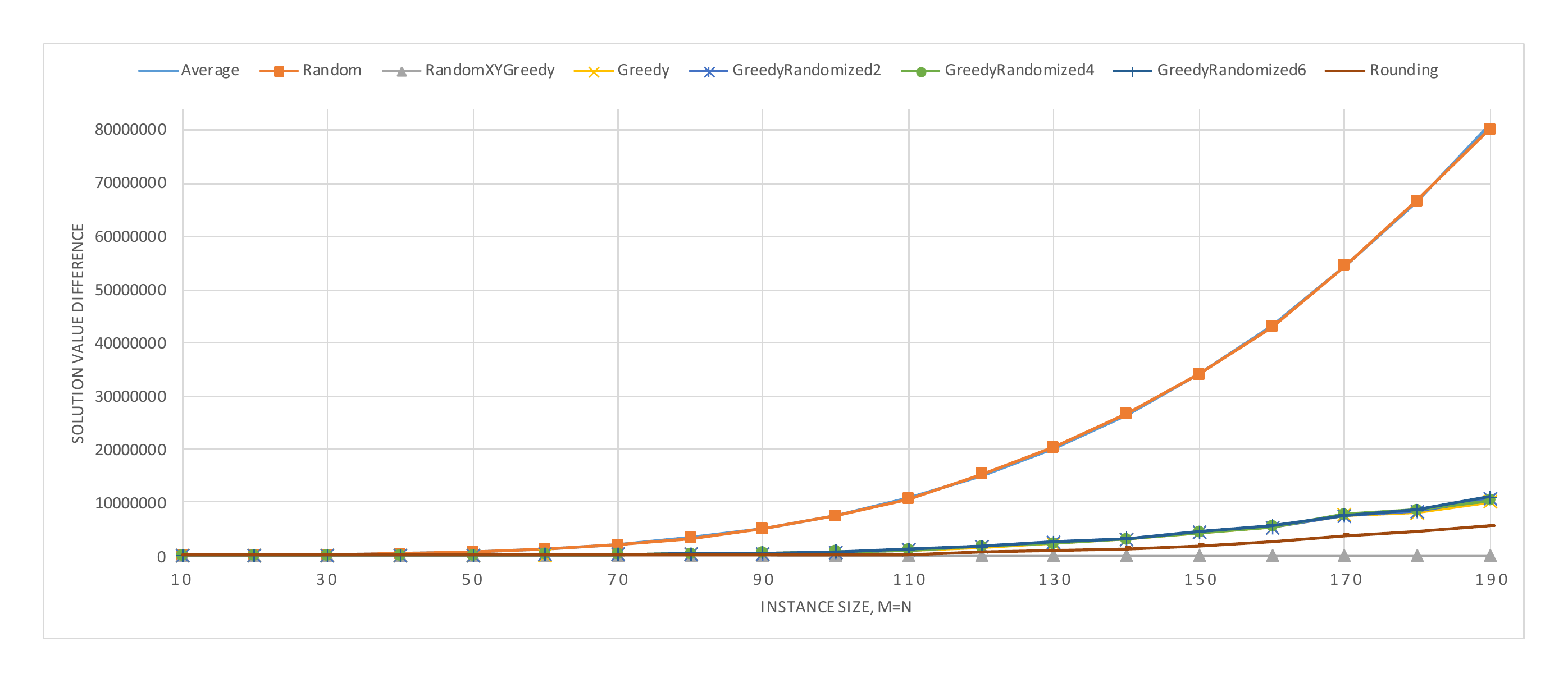}
\caption{Difference between solution values (to the best) for construction heuristics; \textit{uniform} instances}
\label{construv}
\end{figure}

\begin{figure}[h!]
\centering
\includegraphics[width=0.99\textwidth]{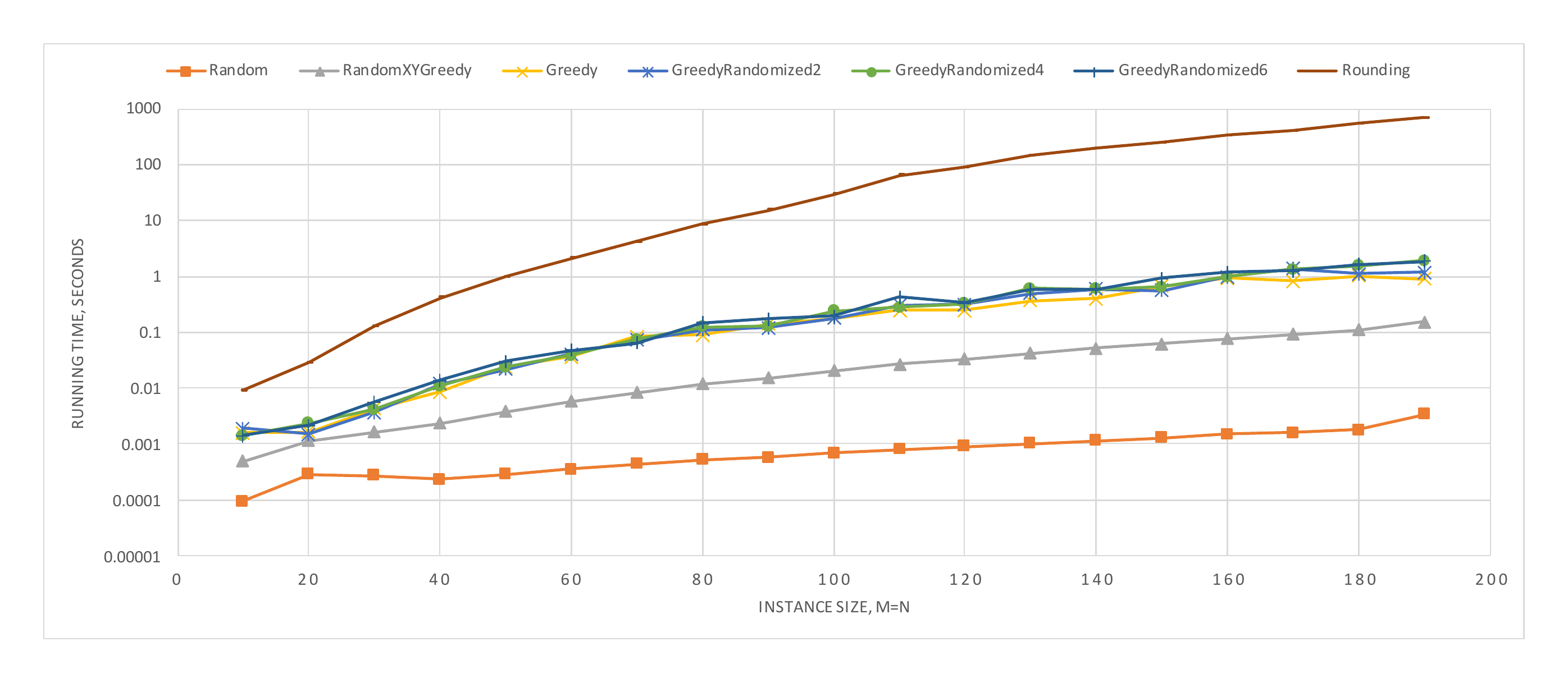}
\caption{Running time for construction heuristics; \textit{uniform} instances}
\label{construt}
\end{figure}

\section{Experimental analysis of local search algorithms}
\label{sec:expls}

Let us now discuss the results of computational experiments carried out using local search algorithms that explore neighborhoods discussed in Section \ref{sec:ls}. All algorithms are started from the same random solution and ran until a local optimum is reached. In addition to the objective function value and running time we report the number of iterations for each approach.

\medskip

For $h$-exchange neighborhoods, we selected $2$ and $3$-exchange local search algorithms (denoted by \textit{\textbf{2ex}} and \textit{\textbf{3ex}}) as well as the Alternating Algorithm (\textit{\textbf{AA}}).

From [$h,p$]-exchange based algorithms, we have implemented $[2,2]$-exchange local search (named \textit{\textbf{Dual2ex}}). The $[2,2]$-exchange neighborhood can be explored in $O(m^2n^2)$ time, using efficient recomputation of the change in the objective value. We refer to the algorithm that explores optimized $2$-exchange neighborhood as \textit{\textbf{2exOpt}}. The running time of each iteration of this local search is $O(m^2 n^3)$. To speed up this potentially slow approach, we have also considered a version, namely \textit{\textbf{2exOptHeuristic}}, where we use an $O(n^2)$ heuristic to solve the underlying linear assignment problem, instead of the Hungarian algorithm with cubic running time. The running time of each iteration of 2exOptHeuristic is then $O(m^2 n^2)$. Similarly defined will be \textit{\textbf{3exOpt}}.

\textit{\textbf{Shift}}, \textit{\textbf{ShiftShuffle}}, \textit{\textbf{DualShift}} and \textit{\textbf{ShiftOpt}} are implementations of local search based on shift, shift+shuffle, dual shift and optimized shift neighborhoods respectively.

In addition, we consider variations of the above-mentioned algorithms, namely \textit{\textbf{2exFirst}}, \textit{\textbf{3exFirst}}, \textit{\textbf{Dual2exFirst}}, \textit{\textbf{2exOptFirst}}, \textit{\textbf{2exOptHeuristicFirst}}, \textit{\textbf{ShiftOptFirst}}, where corresponding neighborhoods explored only until the first improving solution is encountered.

\medskip

We provide a summary of complexity results on these local search algorithms in Table \ref{sum}. Here by $I$ we denote the number of iterations (or ``moves'') that it takes for a corresponding search to converge to a local optimum. As $I$ could potentially be exponential in $n$ and will vary between algorithms, we use this notation to simply emphasize the running time of an iteration of each approach.

\begin{table}[!htb]
\centering
\caption{Asymptotic running time and neighborhood size per iteration for local searches}
\label{sum}
\scalebox{1.0}{\begin{tabular}{@{}ccc@{}}
\toprule
name & running time & neighborhood size per iteration\\
\midrule
2ex & $O(n^3 + I n^2)$ & $\Theta(n^2)$\\
Shift & $O(I n^3)$ & $n$\\
ShiftShuffle & $O(I n^3)$ & $\Theta(n^2)$\\
3ex & $O(I n^3)$ & $\Theta(n^3)$\\
AA & $O(I n^3)$ & $n!$\\
DualShift & $O(n^4)$ & $n^2$\\
Dual2ex & $O(I n^4)$ & $\Theta(n^4)$\\
ShiftOpt & $O(I n^4)$ & $n \cdot n!$\\
2exOptHeuristic & $O(I n^4)$ & $\Theta(n^2 \cdot n!)^*$\\
2exOpt & $O(I n^5)$ & $\Theta(n^2 \cdot n!)$\\
3exOpt & $O(I n^6)$ & $\Theta(n^3 \cdot n!)$\\
\bottomrule
\multicolumn{3}{@{}l@{}}{\scalebox{.8}{* 2exOptHeuristic does not fully explore the neighborhood.}}
\end{tabular}}
\end{table}

\medskip

Table \ref{convt} summarizes experimental results for \textit{2ex}, \textit{3ex}, \textit{AA}, \textit{2exOpt} and \textit{2exOptFirst}. Results for other algorithms are not included in the table due to inferior performance. However, figures \ref{convuv} and \ref{convut} provide additional insight into the performance of all the algorithms we have tested, for the case of \textit{uniform} instances.

\clearpage
\newgeometry{margin=1cm}
\thispagestyle{empty}

\begin{sidewaystable*}[t!]
\centering
\small
\caption{Solution value, running time in seconds and number of iterations for local searches}
\label{convt}\scalebox{0.85}{\begin{tabular}{@{}lrrlcrlcrlcrlcrlc@{}}
\toprule
 & & \multicolumn{3}{c}{2ex} & \multicolumn{3}{c}{3ex} & \multicolumn{3}{c}{AA} & \multicolumn{3}{c}{2exOpt} & \multicolumn{3}{c}{2exOptFirst}\\ \cmidrule(lr){3-5} \cmidrule(lr){6-8} \cmidrule(lr){9-11} \cmidrule(lr){12-14} \cmidrule(lr){15-17}
instances & $\Aa$ & value & time & iter & value & time & iter & value & time & iter & value & time & iter & value & time & iter\\
\midrule
uniform 10x10 & 4995 & 3378 & 0.0 & 9 & 3241 & 0.0 & 9 & 3385 & 0.0 & 3 & \textbf{3103} & 0.04 & 4 & 3128 & 0.02 & 11 \\
uniform 20x20 & 80043 & 59371 & 0.0 & 20 & 56593 & 0.01 & 18 & 56097 & 0.01 & 4 & \textbf{54912} & 0.68 & 6 & 55059 & 0.34 & 25 \\
uniform 30x30 & 404944 & 310455 & 0.02 & 32 & 297569 & 0.05 & 28 & 298787 & 0.02 & 4 & 291520 & 3.96 & 6 & \textbf{291268} & 3.09 & 46 \\
uniform 40x40 & 1279785 & 1003731 & 0.04 & 45 & 977498 & 0.14 & 39 & 971400 & 0.06 & 5 & \textbf{954676} & 21.71 & 10 & 957381 & 8.46 & 56 \\
uniform 50x50 & 3124809 & 2493822 & 0.08 & 57 & 2433665 & 0.32 & 49 & 2416832 & 0.13 & 5 & \textbf{2385232} & 63.49 & 11 & 2389496 & 24.94 & 73 \\
uniform 60x60 & 6479878 & 5256357 & 0.15 & 74 & 5149634 & 0.59 & 55 & 5098653 & 0.26 & 6 & 5056566 & 143.48 & 11 & \textbf{5031368} & 80.32 & 97 \\
uniform 70x70 & 12005619 & 9844646 & 0.24 & 85 & 9682798 & 1.1 & 67 & 9587489 & 0.38 & 6 & \textbf{9469736} & 326.04 & 14 & 9472549 & 156.78 & 114 \\
uniform 80x80 & 20480209 & 17022523 & 0.37 & 96 & 16694088 & 1.81 & 75 & 16519908 & 0.66 & 7 & 16388545 & 504.34 & 12 & \textbf{16355658} & 285.23 & 136 \\
uniform 90x90 & 32803918 & 27479017 & 0.52 & 111 & 26978715 & 2.97 & 88 & 26650508 & 1.08 & 8 & 26563051 & 882.81 & 13 & \textbf{26514860} & 497.74 & 158 \\
uniform 100x100 & 49999078 & 42138227 & 0.74 & 124 & 41363121 & 4.96 & 109 & 41031842 & 1.45 & 8 & 40912367 & 1480.03 & 14 & \textbf{40767754} & 864.39 & 172 \\
uniform 110x110 & 73206906 & 61988038 & 1.06 & 148 & 61179121 & 6.57 & 109 & 60529975 & 1.92 & 7 & 60162728 & 2406.29 & 15 & \textbf{60068824} & 1504.27 & 196 \\
uniform 120x120 & 103679901 & 88602187 & 1.23 & 137 & 87330165 & 8.52 & 109 & 86174642 & 2.61 & 8 & 85872203 & 3865.67 & 18 & \textbf{85670906} & 1917.76 & 201 \\
normal 10x10 & 4999 & 4044 & 0.0 & 10 & 4019 & 0.0 & 9 & 4040 & 0.0 & 2 & 3910 & 0.03 & 4 & \textbf{3862} & 0.02 & 14 \\
normal 20x20 & 79955 & 67321 & 0.0 & 20 & 66520 & 0.01 & 16 & 66179 & 0.01 & 3 & \textbf{64913} & 0.79 & 7 & 65363 & 0.33 & 25 \\
normal 30x30 & 404959 & 348058 & 0.02 & 34 & 342238 & 0.06 & 29 & 343639 & 0.03 & 4 & \textbf{338796} & 4.98 & 8 & 339162 & 2.61 & 45 \\
normal 40x40 & 1279974 & 1119684 & 0.04 & 46 & 1111127 & 0.14 & 33 & 1099106 & 0.07 & 6 & 1089996 & 23.21 & 10 & \textbf{1089752} & 10.61 & 60 \\
normal 50x50 & 3124879 & 2752326 & 0.08 & 63 & 2737137 & 0.34 & 43 & 2711191 & 0.14 & 6 & 2696287 & 65.48 & 11 & \textbf{2696062} & 32.57 & 77 \\
normal 60x60 & 6479794 & 5769522 & 0.16 & 73 & 5707107 & 0.7 & 53 & 5665027 & 0.3 & 7 & 5640412 & 151.84 & 12 & \textbf{5633463} & 81.97 & 99 \\
normal 70x70 & 12004939 & 10738678 & 0.24 & 88 & 10641129 & 1.3 & 65 & 10596245 & 0.42 & 6 & 10544640 & 316.24 & 13 & \textbf{10538513} & 144.42 & 116 \\
normal 80x80 & 20480106 & 18434378 & 0.38 & 103 & 18282395 & 2.35 & 80 & 18173927 & 0.71 & 7 & 18126933 & 537.29 & 12 & \textbf{18095224} & 338.76 & 132 \\
normal 90x90 & 32805972 & 29736595 & 0.51 & 108 & 29408513 & 3.79 & 91 & 29245481 & 0.92 & 6 & 29176212 & 1017.08 & 14 & \textbf{29165974} & 500.62 & 151 \\
normal 100x100 & 49999105 & 45514117 & 0.71 & 122 & 45009249 & 5.69 & 100 & 44798388 & 1.45 & 7 & 44635991 & 1602.09 & 15 & \textbf{44603238} & 940.19 & 176 \\
normal 110x110 & 73205050 & 66768499 & 1.01 & 142 & 66224593 & 8.26 & 110 & 65812495 & 2.69 & 10 & 65716978 & 2218.71 & 13 & \textbf{65539744} & 1632.32 & 193 \\
normal 120x120 & 103681336 & 95001950 & 1.32 & 147 & 94151507 & 11.24 & 116 & 93702171 & 2.16 & 6 & 93322807 & 4645.28 & 20 & \textbf{93248160} & 2130.64 & 215 \\
euclidean 10x10 & 6186 & 5397 & 0.0 & 13 & 5379 & 0.0 & 12 & 5404 & 0.0 & 3 & \textbf{5368} & 0.05 & 4 & 5375 & 0.03 & 16 \\
euclidean 20x20 & 95834 & 82325 & 0.01 & 41 & 82293 & 0.01 & 25 & 82242 & 0.01 & 3 & 82160 & 1.27 & 5 & \textbf{81813} & 1.52 & 49 \\
euclidean 30x30 & 490614 & 419174 & 0.02 & 61 & 418942 & 0.07 & 40 & 419000 & 0.03 & 3 & \textbf{416436} & 9.13 & 5 & 417339 & 18.19 & 98 \\
euclidean 40x40 & 1553544 & 1314659 & 0.07 & 87 & 1312649 & 0.21 & 59 & 1311131 & 0.07 & 3 & \textbf{1309701} & 37.08 & 5 & 1311093 & 90.91 & 156 \\
euclidean 50x50 & 3761359 & 3178424 & 0.14 & 112 & 3173915 & 0.5 & 78 & 3178006 & 0.16 & 4 & \textbf{3167772} & 134.77 & 7 & 3168388 & 314.91 & 211 \\
euclidean 60x60 & 7999029 & 6740779 & 0.26 & 141 & 6720560 & 1.04 & 98 & \textbf{6714400} & 0.23 & 4 & 6714689 & 314.14 & 7 & 6716877 & 1012.93 & 296 \\
euclidean 70x70 & 14909550 & 12533959 & 0.45 & 180 & 12500249 & 1.92 & 117 & 12490034 & 0.42 & 4 & \textbf{12487021} & 674.66 & 7 & 12499281 & 2354.68 & 366 \\
euclidean 80x80 & 25210773 & 21188706 & 0.68 & 200 & 21182227 & 3.2 & 133 & 21160309 & 0.55 & 4 & \textbf{21150070} & 1222.19 & 6 & 21156445 & 5250.01 & 456 \\
euclidean 90x90 & 39495474 & 33083033 & 1.04 & 240 & 33072079 & 4.87 & 145 & 33082326 & 0.98 & 5 & \textbf{33049474} & 2017.96 & 6 & 33089283 & 10482.75 & 556 \\
\bottomrule
\end{tabular}}
\end{sidewaystable*}
\clearpage
\restoregeometry

\begin{figure}[h!]
\centering
\includegraphics[width=0.99\textwidth]{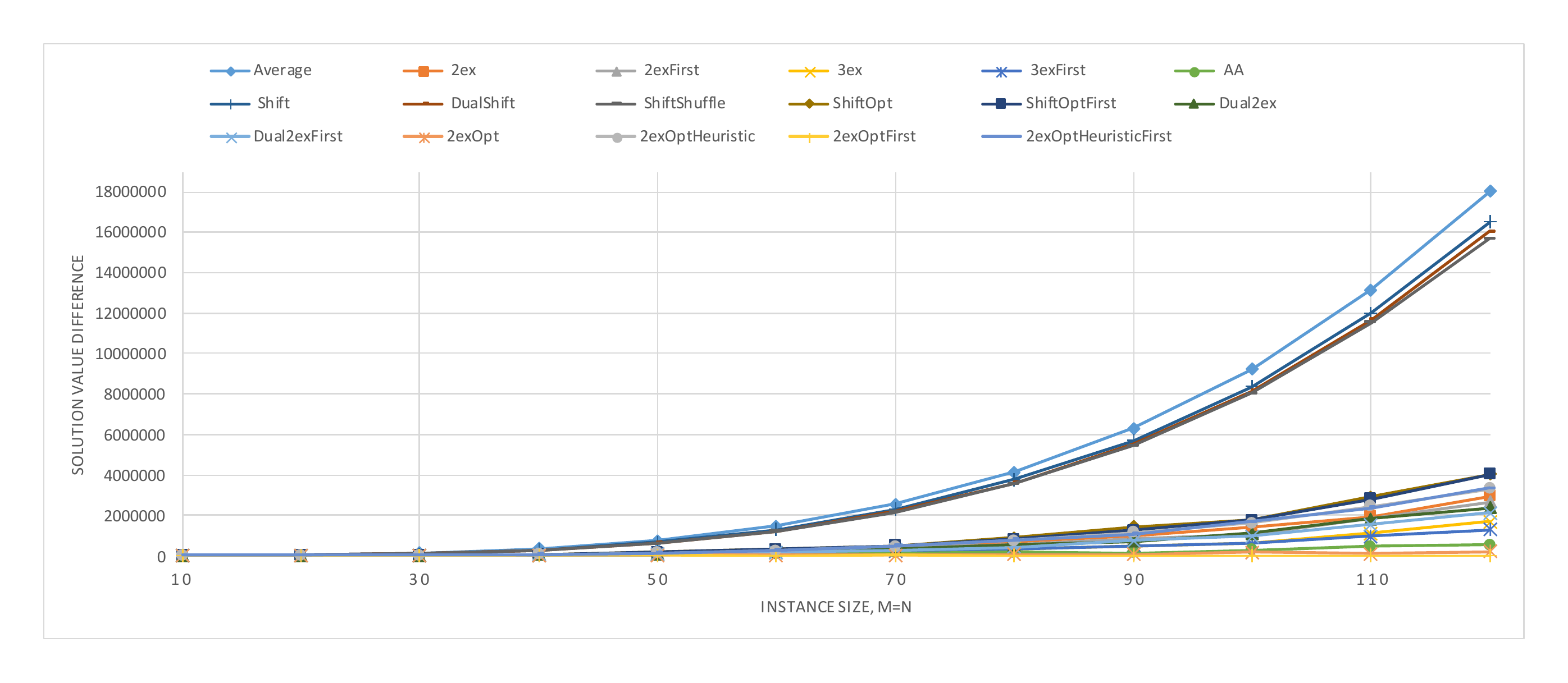}
\caption{Difference between solution values (to the best) for local search; \textit{uniform} instances}
\label{convuv}
\end{figure}

\begin{figure}[h!]
\centering
\includegraphics[width=0.99\textwidth]{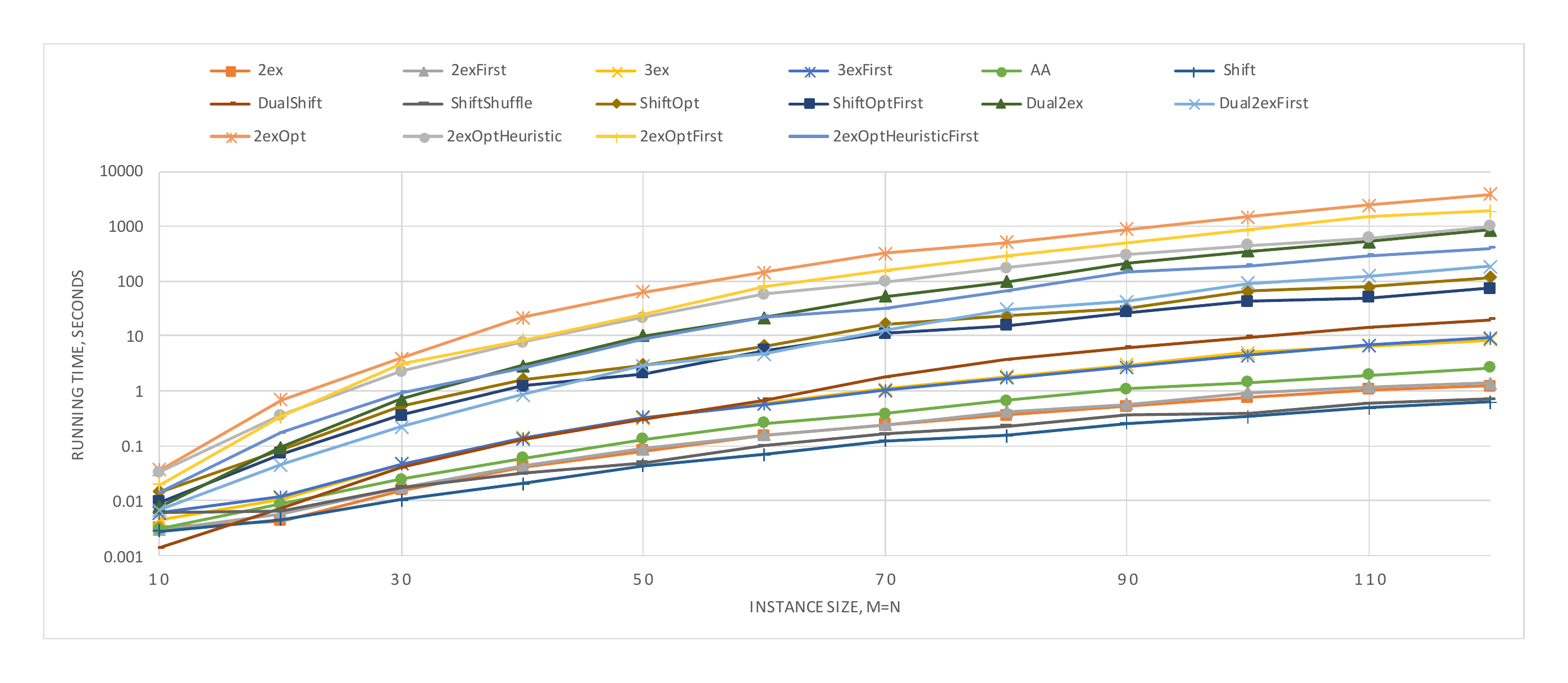}
\caption{Running time to converge for local search; \textit{uniform} instances}
\label{convut}
\end{figure}

Even though the convergence speed is very fast for implementations of \textit{Shift}, \textit{ShiftShuffle} and \textit{DualShift}, the resulting solution values are not significantly better than the average value $\Aa(Q,C,D)$ for the instance.

The \textit{optimized shift} versions, namely \textit{ShiftOpt} and \textit{ShiftOptFirst} produced better solutions but still are outperformed by all remaining heuristics. This fact together with the slower convergence speed (as compared to say \textit{2ex}) shows the weaknesses of the approach.

\textit{Dual2ex} and \textit{Dual2exFirst} are heavily outperformed both in terms of convergence speed as well as the quality of the resulting solution by \textit{AA}.

It is also worth mentioning that speeding up \textit{2exOpt} and \textit{2exOptFirst} by substituting the Hungarian algorithm with an $O(n^2)$ heuristic for the assignment problem did not provide us with good results. The solution quality decreased substantially and, considering that the running time to converge is still slower than that of \textit{AA}, we discard these options.

Table \ref{convt} presents the results for the better performing set of algorithms. The performance of both \textit{first improvement} and \textit{best improvement} approaches \textit{2exFirst}, \textit{3exFirst} and \textit{2ex}, \textit{3ex} respectively are similar so we will consider only the latter two from now on. Interestingly, it is not the case for the \textit{optimized} neighborhoods. We noticed that, for \textit{uniform} and \textit{normal} instances \textit{2exOptFirst} runs faster than \textit{2exOpt}, in most cases. However, for \textit{euclidean} instances \textit{2exOptFirst} takes more time to converge.

As expected, \textit{AA} is better than \textit{3ex} with respect to both solution quality and running time. We will not include any of the $h$-exchange neighborhood search implementations for $h > 3$ in this study due to relatively poor performance and huge running time.

We focused the remaining experiments in the paper on \textit{2ex}, \textit{AA} and \textit{2exOpt}. Among these \textit{2ex} converges the fastest, \textit{2exOpt} provides the best solutions and \textit{AA} assumes a ``balanced'' position. It is also clear that even better solution quality could be achieved by using implementations of optimized $h$-exchange neighborhood search with higher $h$. However, we show in the next sub-section that this is not feasible in terms of efficient metaheuristics implementation.

\subsection{Local search with multi-start}
\label{sec:explsms}

Now we would like to see how well our heuristics perform in terms of solutions quality, when the amount of time is fixed. For this we implemented a simple multi-start strategy for each of the algorithms. The framework will keep restarting the local search from the new \textit{Random} instance until the time limit is reached. The best solution found in the process is then reported as the result.

Time limit for each instance will be set as the following. Considering the results of the previous sub-section, we expect \textit{3exOptFirst} to be the slowest method to converge for all of the instances. We run it exactly once, and use its running time as a time limit for other multi-start algorithms. Together with resulting values we also report the number of restarts of each approach in Table \ref{mst}. Clearly, the choice of time limit yields $1$ as the number of starts for \textit{3exOptFirst}.

\clearpage
\newgeometry{margin=1cm}
\thispagestyle{empty}

\begin{sidewaystable*}[t!]
\centering
\small
\caption{Solution value and number of starts for time-limited multi-start local searches}
\label{mst}\scalebox{0.9}{\begin{tabular}{@{}lcrcrcrcrcrcrc@{}}
\toprule
 & & \multicolumn{2}{c}{3exOptFirst} & \multicolumn{2}{c}{2exOpt} & \multicolumn{2}{c}{2exOptFirst} & \multicolumn{2}{c}{AA} & \multicolumn{2}{c}{2ex} & \multicolumn{2}{c}{2exFirst}\\ \cmidrule(lr){3-4} \cmidrule(lr){5-6} \cmidrule(lr){7-8} \cmidrule(lr){9-10} \cmidrule(lr){11-12} \cmidrule(lr){13-14}
instances & time limit & value & starts & value & starts & value & starts & value & starts & value & starts & value & starts\\
\midrule
uniform 10x10 & 0.1 & 3059 & 1 & 2943 & 3 & 2974 & 5 & 2946 & 97 & \textbf{2934} & 221 & 2980 & 176 \\
uniform 20x20 & 2.7 & 54250 & 1 & 53496 & 4 & 53286 & 8 & \textbf{53096} & 428 & 53983 & 997 & 54244 & 879 \\
uniform 30x30 & 23.4 & 290200 & 1 & 288401 & 5 & 285630 & 10 & \textbf{285271} & 919 & 292991 & 1859 & 292363 & 1695 \\
uniform 40x40 & 103.2 & 948029 & 1 & 943982 & 5 & 940718 & 10 & \textbf{936113} & 1528 & 963120 & 2858 & 960093 & 2679 \\
uniform 50x50 & 531.7 & 2370639 & 1 & 2365473 & 8 & 2358811 & 18 & \textbf{2346865} & 3664 & 2410678 & 6592 & 2401247 & 6337 \\
uniform 60x60 & 1148.5 & 5017422 & 1 & 5003247 & 7 & 4989212 & 16 & \textbf{4980930} & 4221 & 5105064 & 7747 & 5092544 & 7522 \\
uniform 70x70 & 3291.3 & 9429464 & 1 & 9421085 & 10 & 9404126 & 21 & \textbf{9369944} & 7017 & 9601891 & 13499 & 9583585 & 13009 \\
uniform 80x80 & 3763.3 & 16406602 & 1 & 16319588 & 7 & 16241213 & 13 & \textbf{16229861} & 5031 & 16612105 & 10017 & 16583987 & 9578 \\
normal 10x10 & 0.1 & 3857 & 1 & 3838 & 2 & 3851 & 5 & 3828 & 91 & \textbf{3818} & 208 & 3847 & 162 \\
normal 20x20 & 2.5 & 65014 & 1 & 64635 & 4 & 64433 & 7 & \textbf{64020} & 396 & 64867 & 902 & 64738 & 769 \\
normal 30x30 & 23.4 & 337626 & 1 & 336552 & 5 & 335378 & 10 & \textbf{335042} & 899 & 339448 & 1818 & 338849 & 1623 \\
normal 40x40 & 113.3 & 1086083 & 1 & 1082094 & 5 & 1081530 & 12 & \textbf{1078755} & 1675 & 1092923 & 3063 & 1091803 & 2840 \\
normal 50x50 & 469.3 & 2688595 & 1 & 2679334 & 8 & 2677720 & 16 & \textbf{2672481} & 3217 & 2711913 & 5807 & 2704948 & 5475 \\
normal 60x60 & 933.4 & 5640721 & 1 & 5627391 & 6 & 5612362 & 13 & \textbf{5604229} & 3413 & 5679037 & 6216 & 5672749 & 5979 \\
normal 70x70 & 3593.3 & 10512493 & 1 & 10492591 & 12 & 10483432 & 25 & \textbf{10474343} & 7685 & 10604646 & 14559 & 10591133 & 13903 \\
normal 80x80 & 11339.0 & \textbf{17989971} & 1 & 17993643 & 20 & 18010732 & 42 & 17995894 & 15435 & 18226724 & 29827 & 18209532 & 28425 \\
euclidean 10x10 & 0.1 & 5447 & 1 & 5430 & 3 & 5445 & 3 & \textbf{5427} & 98 & 5427 & 266 & 5427 & 162 \\
euclidean 20x20 & 5.1 & 82409 & 1 & 81717 & 4 & 81710 & 4 & \textbf{81573} & 589 & 81573 & 1283 & 81575 & 747 \\
euclidean 30x30 & 70.1 & 418658 & 1 & 415529 & 7 & 415419 & 4 & \textbf{414767} & 2399 & 414774 & 3382 & 414808 & 1732 \\
euclidean 40x40 & 390.3 & 1321385 & 1 & 1317439 & 9 & 1317948 & 4 & \textbf{1316409} & 5459 & 1316509 & 6197 & 1316771 & 3010 \\
euclidean 50x50 & 1675.4 & 3151591 & 1 & 3136628 & 13 & 3139866 & 4 & \textbf{3135362} & 11411 & 3135723 & 11993 & 3136122 & 5359 \\
euclidean 60x60 & 4604.9 & 6563921 & 1 & 6532789 & 15 & 6537657 & 4 & \textbf{6529495} & 17621 & 6530835 & 17448 & 6532247 & 6641 \\
\bottomrule
\end{tabular}}
\end{sidewaystable*}
\clearpage
\restoregeometry

\begin{figure}[h!]
\centering
\includegraphics[width=0.99\textwidth]{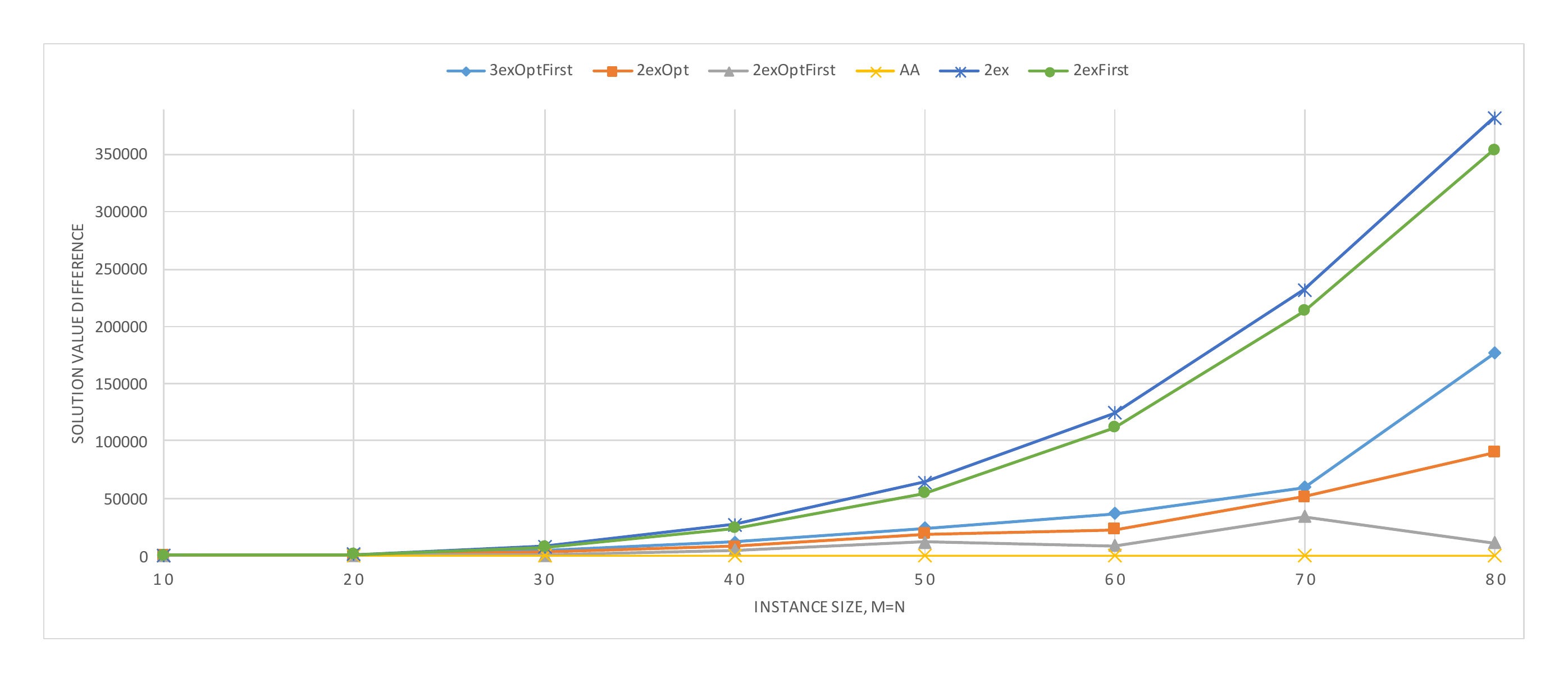}
\caption{Difference between solution values (to the best) for multi-start algorithms; \textit{uniform} instances}
\label{msv}
\end{figure}

The best algorithm in these settings is \textit{AA}, which consistently exhibited better performance for all instance types. The reason behind this is the fact that a local optimum by this approach can be reached almost as fast as by \textit{2ex}, however solution quality is much better. On the other hand, the convergence of \textit{2exOpt} to a local optimum is very time consuming, and perhaps a better strategy is to do more restarts with slightly less quality of resulting solution. Similar argument holds for the case why \textit{2exOptFirst} outperforms \textit{3exOptFirst} in this type of experiments. This observation is in contrast with the results experienced by researches of bipartite unconstrained binary quadratic program \cite{glover2015integrating} and bipartite quadratic assignment problem \cite{punnen2016bipartite}. The difference can be attributed to the more complex structure of BAP in comparison to problems mentioned above.

\section{Variable neighborhood search}
\label{sec:expvnsms}

Variable neighborhood search (VNS) is an algorithmic paradigm to enhance standard local search by making use of properties (often complementary) of multiple neighborhoods \cite{ahuja2007very,hansen2016variable}. The $2$-exchange neighborhood is very fast to explore and optimized $2$-exchange is more powerful but searching through it for an improving solution takes significantly more time. The neighborhood considered in the \textit{Alternating Algorithm} works better when significant asymmetry is present regarding $\mb{x}$ and $\mb{y}$ variables. Motivated by these complementary properties, we have explored VNS based algorithms  to solve BAP.

We start by attempting to improve the convergence speed of \textit{AA} by the means of the faster \textit{2ex}. The first variation, named \textbf{\textit{2ex+AA}} will first apply \textit{2ex} to \textit{Random} starting solution and then apply \textit{AA} to the resulting solution. A more complex approach \textbf{\textit{2exAAStep}} (Algorithm \ref{2exAAStep}) will start by applying \textit{2ex} and as soon as the search converge it will apply a single improvement (step) with respect to \textit{Alternating Algorithm} neighborhood. After successful update the procedure defaults to running \textit{2ex} again. The process stops when no more improvements by \textit{AA} (and consequently by \textit{2ex}) are possible.

\begin{algorithm}
\caption{$2exAAStep$}
\label{2exAAStep}
\begin{algorithmic}[0]\scriptsize
\Input integers $m, n$; $m \times m \times n \times n$ array $Q$; feasible solution $(\mb{x}, \mb{y})$ to given BAP
\Output feasible solution to given BAP

\While{True}

\State $(\mb{x}, \mb{y}) \gets 2ex(m, n, Q, (\mb{x}, \mb{y}))$ \Comment{running $2$-exchange local search (Section \ref{sec:hex})}

\State $e_{ij} \gets \sum_{k, l \in N} q_{ijkl} y_{kl} \, \forall i, j \in M$
\State $\mb{x}^* \gets arg\,min_{\mb{x}' \in \xx} \sum_{i, j \in M} e_{ij} x'_{ij}$ \Comment{solving assignment problem for $\mb{x}$}

\If{$f(\mb{x}^*, \mb{y}) < f(\mb{x}, \mb{y})$ }
\State \textbf{continue} \Comment{restarting the procedure \textbf{while} loop}
\EndIf

\State $g_{kl} \gets \sum_{i, j \in M} q_{ijkl} x^*_{ij} \, \forall k, l \in N$
\State $\mb{y}^* \gets arg\,min_{\mb{y}' \in \yy} \sum_{k, l \in N} g_{kl} y'_{kl}$ \Comment{solving assignment problem for $\mb{y}$}

\If{$f(\mb{x}^*, \mb{y}^*) = f(\mb{x}, \mb{y})$ }
\State \textbf{break} \Comment{algorithm converged, terminate}
\EndIf

\State $\mb{x} \gets \mb{x}^*; \, \mb{y} \gets \mb{y}^*$

\EndWhile
\State \textbf{return} ($\mb{x}$, $\mb{y}$)
\end{algorithmic}
\end{algorithm}

Results in Table \ref{vnsconvt1} follow the structure of experimental results reported earlier in the paper. The number of iterations that we report for \textit{2exAAStep} is the number of times the heuristic switches from $2$-exchange neighborhood to the neighborhood of the \textit{Alternating Algorithm}. Clearly, this number will be $1$ for \textit{2ex+AA} by design.

\begin{table}[!htb]
\centering
\caption{Solution value, running time in seconds and number of iterations for \textit{Alternating Algorithm} and variations (convergence to local optima)}
\label{vnsconvt1}\scalebox{0.8}{\begin{tabular}{@{}lrlrlrlc@{}}
\toprule
 & \multicolumn{2}{c}{AA} & \multicolumn{2}{c}{2ex+AA} & \multicolumn{3}{c}{2exAAStep}\\ \cmidrule(lr){2-3} \cmidrule(lr){4-5} \cmidrule(lr){6-8}
instances & value & time & value & time & value & time & iter\\
\midrule
uniform 10x10 & 3255 & \textbf{0.0} & 3305 & 0.0 & 3322 & 0.01 & 1 \\
uniform 20x20 & 56287 & \textbf{0.01} & 56136 & 0.01 & 56076 & 0.01 & 3 \\
uniform 30x30 & 297819 & \textbf{0.02} & 298485 & 0.03 & 297874 & 0.05 & 4 \\
uniform 40x40 & 965875 & \textbf{0.06} & 967373 & 0.08 & 971010 & 0.13 & 5 \\
uniform 50x50 & 2415720 & \textbf{0.11} & 2414279 & 0.18 & 2419385 & 0.34 & 6 \\
uniform 60x60 & 5077348 & \textbf{0.23} & 5089275 & 0.33 & 5095460 & 0.77 & 9 \\
uniform 70x70 & 9578626 & \textbf{0.32} & 9561747 & 0.51 & 9549687 & 1.25 & 10 \\
uniform 80x80 & 16505833 & \textbf{0.59} & 16422705 & 0.93 & 16474525 & 1.87 & 10 \\
uniform 90x90 & 26650437 & \textbf{0.93} & 26726070 & 1.16 & 26706156 & 3.04 & 11 \\
uniform 100x100 & 41027445 & \textbf{1.12} & 41001387 & 1.89 & 41038180 & 4.78 & 14 \\
uniform 110x110 & 60512662 & \textbf{1.72} & 60549540 & 2.37 & 60508210 & 6.87 & 15 \\
uniform 120x120 & 86397256 & \textbf{2.08} & 86108044 & 3.23 & 86019130 & 10.47 & 18 \\
uniform 130x130 & 119380881 & \textbf{3.02} & 119421396 & 4.06 & 119417016 & 12.52 & 16 \\
uniform 140x140 & 161524589 & \textbf{3.58} & 161725915 & 5.6 & 161535754 & 16.97 & 18 \\
uniform 150x150 & 213377462 & \textbf{5.02} & 214064556 & 6.9 & 213453225 & 22.48 & 19 \\
normal 10x10 & 4037 & \textbf{0.0} & 3997 & 0.0 & 3997 & 0.0 & 2 \\
normal 20x20 & 66006 & \textbf{0.01} & 66372 & 0.01 & 66104 & 0.01 & 3 \\
normal 30x30 & 343319 & \textbf{0.02} & 342316 & 0.03 & 342776 & 0.05 & 3 \\
normal 40x40 & 1096961 & \textbf{0.06} & 1098741 & 0.09 & 1101256 & 0.17 & 7 \\
normal 50x50 & 2712329 & \textbf{0.12} & 2709929 & 0.2 & 2708557 & 0.38 & 8 \\
normal 60x60 & 5668986 & \textbf{0.21} & 5671907 & 0.33 & 5678451 & 0.72 & 8 \\
normal 70x70 & 10561145 & \textbf{0.42} & 10588835 & 0.57 & 10581535 & 1.29 & 10 \\
normal 80x80 & 18172093 & \textbf{0.51} & 18160338 & 0.87 & 18141092 & 2.22 & 12 \\
normal 90x90 & 29222387 & \textbf{0.91} & 29231041 & 1.3 & 29283340 & 2.84 & 10 \\
normal 100x100 & 44751122 & \textbf{1.31} & 44735031 & 1.72 & 44753417 & 5.22 & 15 \\
normal 110x110 & 65809366 & \textbf{1.64} & 65817524 & 2.39 & 65812802 & 6.97 & 15 \\
normal 120x120 & 93529513 & \textbf{2.26} & 93491028 & 3.58 & 93581308 & 8.65 & 14 \\
normal 130x130 & 129150096 & \textbf{3.26} & 129310194 & 4.14 & 129238943 & 12.84 & 17 \\
normal 140x140 & 174245361 & \textbf{3.75} & 174296950 & 5.91 & 174169032 & 20.14 & 21 \\
normal 150x150 & 230484514 & \textbf{4.28} & 230242366 & 7.32 & 230292305 & 24.21 & 21 \\
euclidean 10x10 & 5032 & \textbf{0.0} & 5015 & 0.0 & 5015 & 0.01 & 1 \\
euclidean 20x20 & 81714 & \textbf{0.01} & 81701 & 0.01 & 81701 & 0.01 & 2 \\
euclidean 30x30 & 424425 & \textbf{0.03} & 424261 & 0.04 & 424261 & 0.06 & 3 \\
euclidean 40x40 & 1331726 & \textbf{0.06} & 1330070 & 0.11 & 1330070 & 0.15 & 4 \\
euclidean 50x50 & 3342515 & \textbf{0.13} & 3337157 & 0.24 & 3337157 & 0.35 & 4 \\
euclidean 60x60 & 6637101 & \textbf{0.24} & 6622844 & 0.42 & 6622844 & 0.63 & 5 \\
euclidean 70x70 & 12373648 & \textbf{0.33} & 12345122 & 0.7 & 12345122 & 1.01 & 4 \\
euclidean 80x80 & 21088451 & \textbf{0.55} & 21060424 & 1.01 & 21060424 & 1.34 & 3 \\
euclidean 90x90 & 33842019 & \textbf{0.85} & 33831315 & 1.48 & 33831315 & 2.01 & 4 \\
euclidean 100x100 & 50386904 & \textbf{1.08} & 50351081 & 2.19 & 50350547 & 3.33 & 5 \\
\bottomrule
\end{tabular}}
\end{table}

As all these approaches are guaranteed to be locally optimal with respect to \textit{Alternating Algorithm} neighborhood, we expect the solution values to be similar. This can be seen in the table. A main observation here is that the \textit{2ex} heuristic does not combine well with \textit{AA}. Increased running time for both \textit{2ex+AA} and \textit{2exAAStep} confirms that \textit{AA} is more efficient in searching its much larger neighborhood.

\medskip

We then explored the effect of combining \textit{2exOptFirst} and \textit{AA}.
An algorithm that first runs \textit{AA} once and then applies \textit{2exOptFirst} until convergence will be referred to as \textbf{\textit{AA+2exOptFirst}}.
A more desirable variable neighborhood search based on the discussed heuristics will use the fact that most of the time running \textit{AA} until convergence is faster than even a single update of the solutions during the \textit{2exOptFirst} run. The algorithm \textit{\textbf{AA2exOptFirstStep}} (Algorithm \ref{AA2exOptFirstStep}) will use \textit{AA} to reach its local optimum and then will try to escape it by applying a single first possible improvement of the slower search \textit{2exOptFirst}. If successful, the process will start from the beginning with \textit{AA}. We will also add to the comparison variation with \textit{best improvement rule}, namely \textit{\textbf{AA2exOptStep}}.

\begin{algorithm}
\caption{$AA2exOptFirstStep$}
\label{AA2exOptFirstStep}
\begin{algorithmic}[0]\scriptsize
\Input integers $m, n$; $m \times m \times n \times n$ array $Q$; feasible solution $(\mb{x}, \mb{y})$ to given BAP
\Output feasible solution to given BAP

\While{True}

\State $(\mb{x}, \mb{y}) \gets AA(m, n, Q, (\mb{x}, \mb{y}))$ \Comment{running \textit{Alternating Algorithm} (Section \ref{sec:hex})}

\ForAll{$i_1 \in M$ \textbf{and all} $i_2 \in M \setminus \{i_1\}$}
\State $j_1 \gets$ assigned index to $i_1$ in $\mb{x}$
\State $j_2 \gets$ assigned index to $i_2$ in $\mb{x}$
\medskip

\State $\mb{x}^* \gets \mb{x}$
\State $x^*_{i_1j_1} \gets 0; \, x^*_{i_2j_2} \gets 0; \, x^*_{i_1j_2} \gets 1; \, x^*_{i_2j_1} \gets 1$ \Comment{applying 2-exchange}

\State $g_{kl} \gets \sum_{i, j \in M} q_{ijkl} x^*_{ij} \, \forall k, l \in N$
\State $\mb{y}^* \gets arg\,min_{\mb{y}' \in \yy} \sum_{k, l \in N} g_{kl} y'_{kl}$ \Comment{solving assignment problem for $\mb{y}$}

\If{$f(\mb{x}^*, \mb{y}^*) < f(\mb{x}, \mb{y})$ }
\State $\mb{x} \gets \mb{x}^*; \, \mb{y} \gets \mb{y}^*$
\State \textbf{continue while} \Comment{restarting the procedure \textbf{while} loop}
\EndIf

\EndFor

\ForAll{$k_1 \in N$ \textbf{and all} $k_2 \in N \setminus \{k_1\}$}
\State $l_1 \gets$ assigned index to $k_1$ in $\mb{y}$
\State $l_2 \gets$ assigned index to $k_2$ in $\mb{y}$
\medskip

\State $\mb{y}^* \gets \mb{y}$
\State $y^*_{k_1l_1} \gets 0; \, y^*_{k_2l_2} \gets 0; \, y^*_{k_1l_2} \gets 1; \, y^*_{k_2l_1} \gets 1$ \Comment{applying 2-exchange}

\State $e_{ij} \gets \sum_{k, l \in N} q_{ijkl} y^*_{kl} \, \forall i, j \in M$
\State $\mb{x}^* \gets arg\,min_{\mb{x}' \in \xx} \sum_{i, j \in M} e_{ij} x'_{ij}$ \Comment{solving assignment problem for $\mb{x}$}

\If{$f(\mb{x}^*, \mb{y}^*) < f(\mb{x}, \mb{y})$ }
\State $\mb{x} \gets \mb{x}^*; \, \mb{y} \gets \mb{y}^*$
\State \textbf{continue while} \Comment{restarting the procedure \textbf{while} loop}
\EndIf

\EndFor

\textbf{break} \Comment{algorithm converged, terminate}

\EndWhile

\State \textbf{return} ($\mb{x}$, $\mb{y}$)
\end{algorithmic}
\end{algorithm}

The results of these experiments are reported in Table \ref{vnsconvt2}. Here, we also report the number of iterations for \textit{AA2exOptStep} and \textit{AA2exOptFirstStep}, which represents the number of switches from the \textit{Alternating Algorithm} neighborhood to optimized $2$-exchange neighborhood before the algorithms converge.

\clearpage
\newgeometry{margin=1cm}
\thispagestyle{empty}

\begin{sidewaystable*}[t!]
\centering
\small
\caption{Solution value, running time in seconds and number of iterations for \textit{2exOpt} and variations (convergence to local optima)}
\label{vnsconvt2}\scalebox{0.75}{\begin{tabular}{@{}lrlrlrlcrlc@{}}
\toprule
 & \multicolumn{2}{c}{2ExOptFirst} & \multicolumn{2}{c}{AA+2exOptFirst} & \multicolumn{3}{c}{AA2exOptStep} & \multicolumn{3}{c}{AA2exOptFirstStep}\\ \cmidrule(lr){2-3} \cmidrule(lr){4-5} \cmidrule(lr){6-8} \cmidrule(lr){9-11}
instances & value & time & value & time & value & time & iter & value & time & iter\\
\midrule
uniform 10x10 & 3156 & \textbf{0.02} & 3059 & 0.01 & 3054 & 0.02 & 2 & 3059 & 0.02 & 3 \\
uniform 20x20 & 54670 & \textbf{0.35} & 54877 & 0.24 & 54718 & 0.32 & 3 & 54431 & 0.2 & 4 \\
uniform 30x30 & 291902 & 2.27 & 294044 & \textbf{1.12} & 291184 & 2.64 & 4 & 290011 & 1.17 & 4 \\
uniform 40x40 & 948344 & 9.78 & 958550 & \textbf{3.29} & 953938 & 5.61 & 3 & 958215 & 3.4 & 3 \\
uniform 50x50 & 2379856 & 33.02 & 2399151 & 11.15 & 2392151 & 16.57 & 3 & 2395319 & \textbf{8.08} & 3 \\
uniform 60x60 & 5044883 & 64.73 & 5026000 & 36.08 & 5030618 & 35.22 & 3 & 5026865 & \textbf{20.45} & 4 \\
uniform 70x70 & 9479099 & 168.6 & 9511756 & 67.85 & 9521222 & 78.27 & 3 & 9501548 & \textbf{28.81} & 3 \\
uniform 80x80 & 16418360 & 252.23 & 16400987 & 120.41 & 16390406 & 132.04 & 3 & 16381373 & \textbf{56.49} & 3 \\
uniform 90x90 & 26507000 & 569.45 & 26499481 & 229.48 & 26536687 & 238.11 & 3 & 26546135 & \textbf{113.07} & 4 \\
uniform 100x100 & 40753550 & 878.32 & 40894844 & 293.74 & 40949795 & 184.26 & 1 & 40875653 & \textbf{155.41} & 3 \\
uniform 110x110 & 60079399 & 1539.8 & 60231687 & 458.67 & 60277301 & 487.4 & 3 & 60196421 & \textbf{317.3} & 4 \\
uniform 120x120 & 85818278 & 2120.85 & 85774789 & 1090.37 & 85996522 & 526.83 & 2 & 86070239 & \textbf{306.32} & 2 \\
uniform 130x130 & 118773110 & 3515.46 & 118967905 & 1105.4 & 119034719 & 827.06 & 2 & 119133276 & \textbf{452.11} & 3 \\
uniform 140x140 & 160780185 & 4860.32 & 160956538 & 1304.17 & 161002007 & 1479.93 & 3 & 161113803 & \textbf{764.84} & 3 \\
uniform 150x150 & 213525103 & 5514.74 & 213372569 & \textbf{538.34} & 213372569 & 748.16 & 1 & 213372569 & 553.21 & 1 \\
normal 10x10 & 3866 & \textbf{0.02} & 3895 & 0.01 & 3886 & 0.02 & 2 & 3917 & 0.01 & 2 \\
normal 20x20 & 65262 & \textbf{0.3} & 65137 & 0.28 & 65166 & 0.36 & 3 & 65258 & 0.21 & 4 \\
normal 30x30 & 338569 & 2.9 & 340096 & 1.19 & 340240 & 1.52 & 2 & 340534 & \textbf{0.86} & 3 \\
normal 40x40 & 1087006 & 10.28 & 1087569 & 6.04 & 1090323 & 6.93 & 3 & 1089412 & \textbf{3.46} & 4 \\
normal 50x50 & 2695007 & 26.39 & 2697747 & 14.44 & 2697124 & 19.13 & 3 & 2696860 & \textbf{7.45} & 3 \\
normal 60x60 & 5637608 & 71.64 & 5639469 & 34.18 & 5634802 & 45.69 & 4 & 5638741 & \textbf{18.75} & 3 \\
normal 70x70 & 10538891 & 159.53 & 10527751 & 61.85 & 10524931 & 80.22 & 3 & 10532494 & \textbf{33.81} & 3 \\
normal 80x80 & 18102861 & 292.68 & 18102161 & 145.45 & 18123379 & 148.5 & 4 & 18125319 & \textbf{62.56} & 3 \\
normal 90x90 & 29162243 & 447.82 & 29167487 & 166.29 & 29176575 & 193.61 & 3 & 29167084 & \textbf{102.4} & 3 \\
normal 100x100 & 44610176 & 953.0 & 44644532 & 272.9 & 44626268 & 376.46 & 4 & 44645246 & \textbf{153.74} & 3 \\
normal 110x110 & 65589378 & 1404.23 & 65635027 & 561.99 & 65669769 & 423.52 & 3 & 65646106 & \textbf{233.42} & 3 \\
normal 120x120 & 93315766 & 2071.35 & 93321138 & 697.7 & 93338052 & 692.75 & 3 & 93300933 & \textbf{346.7} & 3 \\
normal 130x130 & 128872342 & 3329.54 & 129005518 & 630.07 & 128978046 & 784.53 & 2 & 129030228 & \textbf{361.45} & 2 \\
normal 140x140 & 173877153 & 4669.47 & 174004558 & 1379.7 & 174104009 & 857.84 & 2 & 174117705 & \textbf{565.83} & 2 \\
normal 150x150 & 229879808 & 6572.92 & 229985798 & 2161.19 & 230286566 & 1481.59 & 3 & 230254077 & \textbf{757.09} & 3 \\
euclidean 10x10 & 4988 & \textbf{0.04} & 4995 & 0.02 & 4992 & 0.02 & 1 & 4996 & 0.02 & 1 \\
euclidean 20x20 & 81833 & 1.46 & 81644 & \textbf{0.33} & 81644 & 0.31 & 1 & 81644 & 0.28 & 1 \\
euclidean 30x30 & 424227 & 17.82 & 424425 & \textbf{1.63} & 424425 & 1.65 & 1 & 424425 & 1.64 & 1 \\
euclidean 40x40 & 1330114 & 84.25 & 1331592 & \textbf{7.63} & 1331592 & 7.28 & 1 & 1331592 & 7.24 & 1 \\
euclidean 50x50 & 3344106 & 347.38 & 3342208 & 22.61 & 3342208 & 20.49 & 1 & 3342208 & \textbf{18.78} & 1 \\
euclidean 60x60 & 6628784 & 968.15 & 6637101 & \textbf{43.81} & 6637101 & 43.91 & 1 & 6637101 & 43.81 & 1 \\
euclidean 70x70 & 12343342 & 2404.75 & 12373648 & \textbf{90.12} & 12373648 & 90.34 & 1 & 12373648 & 90.1 & 1 \\
euclidean 80x80 & 21098260 & 5579.32 & 21088451 & \textbf{174.46} & 21088451 & 174.94 & 1 & 21088451 & 174.98 & 1 \\
euclidean 90x90 & 33892498 & 11440.65 & 33841998 & 333.66 & 33841998 & 338.05 & 1 & 33841998 & \textbf{326.42} & 1 \\
euclidean 100x100 & 50313528 & 19808.73 & 50386904 & \textbf{514.2} & 50386904 & 515.13 & 1 & 50386904 & 514.74 & 1 \\
\bottomrule
\end{tabular}}
\end{sidewaystable*}
\clearpage
\restoregeometry

We have noticed that incorporating \textit{Alternating Algorithm} into \textit{optimized 2-exchange} yields a much better performance, bringing the convergence time down by at least an order of magnitude. Among variations, \textit{AA2exOptFirstStep} is consistently faster for \textit{uniform} and \textit{normal} instances. However, for \textit{euclidean} instances performance of all variable neighborhood search algorithms is similar. In fact, for euclidean instances of all sizes the average number of switches between neighborhoods is $1$, which implies that there is no possible improvement from the optimized $2$-exchange neighborhood after the Alternating Algorithm has converged. Thus, the special structure of instances must be always considered when developing metaheuristics for BAP.

Results on convergence time for all described algorithms from this sub-section, for \textit{uniform} instances, are given in Figure \ref{vnsconvut}.

\begin{figure}[h!]
\centering
\includegraphics[width=0.99\textwidth]{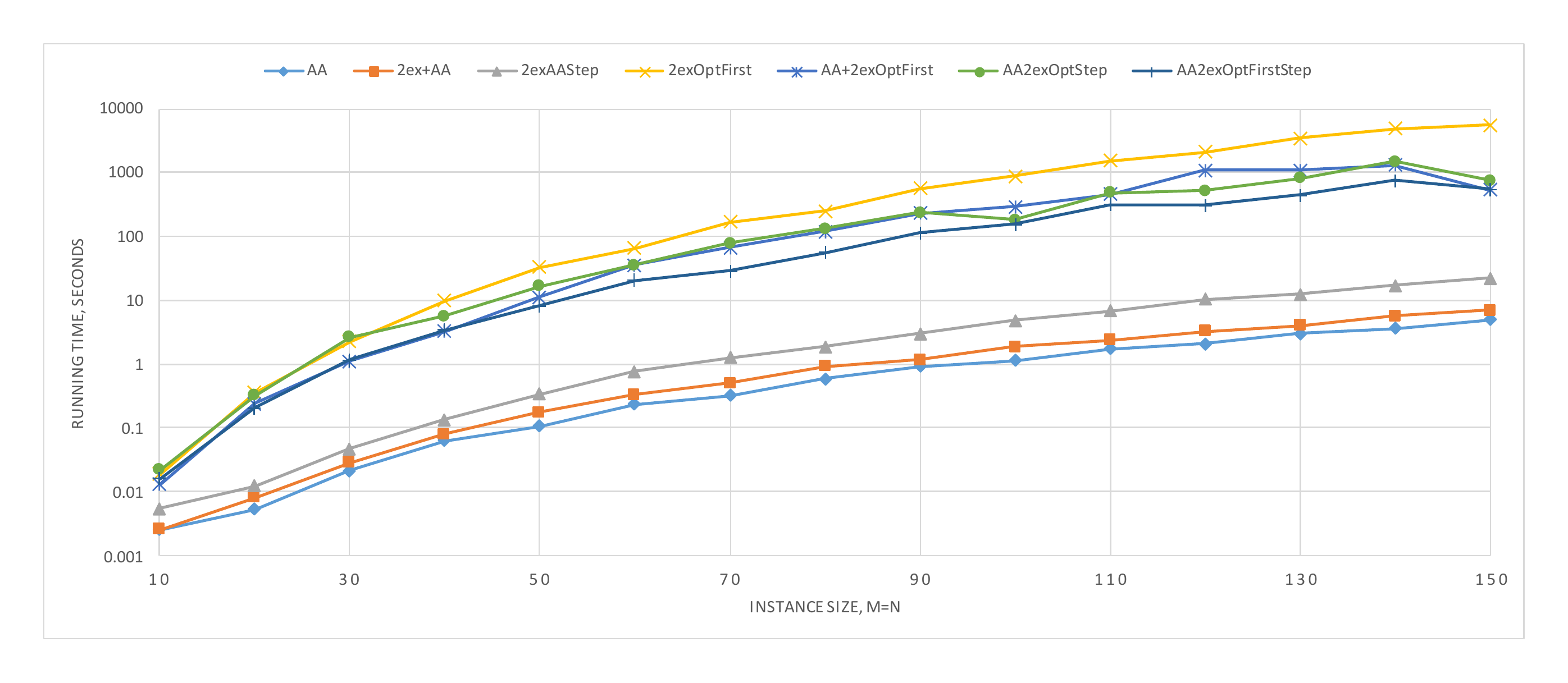}
\caption{Running time to reach the local optima by algorithms; \textit{uniform} instances}
\label{vnsconvut}
\end{figure}

\medskip

Our concluding set of experiments is dedicated to finding the most efficient combination of variable neighborhood search strategies and construction heuristics. We consider a variation of the VNS approach with the best convergence speed performance - \textit{AA2exOptFirstStep}. Namely, let \textit{\textbf{h-AA2exOptFirstStep}} be the algorithm that first generates $h$ starting solution, using \textit{RandomXYGreedy} strategy. It then proceeds to apply \textit{AA} to each of these solutions, selecting the best one and discarding the rest. After that \textit{h-AA2exOptFirstStep} will follow the description of \textit{AA2exOptFirstStep} (Algorithm \ref{AA2exOptFirstStep}) and will alternate between finding an improving solution using optimized $2$-exchange neighborhood and applying \textit{AA}, until the convergence to local optima. In this sense, \textit{AA2exOptFirstStep} and \textit{1-AA2exOptFirstStep} are equivalent implementations.

The single iteration of \textit{AA} requires $O(n^3)$ running time, whereas, a full exploration of the optimized $2$-exchange neighborhood will take $O(m^2n^3)$. From the experiments in Section \ref{sec:expls} we also know that it usually takes \textit{AA} less than 10 iterations to converge. Based on these observations, for the following experimental analysis we have chosen $h$ for \textit{h-AA2exOptFirstStep} as $h \in \{4, 10, 100\}$.

In addition to versions of \textit{h-AA2exOptFirstStep} we consider a simple multi-start \textit{AA} strategy that performed well in previous experiments (see Section \ref{sec:explsms}), denoted \textit{\textbf{msAA}}. Now however, the starting solution each time is generated using \textit{RandomXYGreedy} construction heuristic. As the time limit for this multi-start approach we select the highest convergence time among all \textit{h-AA2exOptFirstStep} variations. As it often happens during the time-limited multi-start procedures, the best solution will be found before the final iteration. Hence, in addition to the total number we also report the average iteration (\textit{best iter}) at which the finally reported solution was found, and the standard deviation of this value.

See the results of these experiments in Table \ref{vnsmst} and Figure \ref{vnsmsuv}.

\clearpage
\newgeometry{margin=1cm}
\thispagestyle{empty}

\begin{sidewaystable*}[t!]
\centering
\small
\caption{Solution value, running time in seconds and number of iterations for Variable Neighborhood Search and multi-start \textit{AA}}
\label{vnsmst}\scalebox{0.8}{\begin{tabular}{@{}lrlcrlcrlcrlcrlccc@{}}
\toprule
 & \multicolumn{3}{c}{AA2ExOptFirstStep} & \multicolumn{3}{c}{4AA2ExOptFirstStep} & \multicolumn{3}{c}{10AA2ExOptFirstStep} & \multicolumn{3}{c}{100AA2ExOptFirstStep} & \multicolumn{5}{c}{msAA}\\ \cmidrule(lr){2-4} \cmidrule(lr){5-7} \cmidrule(lr){8-10} \cmidrule(lr){11-13} \cmidrule(lr){14-18}
instances & value & time & iter & value & time & iter & value & time & iter & value & time & iter & value & time & iter & best iter & $\sigma(\text{best iter})$\\
\midrule
uniform 10x10 & 3162 & 0.02 & 3 & 3126 & 0.01 & 1 & 3025 & 0.01 & 1 & \textbf{2983} & 0.06 & 1 & 2995 & 0.07 & 116 & 47 & 41\\
uniform 20x20 & 55131 & 0.15 & 2 & 54601 & 0.17 & 2 & 54294 & 0.19 & 2 & \textbf{53281} & 0.58 & 1 & 53620 & 0.59 & 131 & 54 & 37\\
uniform 30x30 & 293385 & 0.89 & 3 & 292039 & 0.83 & 2 & 289483 & 0.92 & 2 & \textbf{286542} & 2.42 & 1 & 287169 & 2.44 & 130 & 57 & 49\\
uniform 40x40 & 955295 & 3.03 & 3 & 950608 & 2.77 & 3 & 951947 & 2.87 & 2 & 942849 & 6.82 & 1 & \textbf{939052} & 6.85 & 138 & 89 & 32\\
uniform 50x50 & 2380817 & 11.35 & 5 & 2379835 & 11.88 & 4 & 2375551 & 7.42 & 2 & 2370805 & 15.35 & 1 & \textbf{2360529} & 16.56 & 165 & 83 & 52\\
uniform 60x60 & 5038934 & 19.96 & 3 & 5030082 & 15.28 & 2 & 5015756 & 18.16 & 2 & \textbf{4990868} & 35.36 & 2 & 4993774 & 38.42 & 208 & 112 & 35\\
uniform 70x70 & 9479825 & 34.21 & 4 & 9436974 & 43.32 & 3 & 9445502 & 39.85 & 3 & 9413893 & 54.29 & 1 & \textbf{9399736} & 61.76 & 203 & 115 & 67\\
uniform 80x80 & 16389632 & 61.47 & 3 & 16357168 & 55.61 & 2 & 16303348 & 59.12 & 2 & \textbf{16261295} & 95.21 & 1 & 16264848 & 104.0 & 217 & 95 & 54\\
uniform 90x90 & 26505894 & 110.55 & 3 & 26456700 & 94.5 & 3 & 26407075 & 80.08 & 1 & 26356116 & 151.23 & 2 & \textbf{26342919} & 160.45 & 226 & 83 & 64\\
uniform 100x100 & 40782492 & 141.59 & 3 & 40712949 & 180.44 & 3 & 40633567 & 165.63 & 3 & 40540438 & 208.3 & 1 & \textbf{40506423} & 241.3 & 241 & 116 & 96\\
uniform 120x120 & 85825930 & 342.18 & 3 & 85579139 & 274.87 & 2 & 85471530 & 333.39 & 3 & 85335239 & 441.49 & 1 & \textbf{85283242} & 509.31 & 273 & 122 & 71\\
uniform 140x140 & 160605657 & 693.67 & 3 & 160415349 & 555.54 & 3 & 160292924 & 474.41 & 1 & 160035009 & 719.05 & 1 & \textbf{159912990} & 927.88 & 286 & 131 & 124\\
uniform 160x160 & 277129402 & 909.79 & 2 & 276565751 & 918.66 & 2 & 276159588 & 908.23 & 2 & \textbf{275721038} & 1386.9 & 1 & 275725334 & 1657.71 & 302 & 154 & 100\\
normal 10x10 & 3894 & 0.02 & 3 & 3855 & 0.01 & 2 & 3855 & 0.01 & 1 & \textbf{3808} & 0.07 & 1 & 3809 & 0.07 & 117 & 40 & 37\\
normal 20x20 & 65712 & 0.15 & 2 & 65077 & 0.17 & 2 & 64803 & 0.2 & 1 & \textbf{64293} & 0.58 & 1 & 64477 & 0.58 & 130 & 73 & 48\\
normal 30x30 & 338547 & 1.17 & 5 & 337693 & 0.95 & 3 & 338138 & 0.79 & 1 & \textbf{335113} & 2.75 & 2 & 335756 & 2.76 & 145 & 74 & 42\\
normal 40x40 & 1090670 & 2.81 & 3 & 1088357 & 3.1 & 3 & 1085519 & 2.69 & 2 & \textbf{1081375} & 7.56 & 1 & 1082915 & 7.58 & 154 & 81 & 46\\
normal 50x50 & 2696368 & 8.24 & 3 & 2692035 & 8.33 & 2 & 2682121 & 8.66 & 3 & \textbf{2678345} & 17.52 & 2 & 2680271 & 17.58 & 175 & 71 & 55\\
normal 60x60 & 5647247 & 17.06 & 3 & 5633194 & 14.77 & 1 & 5627675 & 17.07 & 2 & \textbf{5616899} & 31.56 & 2 & 5617125 & 32.18 & 173 & 83 & 55\\
normal 70x70 & 10549768 & 26.89 & 1 & 10519922 & 34.7 & 3 & 10509205 & 30.19 & 2 & \textbf{10493809} & 57.37 & 2 & 10494503 & 61.86 & 201 & 104 & 64\\
normal 80x80 & 18095404 & 72.05 & 3 & 18069406 & 59.64 & 2 & 18067347 & 55.46 & 2 & 18032081 & 86.61 & 1 & \textbf{18023497} & 100.11 & 209 & 112 & 62\\
normal 90x90 & 29115217 & 107.77 & 3 & 29103538 & 103.37 & 2 & 29097191 & 95.29 & 2 & 29045978 & 165.73 & 2 & \textbf{29027250} & 187.3 & 264 & 120 & 71\\
normal 100x100 & 44618697 & 130.7 & 2 & 44578918 & 138.0 & 2 & 44556729 & 162.61 & 3 & 44484747 & 245.72 & 3 & \textbf{44482231} & 279.76 & 274 & 172 & 62\\
normal 120x120 & 93293438 & 343.2 & 3 & 93162243 & 313.92 & 2 & 93112300 & 309.4 & 2 & 93023046 & 506.08 & 2 & \textbf{92984865} & 540.0 & 282 & 149 & 93\\
normal 140x140 & 173820624 & 535.5 & 2 & 173653510 & 510.49 & 2 & 173594266 & 481.53 & 1 & 173434718 & 815.2 & 2 & \textbf{173430869} & 900.03 & 279 & 144 & 76\\
normal 160x160 & 298434202 & 967.33 & 2 & 297840806 & 899.65 & 2 & 297816150 & 1030.84 & 2 & 297540220 & 1211.89 & 1 & \textbf{297480023} & 1567.93 & 294 & 126 & 62\\
euclidean 10x10 & 5037 & 0.02 & 1 & \textbf{5026} & 0.02 & 1 & 5027 & 0.02 & 1 & 5026 & 0.11 & 1 & 5026 & 0.11 & 116 & 6 & 7\\
euclidean 20x20 & 82675 & 0.25 & 1 & 82008 & 0.26 & 1 & 81842 & 0.31 & 1 & \textbf{81718} & 1.0 & 1 & 81718 & 1.0 & 129 & 12 & 11\\
euclidean 30x30 & 411014 & 1.78 & 1 & 408739 & 1.72 & 1 & 407379 & 1.91 & 1 & \textbf{406970} & 4.23 & 1 & 406970 & 4.24 & 162 & 32 & 43\\
euclidean 40x40 & 1348302 & 6.68 & 1 & 1342159 & 6.99 & 1 & 1339683 & 7.09 & 1 & 1337792 & 12.69 & 1 & \textbf{1337738} & 12.72 & 204 & 48 & 58\\
euclidean 50x50 & 3231060 & 21.05 & 1 & 3219207 & 20.39 & 1 & 3214867 & 19.94 & 1 & 3210442 & 30.74 & 1 & \textbf{3210280} & 31.97 & 254 & 37 & 36\\
euclidean 60x60 & 6548901 & 44.42 & 1 & 6519075 & 44.82 & 1 & 6515800 & 46.24 & 1 & 6507833 & 65.26 & 1 & \textbf{6507813} & 65.41 & 304 & 32 & 23\\
euclidean 70x70 & 12315235 & 93.93 & 1 & 12283239 & 100.51 & 1 & 12264197 & 96.28 & 1 & 12257619 & 126.03 & 1 & \textbf{12256435} & 128.94 & 388 & 74 & 76\\
euclidean 80x80 & 21240164 & 187.89 & 1 & 21143316 & 183.3 & 1 & 21104571 & 185.35 & 1 & 21096255 & 229.53 & 1 & \textbf{21095365} & 232.0 & 459 & 144 & 132\\
euclidean 90x90 & 33385322 & 335.48 & 1 & 33323860 & 319.99 & 1 & 33296502 & 326.28 & 1 & 33279588 & 388.9 & 1 & \textbf{33277417} & 398.29 & 558 & 81 & 126\\
euclidean 100x100 & 51524424 & 530.7 & 1 & 51382552 & 535.98 & 1 & 51303227 & 538.1 & 1 & 51289100 & 632.49 & 1 & \textbf{51286565} & 633.16 & 597 & 158 & 133\\
euclidean 120x120 & 105192868 & 1291.27 & 1 & 105092433 & 1284.2 & 1 & 105037756 & 1404.01 & 1 & 104969850 & 1456.4 & 1 & \textbf{104965462} & 1556.45 & 908 & 93 & 112\\
\bottomrule
\end{tabular}}
\end{sidewaystable*}
\clearpage
\restoregeometry

\begin{figure}[h!]
\centering
\includegraphics[width=0.99\textwidth]{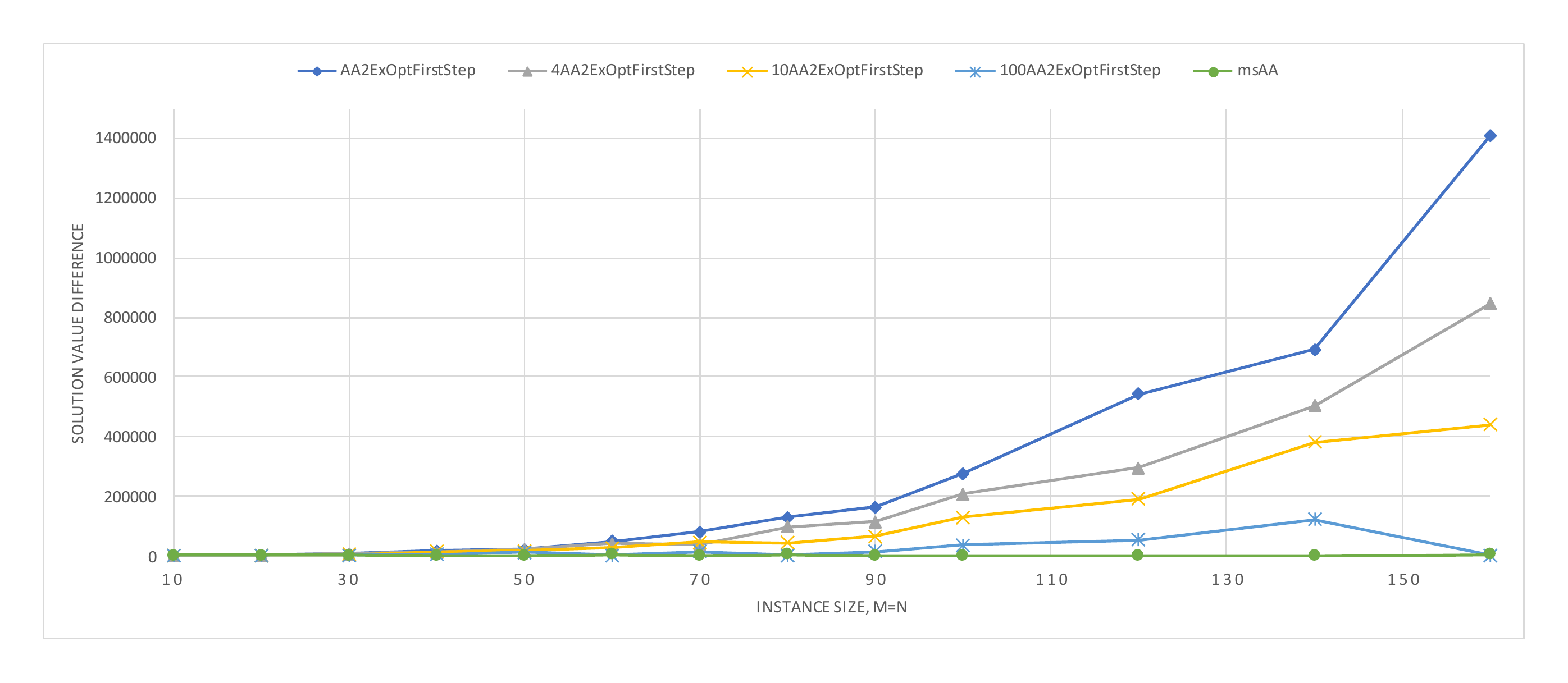}
\caption{Difference between solution values (to the best) for algorithms; \textit{uniform} instances}
\label{vnsmsuv}
\end{figure}

Under this considerations, multi-start \textit{AA} once again performed the best. \textit{h-AA2exOptFirstStep} variations were the more efficient, the higher the number $h$ was. Interestingly, for several instance sizes, the average iteration of finding the best solution by \textit{msAA} is substantially bellow $100$. However, the observed standard deviation is very high, which hints towards the variability of the solutions produced by \textit{AA}. To confirm this, we present in Figures \ref{100AAu}, \ref{100AAn} and \ref{100AAe} the spread of solution values produced by applying \textit{AA} to the solution of \textit{RandomXYGreedy} (denoted as \textit{RandomXYGreedy+AA}). All three instances in these charts are of size $m=n=100$, and we perform $100$ runs of this metaheuristic.

\begin{figure}[h!]
\centering
\includegraphics[width=0.8\textwidth]{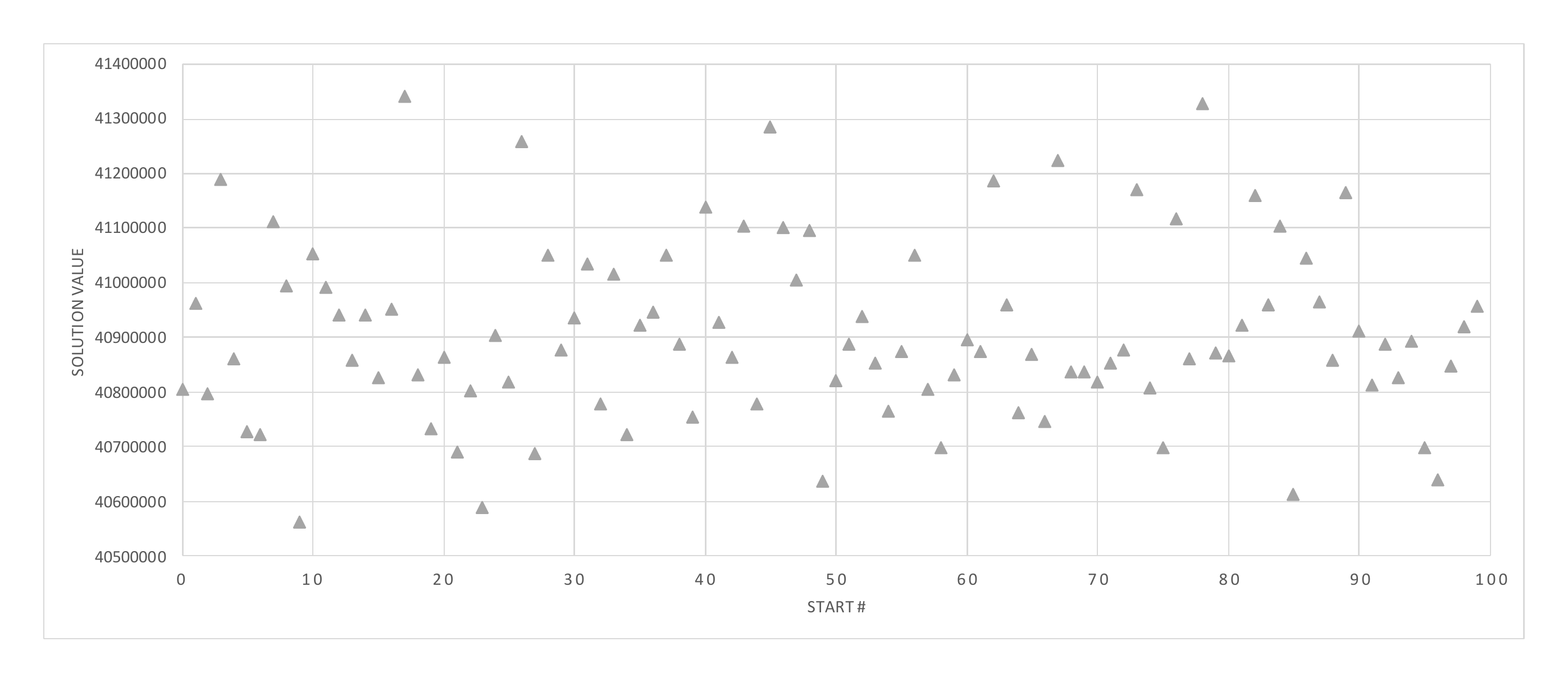}
\caption{Objective solution values for \textit{RandomXYGreedy+AA} metaheuristic; \textit{uniform} $100 \times 100$ instance}
\label{100AAu}
\end{figure}

\begin{figure}[h!]
\centering
\includegraphics[width=0.8\textwidth]{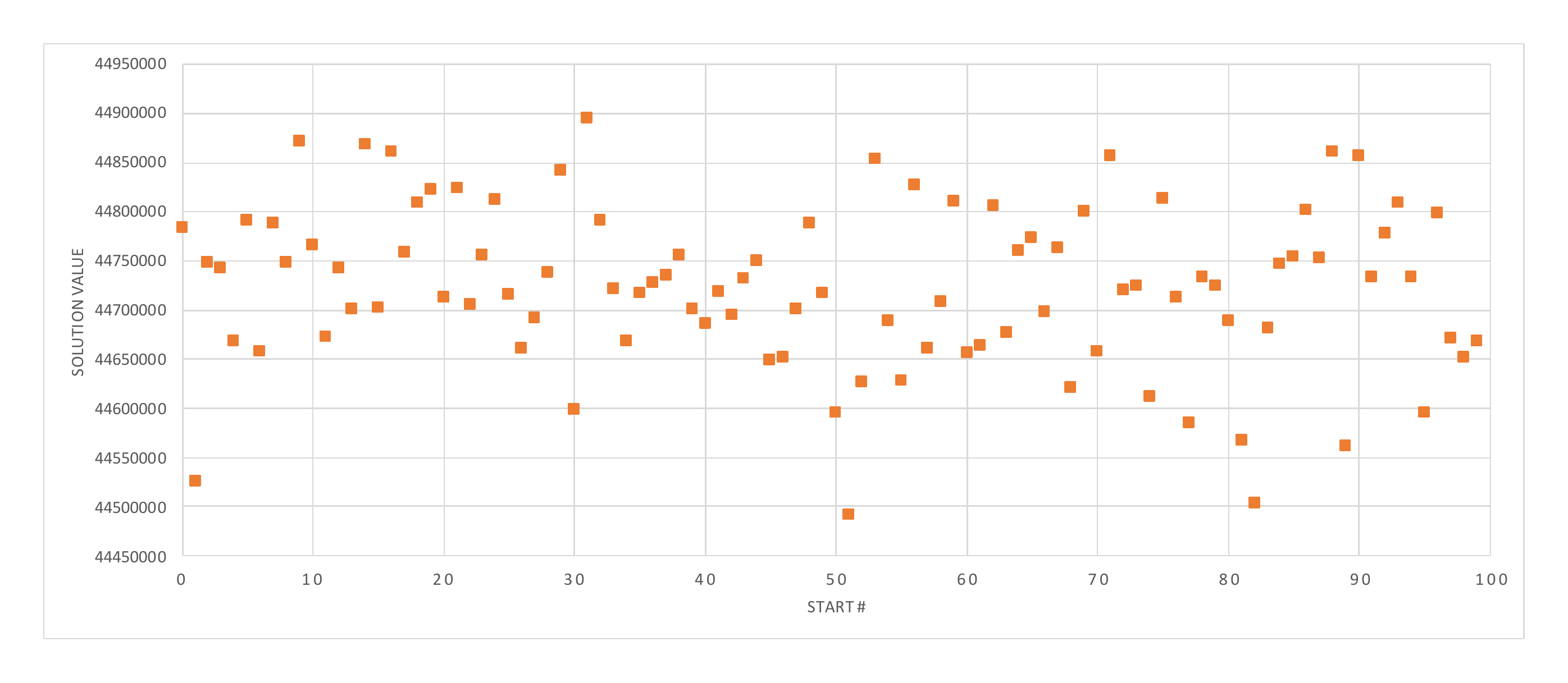}
\caption{Objective solution values for \textit{RandomXYGreedy+AA} metaheuristic; \textit{normal} $100 \times 100$ instance}
\label{100AAn}
\end{figure}

\begin{figure}[h!]
\centering
\includegraphics[width=0.8\textwidth]{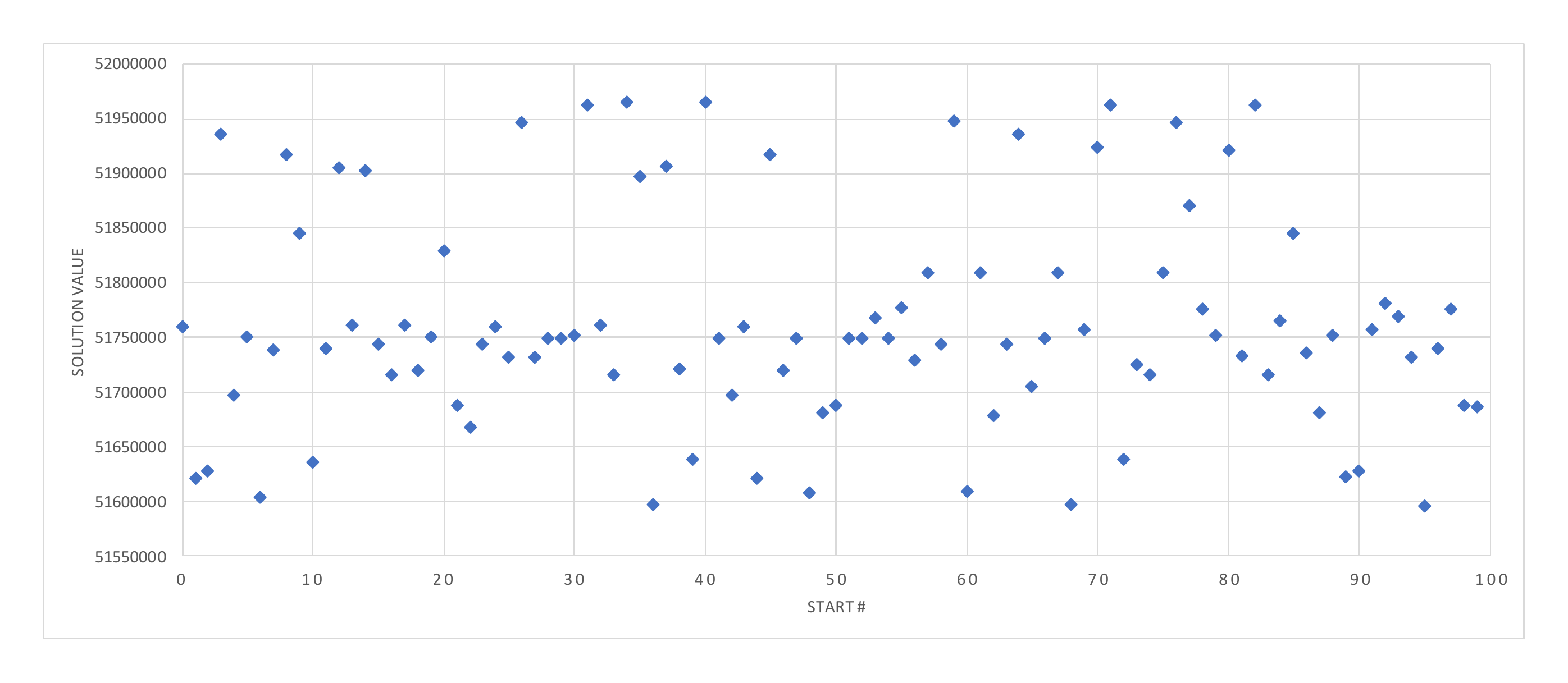}
\caption{Objective solution values for \textit{RandomXYGreedy+AA} metaheuristic; \textit{euclidean} $100 \times 100$ instance}
\label{100AAe}
\end{figure}

At this point, we conclude that optimized $2$-exchange neighborhood is too costly to explore, in comparison to the neighborhood that \textit{AA} is based on. For the general case it is more effective to do several more restarts of \textit{AA} from \textit{RandomXYGreedy} solutions then to spend time escaping local optima with even a single step of \textit{2exOpt}. It is suggested to only use efficient implementations of VNS that explore optimized $2$-exchange neighborhood as the final step of any metaheuristic. In this way you can improve your solution quality without excessive time spending, while leaving all the heavy work for \textit{Alternation Algorithm}.

Our previous experiments that involve multi-start strategies (in this section and Section \ref{sec:explsms}) have reasonable time limit restrictions. This considerations are important when developing algorithms to run on real-life instances. However, we are also interested in behavior of multi-start AA and multi-start VNS in the case of unlimited (or unreasonably large) running time constraints. Figure \ref{ut100u} presents results of running multi-start \textit{AA}, multi-start \textit{1-AA2exOptFirstStep} and multi-start \textit{100-AA2exOptFirstStep}, for a single $100 \times 100$ \textit{uniform} instance, for an exceedingly long period of time. All starts are made from the solutions generated by \textit{RandomXYGreedy} heuristic. Here we report the change of the best found solution value, depending on time.

\begin{figure}[h!]
\centering
\includegraphics[width=0.99\textwidth]{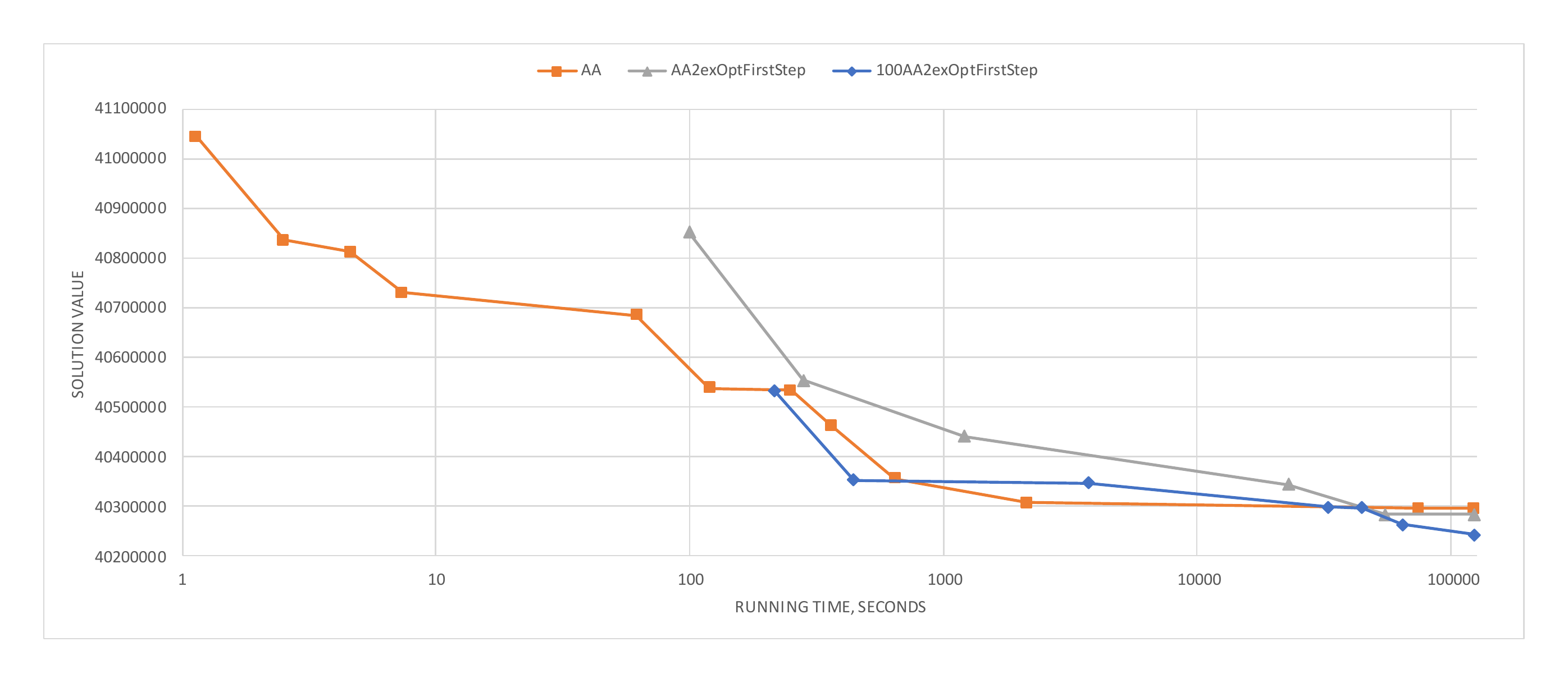}
\caption{Improvement over time of best found objective solution value for multi-start heuristics; \textit{uniform} $100 \times 100$ instance}
\label{ut100u}
\end{figure}

We can see that after 50000 seconds (0.58 days of running) multi-start VNS strategies begin to dominate the multi-start \textit{AA}, even though the later approach is much more efficient in solution space exploration for short running times. This observation is consistent with optimized $h$-exchange being a more powerful neighborhood in terms of solutions quality.

\section{Conclusion}
\label{sec:conclusion}

We have presented the first systematic experimental analysis of heuristics for BAP along with some theoretical results on local search algorithms worst case performance.

\smallskip

Three classes of neighborhoods - $h$-exchange, $[h,p]$-exchange and shift based - are introduced. Some of the neighborhoods are of an exponential size but can be searched for an improving solution in polynomial time. Analysis of local optimums in terms of domination properties and relation to average value $\Aa(Q,C,D)$ are presented.

Several greedy, semi-greedy and rounding construction heuristics are proposed for generating reasonable quality solution quickly. Experimental results show that \textit{RandomXYGreedy} is a good alternative among the approaches. The built-in randomized decision steps make this heuristic valuable for generating starting solutions for improvement algorithms within a multistart framework.

Extensive computational analysis has been carried out on the searches based on described neighborhoods. The experimental results suggest that the very large-scale neighborhood (VLSN) search algorithm - \textit{Alternating Algorithm (AA)}, when used within multi-start framework, yields a more balanced heuristic in terms of running time and solution quality. A variable neighborhood search (VNS) algorithm, that strategically uses optimized $2$-exchange neighborhood and \textit{AA} neighborhood, produced superior outcomes. However, this came with the downside of a significantly larger computational time.

\smallskip

We hope that this study inspires additional research work on the bilinear assignment model, particularly in the area of design and analysis of exact and heuristic algorithms.

\section*{Acknowledgment}

This work was supported by NSERC Discovery Grant and NSERC Accelerator Supplement awarded to Abraham P Punnen as well as NSERC Discovery Grant awarded to Binay Bhattacharya.

\bibliographystyle{plain}
\bibliography{references}

\end{document}